\newcommand{\acli}[1]{\emph{\acl{#1}}}	
\newcommand{\acdef}[1]{\define{\acl{#1}} \textup{(\acs{#1})}\acused{#1}}	
\newcommand{\acdefp}[1]{\define{\aclp{#1}} \textup{(\acsp{#1})}\acused{#1}}	
\colorlet{MyRed}{Crimson!60!DarkRed}
\colorlet{MyBlue}{DodgerBlue!75!black}
\colorlet{MyGreen}{DarkGreen}
\colorlet{MyViolet}{DarkMagenta}
\colorlet{MyLightBlue}{DodgerBlue!20}
\colorlet{MyLightGreen}{MyGreen!20}
\colorlet{PrimalColor}{MyBlue}
\colorlet{PrimalFill}{MyLightBlue}
\colorlet{DualColor}{MyRed}
\colorlet{AlertColor}{MyRed}	
\colorlet{BadColor}{MyRed}	
\colorlet{GoodColor}{MyGreen}	
\colorlet{LinkColor}{MediumBlue}	
\colorlet{MacroColor}{MyViolet}
\colorlet{RevColor}{MediumBlue}	
\setlist[1]{topsep=\smallskipamount,itemsep=\smallskipamount,left=\parindent}
\setlist[2]{left=0pt}
\newcommand{\EMAIL}[1]{\email{\href{mailto:#1}{#1}}}
\crefname{equation}{Eq.}{Eqs.}
\crefname{algo}{Algorithm}{Algorithms}
\crefname{assumption}{Assumption}{Assumptions}
\crefname{case}{Case}{Cases}
\crefname{figure}{Fig.}{Figs.}
\theoremstyle{plain}
\newtheorem{lemma}{Lemma}	
\newtheorem{proposition}{Proposition}	
\newtheorem*{theorem*}{Theorem}	
\newtheorem*{corollary*}{Corollary}	
\theoremstyle{definition}
\newtheorem{definition}{Definition}	
\newtheorem{example}{Example}	
\newtheorem*{definition*}{Definition}	
\newtheorem*{assumption*}{Assumptions}	
\newtheorem*{example*}{Example}	
\theoremstyle{remark}
\newtheorem{remark}{Remark}	
\newtheorem*{remark*}{Remark}	
\newtheorem*{notation*}{Notation}	
\def\endenv{\hfill{\large$\diamond$}}	
\numberwithin{remark}{section}	
\numberwithin{example}{section}	
\newcommand{\draft}[1]{#1}	
\newcommand{\define}[1]{\emph{\draft{#1}}}	
\newcommand{\newmacro}[2]{\newcommand{#1}{\draft{#2}}}	
\newcommand{\newop}[2]{\DeclareMathOperator{#1}{\draft{#2}}}	
\newcommand{\newoplims}[2]{\DeclareMathOperator*{#1}{\draft{#2}}}	
\newcommand{\pd}{\partial}	
\DeclarePairedDelimiter{\braces}{\{}{\}}	
\DeclarePairedDelimiter{\bracks}{[}{]}	
\DeclarePairedDelimiter{\parens}{(}{)}	
\DeclarePairedDelimiter{\of}{(}{)}	
\DeclarePairedDelimiter{\abs}{\lvert}{\rvert}	
\DeclarePairedDelimiter{\setof}{\{}{\}}	
\DeclarePairedDelimiterX{\setdef}[2]{\{}{\}}{#1:#2}	
\DeclarePairedDelimiterXPP{\exclude}[1]{\mathopen{}\setminus}{\{}{\}}{}{#1}
\DeclarePairedDelimiterX{\braket}[2]{\langle}{\rangle}{#1,#2}	
\DeclarePairedDelimiterX{\inner}[2]{\langle}{\rangle}{#1,#2}	
\newcommand{\dualp}[2]{#1^{\top} \cdot #2}	
\DeclarePairedDelimiter{\norm}{\lVert}{\rVert}	
\DeclarePairedDelimiterXPP{\dnorm}[1]{}{\lVert}{\rVert}{_{\ast}}{#1}	
\DeclarePairedDelimiterXPP{\onenorm}[1]{}{\lVert}{\rVert}{_{1}}{#1}	
\DeclarePairedDelimiterXPP{\twonorm}[1]{}{\lVert}{\rVert}{_{2}}{#1}	
\DeclarePairedDelimiterXPP{\supnorm}[1]{}{\lVert}{\rVert}{_{\infty}}{#1}	
\newcommand{\quotes}[1]{``#1''}	
\newcommand{\defeq}{\coloneqq}	
\newcommand{\from}{\colon}	
\newcommand{\map}[5]{
\begin{aligned}
#1 \from \, &#2 \to #3 \\
       & #4 \longmapsto #5
\end{aligned}
}
\newmacro{\nat}{i}	
\newmacro{\natalt}{j}	
\newmacro{\nats}{\mathbb{N}}	
\newmacro{\N}{\nats}	
\newmacro{\integer}{a}	
\newmacro{\intalt}{b}	
\newmacro{\integers}{\mathbb{Z}}	
\newmacro{\Z}{\integers}	
\newmacro{\rational}{r}	
\newmacro{\ratalt}{s}	
\newmacro{\rationals}{\mathbb{Q}}	
\newmacro{\Q}{\rationals}	
\newmacro{\real}{x}	
\newmacro{\realalt}{y}	
\newmacro{\reals}{\mathbb{R}}	
\newmacro{\R}{\reals}	
\newmacro{\complex}{z}	
\newmacro{\complexalt}{w}	
\newmacro{\complexes}{\mathbb{C}}	
\newoplims{\argmax}{arg\,max}	
\newoplims{\argmin}{arg\,min}	
\newoplims{\intersect}{\bigcap}	
\newoplims{\union}{\bigcup}	
\newop{\aff}{aff}	
\newop{\bd}{bd}	
\newop{\bigoh}{\mathcal{O}}	
\newop{\card}{card}	
\newop{\cl}{cl}	
\newop{\conv}{conv}	
\newop{\crit}{crit}	
\newop{\diag}{diag}	
\newop{\diam}{diam}	
\newop{\dist}{dist}	
\newop{\dom}{dom}	
\newop{\eig}{eig}	
\newop{\ess}{ess}	
\newop{\grad}{grad}	
\newop{\Hess}{Hess}	
\newop{\ind}{ind}	
\newop{\im}{im}	
\newop{\intr}{int}	
\newop{\Jac}{Jac}	
\newop{\one}{\mathds{1}}	
\newop{\proj}{pr}	
\newop{\prox}{prox}	
\newop{\rank}{rank}	
\newop{\relint}{ri}	
\newop{\sign}{sgn}	
\newop{\supp}{supp}	
\newop{\Sym}{Sym}	
\newop{\tr}{tr}	
\newop{\unif}{unif}	
\newop{\vol}{vol}	
\newcommand{\cf}{cf.\xspace}	
\newcommand{\eg}{e.g.,\xspace}	
\newcommand{\ie}{i.e.,\xspace}	
\newcommand{\vs}{vs.\xspace}	
\newcommand{\viz}{viz.\xspace}	
\newcommand{\txs}{\textstyle}	
\newcommand{\alt}[1]{#1'}	
\newcommand{\altalt}[1]{#1''}	
\newmacro{\ball}{\mathbb{B}}	
\newmacro{\sphere}{\mathbb{S}}	
\newmacro{\argdot}{\kern1pt\boldsymbol{\cdot}\kern1pt}	
\newmacro{\dd}{\:d}	
\newcommand{\ddt}[1]{\frac{d#1}{dt}}	
\newmacro{\dir}{\kern0pt\pd\mkern-1mu{}}	
\newcommand{\insum}{\sum\nolimits}	
\newmacro{\const}{c}	
\newmacro{\Const}{C}	
\newmacro{\param}{\theta}	
\newmacro{\params}{\Theta}	
\newmacro{\coef}{\lambda}	
\newmacro{\coefalt}{\mu}	
\newmacro{\coefaltalt}{\nu}	
\newmacro{\pexp}{p}	
\newmacro{\qexp}{q}	
\newmacro{\rexp}{r}	
\newmacro{\aposs}{their} 
\newmacro{\vecspace}{\R^{\vdim}}	
\newmacro{\dirspace}{Z}	
\newmacro{\subspace}{\mathcal{W}}	
\newmacro{\coord}{a}	
\newmacro{\coordalt}{b}	
\newmacro{\coordaltalt}{c}	
\newmacro{\nCoords}{n}	
\newmacro{\dims}{\nCoords}	
\newmacro{\vdim}{\nCoords}	
\newmacro{\bvec}{e}	
\newmacro{\uvec}{u}	
\newmacro{\bvecs}{\mathcal{E}}	
\newmacro{\point}{x}	
\newmacro{\pointalt}{\alt\point}	
\newmacro{\pointaltalt}{\altalt\point}	
\newmacro{\points}{\mathcal{X}}	
\newmacro{\intpoints}{\relint\points}	
\newmacro{\base}{p}	
\newmacro{\basealt}{q}	
\newmacro{\basealtalt}{u}	
\newmacro{\set}{\mathcal{S}}	
\newmacro{\setalt}{\alt\set}	
\newmacro{\setaltalt}{\altalt\set}	
\newmacro{\fn}{f}		
\newop{\Span}{Span} 
\newop{\diver}{div}	
\newop{\curl}{curl}	
\let\d\relax   
\newmacro{\d}{d} 
\newop{\cod}{\delta}	
\newmacro{\closed}{\mathcal{C}}	
\newmacro{\cpt}{\mathcal{K}}	
\newmacro{\nhd}{\mathcal{U}}	
\newmacro{\open}{\mathcal{U}}	
\newmacro{\mfld}{\mathcal{M}}	
\newmacro{\surf}{\Sigma}	
\newmacro{\gmat}{g}	
\newmacro{\g}{\gmat}	
\newmacro{\gdist}{\dist_{\gmat}}	
\newmacro{\tvec}{z}	
\newmacro{\nvec}{n}	
\newmacro{\form}{\omega}	
\newmacro{\vfield}{X}	
\newmacro{\vfieldalt}{Y}	
\newmacro{\sub}{S} 
\newmacro{\cvx}{\mathcal{C}}	
\newmacro{\subd}{\partial}	
\newop{\tspace}{T}	
\newop{\tcone}{TC}	
\newop{\dcone}{\tcone^{\ast}}	
\newop{\ncone}{NC}	
\newop{\pcone}{PC}	
\newop{\hull}{\Delta}	
\newop{\Opt}{Opt}	
\newop{\Sol}{Sol}	
\newop{\gap}{Gap}	
\newop{\orcl}{Or}	
\newmacro{\obj}{f}	
\newmacro{\objalt}{g}	
\newmacro{\sobj}{F}	
\newcommand{\sol}[1][\point]{#1^{\ast}}	
\newmacro{\gvec}{g}	
\newmacro{\gbound}{G}	
\newmacro{\vecfield}{F}	
\newmacro{\vbound}{V}	
\newmacro{\lips}{L}	
\newmacro{\strong}{\alpha}	
\newmacro{\smooth}{\beta}	
\newop{\ex}{\mathbb{E}}	
\newop{\prob}{\mathbb{P}}	
\newop{\Var}{\mathbb{V}}	
\newop{\simplex}{\hull}	
\DeclarePairedDelimiterXPP{\exof}[1]{\ex}{[}{]}{}{
 #1}
\DeclarePairedDelimiterXPP{\exwrt}[2]{\ex_{#1}}{[}{]}{}{
 #2}
\DeclarePairedDelimiterXPP{\probof}[1]{\prob}{(}{)}{}{
 #1}
\DeclarePairedDelimiterXPP{\oneof}[1]{\one}{\{}{\}}{}{#1}	
\newmacro{\event}{E}       
\newmacro{\eventalt}{H}       
\newmacro{\seed}{\theta}	
\newmacro{\seeds}{\Theta}	
\newmacro{\pdist}{P}	
\newmacro{\history}{\mathcal{H}}	
\newmacro{\sample}{\omega}	
\newmacro{\samples}{\Omega}	
\newmacro{\filter}{\mathcal{F}}	
\newmacro{\probspace}{(\samples,\filter,\prob)}	
\newmacro{\mean}{\mu}	
\newmacro{\sdev}{\sigma}	
\newmacro{\variance}{\sdev^{2}}	
\newmacro{\beforestart}{0}	
\newmacro{\start}{1}	
\newmacro{\afterstart}{2}	
\newmacro{\running}{\start,\afterstart,\dotsc}	
\newmacro{\run}{n}	
\newmacro{\runalt}{k}	
\newmacro{\runaltalt}{\ell}	
\newmacro{\nRuns}{T}	
\newmacro{\runs}{\mathcal{\nRuns}}	
\newmacro{\tstart}{0}	
\renewcommand{\time}{\draft{t}}	
\newmacro{\timealt}{s}	
\newmacro{\timealtalt}{\tau}	
\newmacro{\horizon}{T}	
\newmacro{\seq}{a}	
\newmacro{\seqalt}{b}	
\newmacro{\seqaltalt}{c}	
\newmacro{\state}{x}	
\newmacro{\statealt}{y}	
\newmacro{\statealtalt}{z}	
\newcommand{\curr}[1][\state]{\draft{#1}_{\run}}	
\newmacro{\mat}{M}	
\newmacro{\hmat}{H}	
\newmacro{\ones}{\mathbf{1}}	
\newmacro{\eye}{I}	
\newmacro{\zer}{\mathbf{0}}	
\newop{\Nash}{NE}	
\newop{\CE}{CE}	
\newop{\CCE}{CCE}	
\newop{\NI}{NI}	
\newop{\brep}{br}	
\newop{\reg}{Reg}	
\newop{\preg}{\overline{Reg}}	
\newop{\val}{val}	
\newmacro{\play}{i}	
\newmacro{\playalt}{j}	
\newmacro{\playaltalt}{k}	
\newmacro{\nPlayers}{N}	
\newmacro{\players}{\mathcal{\nPlayers}}	
\newmacro{\pure}{\alpha}	
\newmacro{\purealt}{\beta}	
\newmacro{\purealtalt}{\gamma}	
\newmacro{\nPures}{A}	
\newmacro{\pures}{\mathcal{\nPures}}	
\newmacro{\strat}{x}	
\newmacro{\stratalt}{\alt\strat}	
\newmacro{\strataltalt}{\altalt\strat}	
\newmacro{\strats}{\mathcal{X}}	
\newmacro{\intstrats}{\strats^{\circ}}
\newmacro{\dimStrats}{\nEffs}	
\DeclarePairedDelimiterXPP{\stratof}[1]{\strat}{(}{)}{}{#1}	
\newcommand{\eq}{\sol[\strat]}	
\newmacro{\pay}{u}	
 \newmacro{\payfield}{v}	
\newmacro{\loss}{\ell}	
\newmacro{\game}{\mathcal{G}}	
\newmacro{\gamefull}{\game(\players,\points,\pay)}	
\newmacro{\fingame}{\Gamma}	
\newmacro{\fingamefull}{\Gamma(\players,\pures,\pay)}	
\newmacro{\mixgame}{\Delta(\fingame)}	
\newmacro{\minmax}{L}	
\newmacro{\minvar}{\point_{1}}	
\newmacro{\minvaralt}{\alt\minvar}	
\newmacro{\minvars}{\points_{1}}	
\newmacro{\maxvar}{\point_{2}}	
\newmacro{\maxvaralt}{\alt\maxvar}	
\newmacro{\maxvars}{\points_{2}}	
\newmacro{\pot}{\phi}	
\newmacro{\potgame}{\fingame_{\!\textup{pot}}}	
\newmacro{\potfield}{P}	
\newmacro{\potrepfield}{\repfield_{\textup{pot}}}	
\newmacro{\harmgame}{\fingame_{\!\textup{harm}}}	
\newmacro{\harmfield}{H}	
\newmacro{\gauge}{\const}	
\newmacro{\nsgame}{\mathrm{NS}}	
\newmacro{\nsfield}{\Const}	
\newmacro{\incgame}{\fingame_{\!\textup{inc}}}	
\newmacro{\incfield}{B}	
\newmacro{\pureA}{\mathtt{A}}	
\newmacro{\pureB}{\mathtt{B}}	
\newmacro{\hreg}{h}	
\newmacro{\breg}{D}	
\newmacro{\mprox}{P}	
\newmacro{\mirror}{Q}	
\newmacro{\fench}{F}	
\newmacro{\hstr}{K}	
\newmacro{\hrange}{H}	
\newmacro{\proxdom}{\points^{\hreg}}	
\DeclarePairedDelimiterXPP{\bregof}[2]{\breg}{(}{)}{}{#1,#2}	
\DeclarePairedDelimiterXPP{\proxof}[2]{\mprox_{#1}}{(}{)}{}{#2}	
\newmacro{\dpoint}{y}	
\newmacro{\dpointalt}{\alt\dpoint}	
\newmacro{\dpointaltalt}{\altalt\dpoint}	
\newmacro{\dpoints}{\mathcal{Y}}	
\newmacro{\dstate}{Y}	
\newmacro{\dvec}{w}	
\newmacro{\zone}{\mathbb{D}}	
\newop{\Eucl}{\Pi}	
\newop{\logit}{LC}	
\newop{\dkl}{KL}	
\newmacro{\flowmap}{\Theta}	
\DeclarePairedDelimiterXPP{\flowof}[2]{\flowmap_{#1}}{(}{)}{}{#2}	
\newmacro{\traj}{x}	
\DeclarePairedDelimiterXPP{\trajof}[1]{x}{(}{)}{}{#1}	
\newmacro{\trajalt}{y}	
\newmacro{\trajaltalt}{z}	
\newcommand{\est}[1]{\hat #1}	
\newmacro{\signal}{\hat\vecfield}	
\newmacro{\step}{\gamma}	
\newmacro{\learn}{\eta}	
\newmacro{\efftime}{\tau}	
\newmacro{\error}{Z}	
\newmacro{\noise}{U}	
\newmacro{\snoise}{\xi}	
\newmacro{\noisepar}{\sdev}	
\newmacro{\noisevar}{\variance}	
\newmacro{\aggnoise}{\mathrm{\uppercase\expandafter{\romannumeral1}}}	
\newmacro{\supnoise}{\aggnoise_{\infty}}	
\newmacro{\maxnoise}{\aggnoise^{\ast}}	
\newmacro{\bias}{b}	
\newmacro{\bbound}{B}	
\newmacro{\sbias}{\chi}	
\newmacro{\aggbias}{\mathrm{\uppercase\expandafter{\romannumeral2}}}	
\newmacro{\supbias}{\aggbias_{\infty}}	
\newmacro{\maxbias}{\aggbias^{\ast}}	
\newmacro{\second}{\psi}	
\newmacro{\sbound}{M}	
\newmacro{\aggsecond}{\mathrm{\uppercase\expandafter{\romannumeral3}}}	
\newmacro{\supsecond}{\aggsecond_{\infty}}	
\newmacro{\maxsecond}{\aggsecond^{\ast}}	
\newmacro{\mix}{\delta}	
\newmacro{\unitvec}{w}	
\newmacro{\unitvar}{W}	
\newmacro{\perturb}{z}	
\newmacro{\purequery}{\est\pure}	
\newmacro{\query}{\est\state}	
\newmacro{\pivot}{\point}	
\newmacro{\querypoint}{\est\point}	
\newmacro{\vertex}{v}	
\newmacro{\vertexalt}{w}	
\newmacro{\vertexaltalt}{u}	
\newmacro{\nVertices}{V}	
\newmacro{\vertices}{\mathcal{V}}	
\newmacro{\edge}{e}	
\newmacro{\edgealt}{\al\edge}	
\newmacro{\edgealtalt}{\altalt\edge}	
\newmacro{\nEdges}{E}	
\newmacro{\edges}{\mathcal{\nEdges}}	
\newmacro{\graph}{\mathcal{G}}	
\newmacro{\graphFull}{\graph(\vertices,\edges)}	
\newcommand{\negspace}{\!\!\!}	
\newcommand{\shah}{Shahshahani\xspace}
\newmacro{\repfield}{\payfield^{\sharp}}	
\newmacro{\RD}{\repfield} 
\newmacro{\pspace}{\mathcal{V}}	
\newmacro{\dspace}{\vecspace^{\ast}}	
\newmacro{\score}{y}	
\newmacro{\scorealt}{\alt\strat}	
\newmacro{\scorealtalt}{\altalt\strat}	
\newmacro{\scores}{\mathcal{Y}}	
\DeclarePairedDelimiterXPP{\scoreof}[1]{\score}{(}{)}{}{#1}	
\newcommand{\clorthant}[1][\vecspace]{\draft{#1}_{+}}	
\newcommand{\orthant}[1][\vecspace]{\draft{#1}_{++}}	
\newmacro{\intsimplex}{\simplex^{\!\circ}}	
\newmacro{\tanplane}{\mathcal{Z}}	
\newmacro{\orthantplay}{\orthant[\R]^{\pures_{\play}}} 
\newmacro{\clorthantplay}{\clorthant[\R]^{\pures_{\play}}} 
\newmacro{\clorthantplayalt}{\clorthant[\R]^{\nEffs_{\play}+1}} 
\newmacro{\orthantplayalt}{\orthant[\R]^{\nEffs_{\play}+1}} 
\newmacro{\clorthantplayeff}{\clorthant[\R]^{\nEffs_{\play}}} 
\newmacro{\orthantplayeff}{\orthant[\R]^{\nEffs_{\play}}} 
\newmacro{\gfield}{G}	
\newmacro{\push}{\pi_{0}}	
\newmacro{\pull}{\iota_{0}}	
\newmacro{\chart}{\push} 
\newmacro{\incl}{\pull} 
\newmacro{\jac}{J} 
\newmacro{\corcube}{\mathcal{C}}	
\newmacro{\intcorcube}{\corcube^{\circ}}	
\newmacro{\effstrat}{\tilde\strat}	
 \newmacro{\effpayfield}{\tilde\payfield}	
\newmacro{\effvecfield}{\tilde\vecfield}	
\newmacro{\effbvec}{\tilde\bvec}	
\newmacro{\effgmat}{\tilde\gmat}	
\newmacro{\effPures}{\tilde\pures}	
\newmacro{\effpay}{\tilde\pay}
\newmacro{\eff}{\mu}	
\newmacro{\effalt}{\nu}	
\newmacro{\effaltalt}{\rho}	
\newmacro{\nEffs}{m}	
\newcommand{\Eff}[1]{\tilde{#1}} 
\newmacro{\increpfield}{\repfield_{\textup{inc}}}	
\newmacro{\borel}{\open}	
\newmacro{\effborel}{\tilde\borel}	
\newmacro{\energy}{E}	
\newcommand{\eqcomma}{\, ,} 
\newcommand{\eqdot}{\, .} 
\newmacro{\tvecalt}{\alt\tvec}
\newmacro{\vfields}{\mathfrak{X}}
\newmacro{\funcs}{\mathfrak{F}}
\newmacro{\flow}{\theta} 
\newmacro{\flowt}{\flow_{\time}} 
\newmacro{\flowp}{\flow^{\point}} 
\newmacro{\curve}{\gamma}
\newmacro{\ipoint}{\point_{0}} 
\newmacro{\iopen}{\open_{0}} 
\newmacro{\itime}{\time_{0}} 
\newmacro{\chartalt}{\pi}
\newmacro{\others}{-\play}
\newmacro{\fingamefullalt}{\alt\fingame(\players,\pures,\alt\pay)}
\DeclareMathOperator{\KL}{KL}
\newmacro{\payns}{k} 
\newmacro{\fingamefullns}{\fingame\parens{\players, \pures, \payns }} 
\newmacro{\effstrats}{\intcorcube} 
\newmacro{\pureplay}{\pure_{\play}} 
\newmacro{\purealtplay}{\purealt_{\play}} 
\newmacro{\effplay}{\eff_{\play}} 
\newmacro{\effaltplay}{\effalt_{\play}} 
\newmacro{\findex}{\play\pure_{\play}} 
\newmacro{\findexalt}{\play\purealt_{\play}} 
\newmacro{\findexothers}{\others\pure_{\others}} 
\newmacro{\rindex}{\play\eff_{\play}} 
\newmacro{\rindexalt}{\play\effalt_{\play}} 
\newmacro{\effindex}{\rindex} 
\newmacro{\effindexalt}{\rindexalt} 
\newmacro{\zindex}{\play0_{\play}} 
\newmacro{\shindex}{\play\pure_{\play}\purealt_{\play}} 
\newmacro{\effshindex}{\play\eff_{\play}\effalt_{\play}} 
\newmacro{\rangeplay}{\setof{0, \dots, \nEffs_{\play}}} 
\newmacro{\effrangeplay}{\setof{1, \dots, \nEffs_{\play}}} 
\newmacro{\effd}{\Eff{d}}
\newmacro{\effpot}{\Eff{\pot}}
\newmacro{\effg}{\effgmat} 
\newmacro{\effrepfield}{\tilde{\payfield}^{\sharp}} 
\newmacro{\dummyB}{B} 
\newcommand{\varEx}[3]{{#1_{#2, #3}}} 
\newcommand{\stratEx}[2]{\varEx{\strat}{#1}{#2}}
\newcommand{\effstratEx}[2]{\varEx{\effstrat}{#1}{#2}}
\newmacro{\potparam}{\gamma} 
\newmacro{\flowD}{\mathcal{D}} 
\newmacro{\me}{\mu} 
\newmacro{\ms}{\Omega} 
\newmacro{\regl}{h}
\newmacro{\ftrlrefs}{
\citep{mertikopoulosRiemannianGameDynamics2018, SS11, shalev-shwartzConvexRepeatedGames2006, ABB04}%
}
\newcommand{\dex}[2]{\pd \strat_{#1 #2_{#1}}}
\newcommand{\dey}[2]{\pd \effstrat_{#1 #2_{#1}}}
\renewcommand{\ll}{\left(}
\newcommand{\rr}{\right)}
\renewcommand{\div}{\diver}
\newmacro{\detg}{D} 
\newmacro{\de}{\pd} 
\newmacro{\coexact}{incompressible} 
\newmacro{\Coexact}{Incompressible} 
\renewcommand{\b}{\mathbf{b}} 
\newcommand{\triple}[3]{ #1_{#2 #3_{#2}} } 
\newcommand{\x}[2]{\strat_{#1 #2_{#1}}} 
\newcommand{\V}[2]{\payfield_{#1 #2_{#1}}} 
\renewcommand{\v}[2]{\effpayfield_{#1 #2_{#1}}} 
\begin{document}


\title
[A Geometric Decomposition of Games]
{A Geometric Decomposition of Finite Games:\\
Convergence Vs. Recurrence under Exponential Weights}

\author
[D.~Legacci]
{Davide Legacci$^{\ast}$}
\address{$^{\ast}$\,%
Univ. Grenoble Alpes, CNRS, Inria, Grenoble INP, LIG, 38000 Grenoble, France.}
\EMAIL{davide.legacci@univ-grenoble-alpes.fr}
\author
[P.~Mertikopoulos]
{Panayotis Mertikopoulos$^{\ast}$}
\EMAIL{panayotis.mertikopoulos@imag.fr}
\author
[B.~Pradelski]
{Bary Pradelski$^{\sharp}$}
\address{$^{\sharp}$\,%
CNRS, Maison Française d'Oxford, 2\textendash10 Norham Road, Oxford, OX2 6SE, United Kingdom.}
\EMAIL{bary.pradelski@cnrs.fr}

\subjclass[2020]{%
Primary 91A10, 91A26;
secondary 68Q32, 68T02.}

\keywords{%
Harmonic games;
incompressible games;
Helmholtz decomposition;
no-regret learning;
replicator dynamics;
\shah metric;
Poincaré recurrence}

\thanks{The authors are grateful to Victor Boone and Marco Scarsini for fruitful discussions.}

\newacro{LHS}{left-hand side}
\newacro{RHS}{right-hand side}
\newacro{iid}[i.i.d.]{independent and identically distributed}
\newacro{lsc}[l.s.c.]{lower semi-continuous}
\newacro{NE}{Nash equilibrium}
\newacroplural{NE}[NE]{Nash equilibria}
\newacro{CE}{correlated equilibrium}
\newacroplural{CE}[CE]{correlated equilibria}
\newacro{CCE}{coarse correlated equilibrium}
\newacroplural{CCE}[CCE]{coarse correlated equilibria}
\newacro{VI}{variational inequality}
\newacroplural{VI}{variational inequalities}

\newacro{ZSG}{zero-sum game}
\newacroplural{ZSG}[ZSGs]{zero-sum games}

\newacro{GAN}{generative adversarial network}

\newacro{PG}{potential game}
\newacro{HG}{harmonic game}

\newacro{EW}{exponential\,/\,multiplicative weights}
\newacro{RD}{replicator dynamics}
\newacro{FTRL}{follow-the-regularized-leader}

\begin{abstract}
%
%
In view of the complexity of the dynamics of learning in games, we seek to decompose a game into simpler components where the dynamics' long-run behavior is well understood.
A natural starting point for this is Helmholtz's theorem, which decomposes a vector field into a potential and an incompressible component.
However, the geometry of game dynamics \textendash\ and, in particular, the dynamics of \ac{EW} schemes \textendash\ is not compatible with the Euclidean underpinnings of Helmholtz's theorem.
This leads us to consider a specific Riemannian framework based on the so-called \emph{\shah metric}, and introduce the class of \define{incompressible games}, for which we establish the following results:
First, in addition to being volume-preserving, the continuous-time \ac{EW} dynamics in incompressible games admit a constant of motion and are \define{Poincaré recurrent} \textendash\ \ie almost every trajectory of play comes arbitrarily close to its starting point infinitely often.
Second, we establish a deep connection with a well-known decomposition of games into a potential and harmonic component (where the players' objectives are aligned and anti-aligned respectively):
a game is incompressible if and only if it is harmonic, implying in turn that the \ac{EW} dynamics lead to Poincaré recurrence in harmonic games.
\end{abstract}
\maketitle

\allowdisplaybreaks	
\acresetall	
\acused{LHS}
\acused{RHS}

\section{Introduction}
\label{sec:introduction}

One of the driving open questions in game-theoretic learning is whether \textendash\ and under what conditions \textendash\ players eventually learn to emulate rational behavior through repeated interactions.
Put differently, whether a game-theoretic learning process converges to a rational outcome,
what type of outcome this could be,
under which mode of convergence,
in which games,
etc.
This question has long been one of the mainstays of non-cooperative game theory, and it has recently received increased attention owing to a surge of breakthrough applications in machine learning and AI, from \acp{GAN}, to multi-agent reinforcement learning and online ad auctions.

Depending on the precise context, this question may admit a wide range of answers, from positive to negative.
Starting with the positive, a folk result states that if the players of a finite game follow a no-regret learning process, the players' empirical frequency of play converges in the long run to the set of \acp{CCE} \textendash\ also known as the game's \define{Hannan set} \cite{Han57}.
This result has been pivotal for the development of the field because no-regret play can be achieved through fairly simple myopic processes like the \acf{EW} update scheme \cite{Vov90,LW94,ACBFS95,ACBFS02} and its many variants \cite{RS13-NIPS,SS11,Sor09}.
On the downside however
\begin{enumerate*}
[\upshape(\itshape a\upshape)]
\item
this convergence result does not concern the actual strategies employed by the players day-to-day;
and
\item
in many games, the notion of a \ac{CCE} can lead to outcomes that fail even the weakest axioms of rationalizability.
\end{enumerate*}
For example, as was shown by \citet{VZ13}, players may enjoy \emph{negative regret} for all time, but still play \emph{only strictly dominated strategies} for the entire horizon of play.

This takes us to the negative end of the spectrum.
If we focus on the evolution of the players' strategies, a series of well-known impossibility results by \citet{HMC03,HMC06} have established that there are no uncoupled learning dynamics \textendash\ deterministic or stochastic, in either continuous or discrete time \textendash\ that converge to \ac{NE} in \emph{all} games from any initial condition.%
\footnote{The adjective ``uncoupled'' refers here to learning processes where a player's update rule does not explicitly depend on the other players' strategies.}
In turn, this lends further weight to the question of determining in which games a learning process converges to \acl{NE} in the day-to-day sense, and in which it does not.

In this regard, the class of games with arguably the strongest convergence guarantees is the class of potential games \citep{MS96}.
Here, the dynamics of \ac{EW} methods are known to converge, in both continuous and discrete time, and even when the players only have bandit, payoff-based information at their disposal \cite{HS98,HCM17}.
By contrast, in two-player, \acp{ZSG} with fully mixed equilibria (like Matching Pennies) 
the standard implementation of the \ac{EW} algorithm diverges, even with perfect, mixed payoff observations \cite{MLZF+19};
the so-called ``optimistic'' variant of \citet{RS13-NIPS} converges at a geometric rate if run with perfect payoff observations \cite{WLZL21} but diverges if such information is not available \cite{HIMM20,HAM21,HACM22};
and, finally, the continuous-time version of the \ac{EW} dynamics \textendash\ the \acli{RD} \textendash\  is \emph{Poincaré recurrent}, \ie the trajectory of play returns infinitely close to where it started, infinitely often \cite{PS14,MPP18}.

Going back to the two classes of games above, potential games are quite special in that the players' incentives are \emph{aligned} (their externalities are positive);
on the other hand, 
in two-player zero-sum games, the players' incentives are \emph{anti-aligned} (externalities are negative).
Largely motivated by this observation, \citet{CMOP11} introduced a principled framework of decomposing a game into a \define{potential} and a \define{harmonic} component:
the potential component of the game captures interactions that amount to a common interest game, while the harmonic component captures the conflicts between the players' interests.%
\footnote{The class of harmonic games contains \emph{some} zero-sum games (like Matching Pennies), but not all;
likewise, the class of zero-sum games contains \emph{some} harmonic games (\eg when all players have the same number of strategies), but not all.
In general, the classes of harmonic and zero-sum games are distinct, and they represent different incarnations of ``anti-aligned objectives''.}
In this way, the decomposition of \citet{CMOP11} effectively maps all games to a spectrum ranging from \emph{fully aligned} (when the harmonic component of the game is zero) to \emph{fully anti-aligned} (when the potential component is zero).

Building on this decomposition, a natural question that arises is whether a similar conclusion can be drawn for the players' \emph{learning dynamics}.
Specifically, focusing for concreteness on continuous time (which eliminates complications related to the players' hyperparameters or feedback structure), a key question is whether the space of games can be likewise mapped to a ``convergence spectrum'', with (global) convergence on one end, and global non-convergence\,/\,Poincaré recurrence on the other.
A version of this question was already treated in a series of follow-up works by
\citet{candoganLearningNearpotentialGames2011,
candoganDynamicsNearPotentialGames2011,
candoganNearPotentialGamesGeometry2013}
who showed that the best-response dynamics remain convergent in slight perturbations of potential games.
However, moving further toward the class of harmonic games hit an important obstacle, and has remained an open question since the original work of \citet{CMOP11}:
except for some special cases,
\emph{the behavior of the replicator dynamics in harmonic games is not well understood.}

\subsection*{Our contributions}

In view of the above, our paper's overarching objective is to derive a dynamics-driven decomposition of games \textendash\ and, in so doing, to shed light on the dynamics of harmonic games.
Motivated by Helmholtz's theorem for the decomposition of vector fields into a potential and an incompressible, divergence-free component, we first seek to define a class of \define{incompressible games} at the opposite end of potential games.
However, the geometry of the dynamics turns out to be incompatible with the standard Euclidean geometry of the simplex, so we are led to consider a nonlinear Riemannian structure on the simplex, the \shah metric \cite{Sha79}.
This ends up complicating the construction significantly, but it allows us to show that the class of incompressible games that we introduce has the characteristic property that the players' learning dynamics are \define{volume-preserving} (\ie a set of initial conditions does not decrease in volume relative to the \shah metric).

As a consequence of this, the class of incompressible games is shown to exhibit two fairly unexpected properties:
\begin{enumerate}
\item \emph{A game is harmonic if and only if it is incompressible,}
and the decomposition of a game into a potential and incompressible component (relative to the \shah metric) is equivalent to that of \citet{CMOP11}.
\item \emph{Incompressible games are conservative}, 
\ie the dynamics admit a constant of motion.
\end{enumerate}
Both properties are surprising, for different reasons.
The first, because harmonic and incompressible games have completely different origins:
the former is coming from the combinatorial decomposition of \citet{CMOP11}, the latter from the kernel of the \shah divergence operator, so there is no reason to expect these notions to coincide.
The second, because volume preservation and constants of motion are two complementary and independent properties, so the fact that the former implies the latter is quite mysterious.%
\footnote{Notably, \citet{FVGL+20} showed that the \ac{EW} dynamics are volume-preserving in \emph{every} game relative to a differnt volume form on the simplex;
however, only very special classes of games admit a constant of motion \textendash\ \cf the discussion following \cref{thm:constant}.}

Building further on the above, we also show that \emph{the \ac{EW} dynamics are Poincaré recurrent in harmonic games.}
By itself, this provides a partial answer to the open-ended question of whether harmonic games should be placed in the non-convergent end of the spectrum \citep{CMOP11}.
Moreover, to the best of our knowledge, Poincaré recurrence in the context of game-theoretic learning has been so far established only in zero-sum games with a fully mixed equilibrium and variations of the above \citep{booneDarwinPoincareNeumann2019,
MPP18,
PS14}.
Seeing as harmonic games are related to zero-sum games (though neither property implies or is implied by the other, \cf \cref{rem:harmonic-vs-zero-sum}), this result identifies an important new class of games where no-regret learning in continuous time fails to converge.

\subsection*{Related work}

Before the general definition of harmonic games by \citet{CMOP11}, specific instances thereof were already studied in the context of cyclic games, the battle of the sexes, buyer/seller games, and crime deterrence games \citep{hofbauerSophisticatedImitationCyclic2000,
smithBattleSexesGenetic1987,
Fri91,
cressmanEvolutionaryDynamicsCrime1998,
candoganDynamicStrategicInteractions2013}.
Building on these early works,
\citet{wangWeightedPotentialGame2017,
liNoteOrthogonalDecomposition2019,
abdouDecompositionGamesStrategic2022}
proposed a weighted versions of the decomposition by \citet{CMOP11} based on different inner products on the space of games.
\citet{chengDecomposedSubspacesFinite2016} proposed in particular a concise derivation of the decomposition of \citet{CMOP11} with applications to (network) evolutionary games and near-potential games.
\citet{hwangStrategicDecompositionsNormal2020} present a projection-based decomposition method, equivalent to that of \citet{CMOP11} for finite games and that applies also to mixed extensions of normal form games with continuous action spaces.

On the interplay between decomposition methods and dynamics, beyond the already mentioned follow-up works by \citet{candoganLearningNearpotentialGames2011,
candoganDynamicsNearPotentialGames2011,
candoganNearPotentialGamesGeometry2013} on near-potential games,
\citet{cheungChaosLearningZerosum2020} applied volume analysis techniques to the canonical decomposition of a game into zero-sum and coordination components \citep{kalaiCooperationCompetitionStrategic2010,basarInformationalPropertiesNash1974} to characterize bimatrix games where standard classes of no-regret learning exhibit Lyapunov chaos.
More recently, \citet{letcherDifferentiableGameMechanics2019} employed a decomposition argument to design a novel algorithm for finding stable fixed points in differentiable games.
The machinery we develop in this work connects the differential-geometric Hodge/Helmholtz decomposition to a constrained setting, thus providing a partial answer to an open question raised in \citet{letcherDifferentiableGameMechanics2019};
however, there is a key difference between the spirit of our approach and that of \citet{letcherDifferentiableGameMechanics2019}, that we discuss in \cref{app:letcher}.

To the best of our knowledge, the only other works in the literature that study the dynamics of harmonic games are the papers by \citet{liFiniteHarmonicGames2016} and \citet{chengDecomposedSubspacesFinite2016}, which discuss a dynamical equivalence between basis games and evolutionary harmonic games.
Except for these works, we are not aware of a similar approach in the literature.

\section{Preliminaries}
\label{sec:prelims}

\subsection{Elements of game theory}

To fix notation, we begin by recalling some basics from game theory, roughly following \citet{FT91}.
First, a \define{finite game in normal form} consists of
a finite set of \define{players} $\play \in \players \equiv \{1,\dotsc,\nPlayers\}$,
each equipped with
\begin{enumerate*}
[(\itshape i\hspace*{1pt}\upshape)]
\item
a finite set of \define{actions} \textendash\ or \define{pure strategies} \textendash\ indexed by $\pure_{\play}\in\pures_{\play} = \{0,1,\dotsc,\nEffs_{\play}\}$ (so $\abs{\pures_{\play}} = \nEffs_{\play}+1$);
and
\item
a \define{payoff function} $\pay_{\play}\from\prod_{\playalt} \pures_{\playalt} \to \R$, which determines the player's reward $\pay_{\play}(\pure)$ at a given \define{action profile} $\pure = (\pure_{1},\dotsc,\pure_{\nPlayers})$.
\end{enumerate*}
Collectively, we will write $\pures = \prod_{\play} \pures_{\play}$ for the game's \define{action space} and $\fingame \equiv \fingamefull$ for the game with primitives as above.

During play, players may randomize their choices by playing \define{mixed strategies}, \ie probability distributions $\strat_{\play} \in \strats_{\play} \defeq \simplex(\pures_{\play})$ over $\pures_{\play}$.
In this case, we will write $\strat_{\play\pure_{\play}}$ for the probability with which player $\play\in\players$ selects $\pure_{\play}\in\pures_{\play}$ under $\strat_{\play}$,
and
we will identify $\pure_{\play} \in \pures_{\play}$ with the mixed strategy that assigns all weight to $\pure_{\play}$ (thus justifying the terminology ``pure strategies'').
Then, writing
$\strat = (\strat_{\play})_{\play\in\players}$ for the players' \define{strategy profile}
and
$\strats = \prod_{\play}\strats_{\play}$ for the game's \define{strategy space},
the players' \define{mixed payoffs} under $\strat\in\strats$ will be
\(
\pay_{\play}(\strat)
	\defeq \exwrt{\pure\sim\strat}{\pay_{\play}(\pure)}
	= \insum_{\pure\in\pures}  \pay_{\play}(\pure) \, \strat_{\pure}
\)
where, in a slight abuse of notation, we write $\strat_{\pure} \equiv \prod_{\play} \strat_{\play\pure_{\play}}$ for the joint probability of playing $\pure\in\pures$ under $\strat$.

For notational convenience, we will also write $(\strat_{\play};\strat_{-\play}) = (\strat_{1},\dotsc,\strat_{\play},\dotsc,\strat_{\nPlayers})$ for the strategy profile where player $\play$ plays $\strat_{\play}\in\strats_{\play}$ against the strategy $\strat_{-\play} \in \strats_{-\play} \defeq \prod_{\playalt\neq\play} \strats_{\playalt}$ of all other players (and likewise for pure strategies).
In this notation, each player's \define{individual payoff field} is defined as
\begin{equation}
\label{eq:payv-mixed}
\payfield_{\play}(\strat)
	= (\pay_{\play}(\pure_{\play};\strat_{-\play}))_{\pure_{\play}\in\pures_{\play}}
\end{equation}
so the mixed payoff of player $\play\in\players$ under $\strat\in\strats$ becomes
\begin{equation}
\label{eq:pay-lin}
\pay_{\play}(\strat)
	= \insum_{\pure_{\play}\in\pures_{\play}}
		\pay_{\play}(\pure_{\play};\strat_{-\play}) \, \strat_{\play\pure_{\play}}
	= \dualp{\payfield_{\play}(\strat)}{\strat_{\play}}.
\end{equation}
In view of the above, the \define{aggregate payoff field} $\payfield(\strat) = (\payfield_{1}(\strat),\dotsc,\payfield_{\nPlayers}(\strat))$ collectively captures all strategic information of the game, so we will use it interchangeably as a more compact description of the game $\fingamefull$.

The most widely used solution concept in game theory is that of a \acdef{NE}, \ie a strategy profile $\eq\in\strats$ which discourages unilateral deviations in the sense that
\begin{equation}
\label{eq:Nash}
\tag{NE}
\pay_{\play}(\eq)
	\geq \pay_{\play}(\strat_{\play};\eq_{-\play})
	\quad
	\text{for all $\strat_{\play}\in\strats_{\play}$, $\play\in\players$}.
\end{equation}
Since a game's equilibria only depend on pairwise payoff comparisons, two games $\fingame(\players,\pures,\pay)$ and $\alt\fingame(\players,\pures,\alt\pay)$ are called \define{strategically equivalent}
\textendash\ and we write $\fingame \sim \alt\fingame$ \textendash\
if, for all $\pure,\purealt\in\pures$ and all $\play\in\players$, we have
\begin{equation}
\label{eq:strat-equiv}
\alt\pay_{\play}(\purealt_{\play};\pure_{-\play}) - \alt\pay_{\play}(\pure_{\play};\pure_{-\play})
	= \pay_{\play}(\purealt_{\play};\pure_{-\play}) - \pay_{\play}(\pure_{\play};\pure_{-\play}).
\end{equation}
Clearly, strategically equivalent games yield identical payoff comparisons per player, so they share the same set of \aclp{NE}.

\subsection{A strategic decomposition of games}

One of the most important classes of normal form games is the class of \acdefp{PG}.
First introduced by \citet{MS96}, \aclp{PG} enjoy several properties of interest \textendash\ existence of equilibria in pure strategies, lack of best-response cycles, convergence of standard learning dynamics and algorithms, etc.
Formally, a finite game $\fingame$ is said to be a \acli{PG} if it admits a \define{potential function} $\pot\from\strats\to\R$ such that
\begin{equation}
\label{eq:pot}
\tag{PG}
\pay_{\play}(\purealt_{\play};\pure_{-\play}) - \pay_{\play}(\pure_{\play};\pure_{-\play})
	= \pot(\purealt_{\play};\pure_{-\play}) - \pot(\pure_{\play};\pure_{-\play})
\end{equation}
for all $\pure,\purealt\in\pures$ and all $\play\in\players$.
Equivalently, in terms of mixed payoffs, this condition can be rewritten in differential form as
\begin{equation}
\label{eq:pot-diff}
\payfield(\strat)^{\top}(\stratalt - \strat)
	= \dir\pot(\strat;\stratalt - \strat)
	\quad
	\text{for all $\strat,\stratalt \in \strats$}
\end{equation}
where
$\pot(\strat) \defeq \insum_{\pure} \pot\of\pure \,  \strat_{\pure}$ denotes the mixed extension of $\pot$ to $\strats$,
and
$\dir\pot(\strat;\tvec) = \lim_{t\to0^{+}} \bracks{\pot(\strat + t\tvec) - \pot(\strat)} / t$ denotes the (one-sided) directional derivative of $\pot$ at $\strat$ along $\tvec$.

\Aclp{PG} capture strategic interactions with ``aligned incentives'' (as in common interest and congestion games).
Dually to this, \citet{CMOP11} introduced the class of \acdefp{HG} as those with ``anti-aligned incentives'', \viz
\begin{equation}
\label{eq:harm}
\tag{HG}
\insum_{\play\in\players} \insum_{\purealt_{\play}\in\pures_{\play}}
	\bracks{\pay_{\play}(\purealt_{\play};\pure_{-\play}) - \pay_{\play}(\pure_{\play};\pure_{-\play})}
	= 0
\end{equation}
for all $\pure\in\pures$, meaning that the net incentive to deviate toward and away from any pure strategy profile is zero.
In contrast to \aclp{PG}, \aclp{HG} generically do not admit pure equilibria and they possess non-terminating best-response paths,
so they can be seen as ``orthogonal'' to potential games.

This observation was made precise by \citet{CMOP11} who showed that
any finite game admits the \define{strategic decomposition}
\begin{equation}
\label{eq:Candogan}
\fingame
	= \potgame + \harmgame
\end{equation}
where $\potgame$ is potential and $\harmgame$ is harmonic.%
\footnote{The notation $\fingame + \alt\fingame$ for two games $\fingame\equiv\fingame(\players,\pures,\pay)$ and $\alt\fingame \equiv \alt\fingame(\players,\pures,\alt\pay)$ denotes the game
with the same player/action structure as $\fingame$ and $\alt\fingame$, and payoff functions $\pay_{\play} + \alt\pay_{\play}$ for all $\play\in\players$.}
This decomposition is achieved by representing $\fingame$ as a weighted \define{preference graph}, endowing said graph with a specific, Euclidean-like structure, and using the combinatorial Helmholtz decomposition theorem \citep{jiangStatisticalRankingCombinatorial2011} to obtain \eqref{eq:Candogan}.
In general, this decomposition is only unique up to strategic equivalence:
more precisely, if $\fingame$ admits the alternative decomposition $\fingame = \alt\potgame + \alt\harmgame$ with $\alt\potgame$ potential and $\alt\harmgame$ harmonic, then $\alt\potgame$ is strategically equivalent to $\potgame$ and $\alt\harmgame$ to $\harmgame$.
We will return to this decomposition later.

\section{Learning via exponential weights}
\label{sec:dynamics}

Throughout our paper, we will focus on dynamic learning proceses where the players seek to myopically improve their individual payoffs over time.
A crucial requirement in this regard is the minimization of the players' \emph{regret}, that is, the difference between a player's cumulative payoff and the player's best strategy in hindsight.
Formally, assuming that play evolves in continuous time, the \define{regret} of a player $\play\in\players$ relative to a sequence of play $\stratof{\time}\in\strats$, $\time\geq0$, is defined as
\begin{equation}
\label{eq:regret}
\reg_{\play}(\horizon)
	= \max_{\base_{\play}\in\strats_{\play}}
		\int_{0}^{\horizon}
			\bracks{\pay_{\play}(\base_{\play};\traj_{-\play}(\time)) - \pay_{\play}(\stratof{\time})}
			\dd\time
\end{equation}
and we say that player $\play$ has \define{no regret} if $\reg_{\play}(\horizon) = o(\horizon)$.

The archetypal method for attaining no regret is the so-called \acdef{EW} update scheme, whereby an action is played with probability that is exponentially proportional to its cumulative payoff.
This simple stimulus-response model goes back to \citet{Vov90}, \citet{LW94} and \citet{ACBFS95}, and, in our setting, it boils down to the dynamics
\begin{equation}
\label{eq:EW}
\tag{EW}
\begin{aligned}
\score_{\play}(\time)
	= \score_{\play}\of{\tstart}
		+ \int_{0}^{\time} \negspace \payfield_{\play}\of{\stratof{\time}} \dd\time
	\qquad
\strat_{\play}(\time)
	= \logit_{\play}(\score_{\play}(\time))
\end{aligned}
\end{equation}
where $\logit_{\play}\from\R^{\pures_{\play}}\to\strats_{\play}$ denotes the \define{logit choice} map
\begin{equation}
\label{eq:logit}
\logit_{\play}(\score_{\play})
	= \frac{(\exp(\score_{\play\pure_{\play}}))_{\pure_{\play}\in\pures_{\play}}}{\sum_{\pure_{\play}\in\pures_{\play}} \exp(\score_{\play\pure_{\play}})}.
\end{equation}
As was first shown by \cite{Sor09,KM17}, the dynamics \eqref{eq:EW} enjoy a \emph{constant}, $\bigoh(1)$ regret bound, namely
\begin{equation}
\reg_{\play}(\horizon)
	\leq \log\abs{\pures_{\play}}.
\end{equation}
Owing to this remarkable regret guarantee, \eqref{eq:EW} and its variants have become the ``gold standard'' for no-regret learning;
for an introduction to the vast corpus of literature surrounding the topic, we refer the reader to \cite{AHK12,SS11,LS20}.

One last important property of \eqref{eq:EW} is that, by a standard calculation, the evolution of the players' mixed strategies $\strat_{\play} \in \strats_{\play}$ under \eqref{eq:EW} follows the \acli{RD} of \citet{TJ78}, \viz
\begin{equation}
\label{eq:RD}
\tag{RD}
\dot\strat_{\play\pure_{\play}}
	= \strat_{\play\pure_{\play}}
		\bracks{\pay_{\play}(\pure_{\play};\strat_{-\play}) - \pay_{\play}(\strat)}
\end{equation}
The \ac{RD} comprise the cornerstone of evolutionary game theory and, as such, their rationality properties have been the subject of intense study in the literature, \cf \cite{HS98,Wei95,San10} and references therein.
For all these reasons, the dynamics \eqref{eq:EW}/\eqref{eq:RD} will be our main focus in the sequel.

\section{The geometry of exponential weights}
\label{sec:decomposition}

We now turn to our overarching objective, that is, to identify in which classes of games we can expect the dynamics of \acl{EW} to converge, and in which classes we cannot.
Our main tool for this will be \define{Helmholtz's theorem},
a simpler variant of the \define{Hodge decomposition theorem}, itself one of the most foundational results in differential geometry \cite{derhamDifferentiableManifoldsForms1984,hodgeTheoryApplicationsHarmonic1989,bhatiaHelmholtzHodgeDecompositionSurvey2013}.

To set the stage for the analysis to come, we begin by presenting the original Helmholtz decomposition of vector fields in the Euclidean setting of $\vecspace$.
Subsequently, we develop the geometric background needed to define and describe the class of \define{incompressible games} later in this section.

\subsection{The Helmholtz decomposition}

Consider the dynamics
\begin{equation}
\label{eq:dyn}
\tag{Dyn}
\dot\point
	= \vecfield(\point)
\end{equation}
induced by some sufficiently smooth vector field $\vecfield\from\vecspace\to\vecspace$ on $\vecspace$.
Helmholtz's theorem states that, if $\vecfield$ decays at infinity as $\norm{\vecfield(\point)} = o(\norm{\point}^{-2})$, it can be resolved as
\begin{equation}
\label{eq:Helmholtz}
\vecfield(\point)
	= \nabla\pot(\point)
		+ \incfield(\point)
\end{equation}
where
$\pot\from\vecspace\to\R$ is a \define{scalar potential} for $\vecfield$
and the vector field $\incfield\from\vecspace\to\vecspace$ is \define{incompressible}, \ie it has vanishing divergence:
\begin{equation}
\label{eq:div-free}
\nabla\!\cdot\!\incfield(\point)
	\defeq \insum_{\coord=1}^{\vdim} \pd\incfield_{\coord} / \pd\point_{\coord}
	= 0
	\quad
	\text{for all $\point\in\vecspace$}.
\end{equation}

The decomposition \eqref{eq:Helmholtz} is known as the \define{Helmholtz decomposition} of $\vecfield$, and it is particularly important from a dynamical standpoint because its two components exhibit ``orthogonal'' behaviors in terms of convergence.
More precisely, by standard Lyapunov arguments, the flow $\dot\point = \nabla\pot(\point)$ of the gradient component of $\vecfield$ generically converges to the critical set of $\pot$ \citep{khalilNonlinearSystems2002}.
On the other hand, Liouville's theorem shows that the flow $\dot\point = \incfield(\point)$ of the incompressible component of $\vecfield$ is volume-preserving,\footnote{Applications of Liouville's theorem \citep{Arn89} in the context of game dynamics go back at least to \citet{amannPermanenceLotkaReplicator1985,HS98} and \citet[pp.~175-227]{Wei95}.}
so it does not admit any stable attractors (asymptotically stable points or limit cycles).
In this sense, the potential component of $\vecfield$ represents the \emph{convergent} part of \eqref{eq:dyn}, while the incompressible component encapsulates the \emph{non-convergent} part thereof.

In view of the above, a natural idea to characterize convergent and non-convergent behaviors under \eqref{eq:RD} would be to apply Helmholtz's theorem to the vector field
\begin{align}
\label{eq:repfield}
\repfield_{\play\pure_{\play}}(\strat)
	&\defeq \strat_{\play\pure_{\play}}
		\bracks{\pay_{\play}(\pure_{\play};\strat_{-\play}) - \pay_{\play}(\strat)}
	\notag\\
	&= \strat_{\play\pure_{\play}}
		\bracks*{
			\payfield_{\play\pure_{\play}}(\strat)
			- \insum_{\purealt_{\play}\in\pures_{\play}}
				\strat_{\play\purealt_{\play}} \payfield_{\play\purealt_{\play}}(\strat)}
\end{align}
of \eqref{eq:RD} that describes the evolution of the players' mixed strategies under \eqref{eq:EW}.
Unfortunately however, a direct decomposition of $\repfield$ into a potential and incompressible component \textendash\ in the sense of Helmholtz's theorem \textendash\ is not well-aligned with the properties of the underlying game.

To see this, consider the single-player game with actions ``$\pureA$'' and ``$\pureB$'' and payoffs $\pay(\pureA) = 0$ and $\pay(\pureB) = 1$.
Since there is only one player, the game admits the potential function $\pot\of\strat = \pay(\strat) = 0\cdot\strat_{\pureA} + 1\cdot\strat_{\pureB} = \strat_{\pureB}$, so it is a potential one.
However, the \acl{RD} for this toy example are
\begin{equation}
\begin{alignedat}{2}
\dot\strat_{\pureA}
	&= \repfield_{\pureA}(\strat)
	\equiv \strat_{\pureA} \bracks{0 - \pay(\strat)}
	&&= -\strat_{\pureA}\strat_{\pureB}
	\\
\dot\strat_{\pureB}
	&= \repfield_{\pureB}(\strat)
	\equiv \strat_{\pureB} \bracks{1 - \pay(\strat)}
	&&= \strat_{\pureB} - \strat_{\pureB}^{2}
\end{alignedat}
\end{equation}
and a simple check shows that
\(
\cramped{\pd_{\pureB}\repfield_{\pureA}}
	= - \strat_{\pureA}
	\neq 0
	= \cramped{\pd_{\pureA}\repfield_{\pureB}}.
\)
By a routine application of Poincaré's lemma, this further shows that $\repfield(\strat)$ is \emph{not} the gradient of a potential function in the sense of \eqref{eq:Helmholtz}.
As a result, the game is \emph{not} a potential one in the sense of Helmholtz's theorem.

The above shows that the property of \eqref{eq:RD} being a potential system in the sense of Helmholtz (which is more relevant from a dynamical standpoint) is \emph{not} aligned with the property of admitting a potential in the sense of \citet{MS96} (which is more relevant from a game-theoretic standpoint).
In view of this, our goal in the sequel will be to bridge this gap by means of an alternate decomposition in which the discrepancy between ``strategically potential'' and ``dynamically potential'' games disappears.

\subsection{The geometry of the \acl{RD}}

The starting point of our analysis is the observation that, under \eqref{eq:RD}, \emph{players track the direction of steepest individual payoff ascent;}
however, this ascent is not defined relative to the standard Euclidean geometry of $\vecspace$ (which underlies Helmholtz's theorem), but relative to a non-Euclidean structure known as the \define{\shah  metric}.

To make this precise, we begin by introducing the notion of a \define{Riemannian metric}, a fundamental geometric concept which generalizes the ordinary Euclidean scalar product between vectors.
Formally, a \define{Riemannian metric} on an open set $\open$ of $\vecspace$ is a smooth assignment of an \define{inner product} to each $\point\in\open$, \ie a family of bilinear pairings $\inner{\argdot}{\argdot}_{\point}$, $\point\in\open$, that satisfies the following requirements for all $\tvec,\alt\tvec\in\vecspace$ and all $\point\in\open$:
\begin{enumerate}
\item
\define{Symmetry:}
	$\inner{\tvec}{\alt\tvec}_{\point} = \inner{\alt\tvec}{\tvec}_{\point}$.
\item
\define{Positive-definiteness:}
	$\inner{\tvec}{\tvec}_{\point} \geq 0$ with equality iff $\tvec=0$.
\end{enumerate}
This definition can be made more concrete in the standard frame $\setof{\bvec_{\coord}}_{\coord=1}^{\vdim}$ of $\vecspace$ by defining the \define{metric tensor} of $\inner{\argdot}{\argdot}_{\point}$ as the matrix $\gmat(\point) \in \R^{\vdim\times\vdim}$ with entries
\begin{equation}
\label{eq:gmat}
\gmat_{\coord\coordalt}(\point)
	= \inner{\bvec_{\coord}}{\bvec_{\coordalt}}_{\point}
	\quad
	\text{for $\coord,\coordalt=1,\dotsc,\vdim$}.
\end{equation}
The \define{\shah metric} \cite{Sha79} on the positive orthant $\orthant$ of $\vecspace$ is then defined as
\begin{equation}
\label{eq:Shah}
\gmat_{\coord\coordalt}(\point)
	= \delta_{\coord\coordalt} / \point_{\coord}
	\quad
	\text{for all $\point\in\orthant$}
\end{equation}
where $\delta_{\coord\coordalt}$ denotes the standard Kronecker delta.


\begin{figure}
\centering
\footnotesize
\includegraphics[height=40ex]{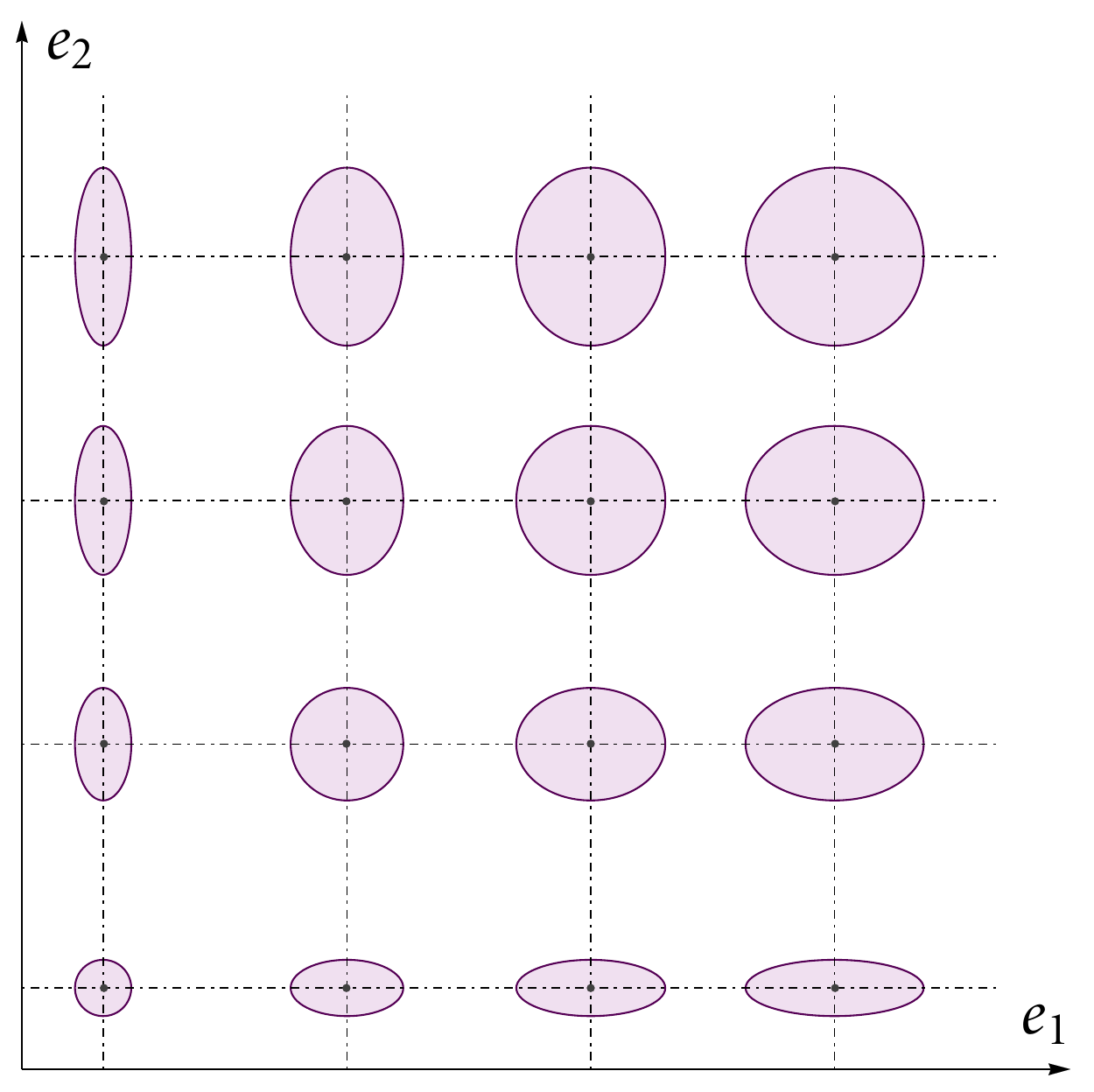}
\qquad
\includegraphics[height=40ex]{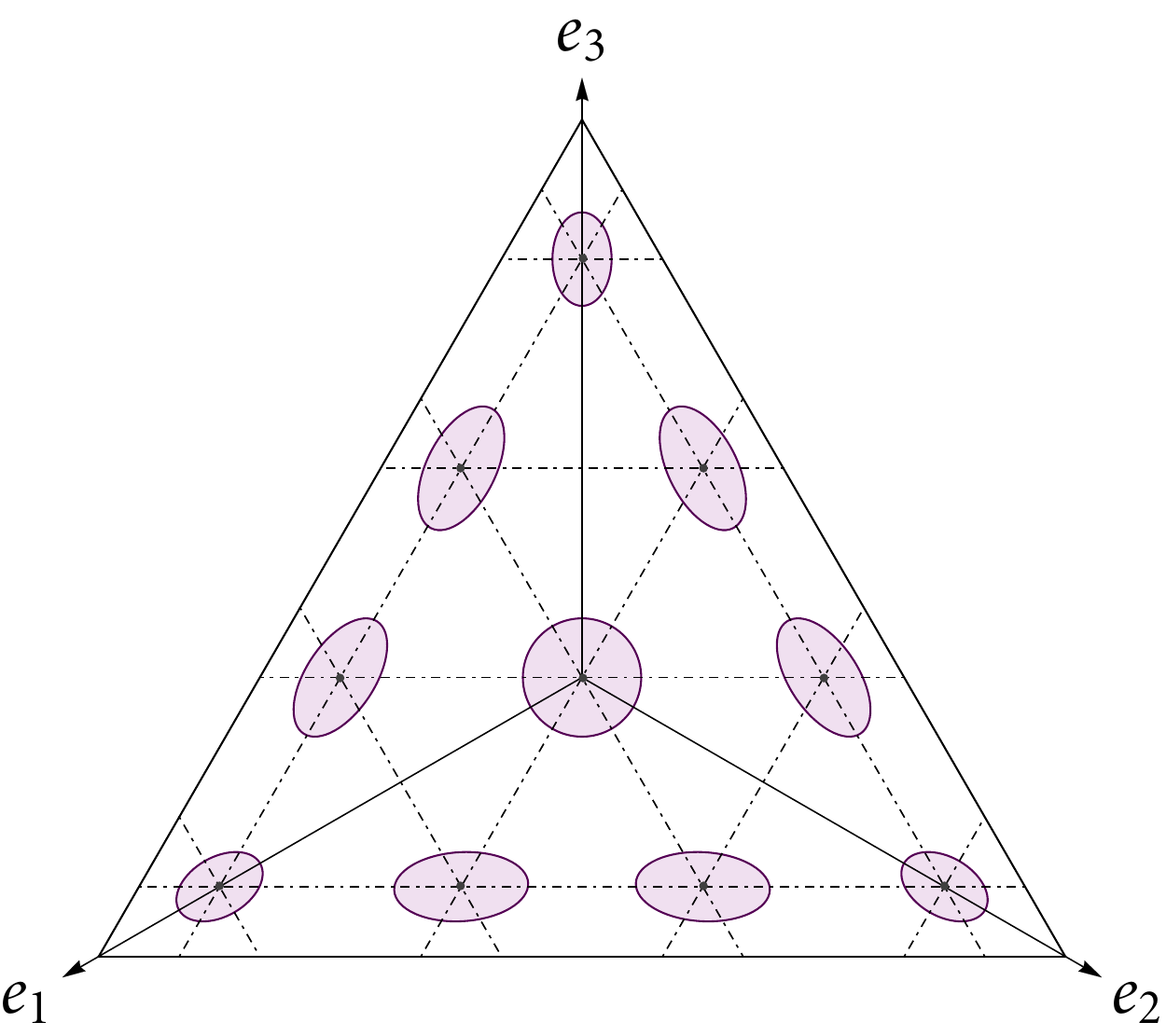}
\\
\caption{Unit balls on the orthant and the simplex under the \shah metric (left and right respectively).
Notice how the \shah metric distorts distances near the boundary and flattens the balls along the axis that they are closest to.}
\label{fig:balls}
\end{figure}


Importantly, the \shah unit spheres $\sphere_{\point} \defeq \setdef{\tvec\in\vecspace}{\inner{\tvec}{\tvec}_{\point} = 1}$ at $\point$ become increasingly flattened along the $\point_{\coord}$-axis as $\point_{\coord}\to0$ (for an illustration, \cref{fig:balls} below).
Because of this distortion, the notion of a ``gradient'' and the ``direction of steepest ascent'' must both be redefined to account for the fact that all displacements of interest take place in the (open) unit simplex $\intsimplex = \setdef{\point\in\orthant}{\sum_{\coord} \point_{\coord} = 1}$ of $\vecspace$.

To do so, we proceed as follows:
Given a differentiable function $\fn\from\orthant\to\R$, we define the \define{its \shah gradient along $\intsimplex$} as the vector field $\grad\fn(\point)$ which is
\begin{enumerate*}
[(\itshape a\upshape)]
\item
tangent to $\intsimplex$;
and
\item
satisfies the defining relation
\end{enumerate*}
\begin{equation}
\label{eq:grad-Shah}
\inner{\grad\fn(\point)}{\tvec}_{\point}
	= \dir\fn(\point;\tvec)
\end{equation}
for all $\point\in\intsimplex$ and all $\tvec$ that are tangent to $\intsimplex$ (\ie $\sum_{\coord}\tvec_{\coord} = 0$ in the standard basis of $\vecspace$).
This relation clearly mirrors the corresponding Euclidean definition $\nabla\fn(\point)^{\top} \cdot \tvec = \dir\fn(\point;\tvec)$,
and as we show in \cref{app:geometry}, it can be equivalently characterized as the direction of ``\define{steepest ascent}'', namely
\begin{equation}
\label{eq:steepest-Shah}
\txs
\grad\fn(\point)
	\propto \argmax\setdef[\big]{\dir\fn(\point;\tvec)}{\tvec\in\sphere_{\point}, \sum_{\coord}\tvec_{\coord} = 0}.
\end{equation}
In other words, $\grad\fn(\point)$ points in the direction that maximizes the rate of increase of $\fn$ at $\point$ among all vectors that are tangent to $\intsimplex$
and have
unit \shah norm.

Now, to obtain an explicit expression for $\grad\fn(\point)$ in the standard basis of $\vecspace$,
note that \eqref{eq:grad-Shah} gives
\begin{equation}
\label{eq:grad-Shah-coords1}
\sum_{\coord=1}^{\vdim} \frac{\bracks{\grad\fn(\point)}_{\coord} \, \tvec_{\coord}}{\point_{\coord}}
	= \sum_{\coord=1}^{\vdim} \frac{\pd\fn}{\pd\point_{\coord}} \tvec_{\coord}
\end{equation}
for all $\tvec\in\vecspace$ such that $\sum_{\coord}\tvec_{\coord} = 0$.
Then, as we show in \cref{app:replicator-geometry},
solving this equation yields the expression
\begin{equation}
\label{eq:grad-Shah-coords}
\bracks{\grad\fn(\point)}_{\coord}
	= \point_{\coord} \bracks*{\frac{\pd\fn}{\pd\point_{\coord}} - \insum_{\coordalt=1}^{\vdim} \point_{\coordalt} \frac{\pd\fn}{\pd\point_{\coordalt}}}.
\end{equation}
This last expression is strongly reminiscent of the vector field $\repfield$ defining \eqref{eq:RD}, a link which we make precise below.

Now, to return to a game-theoretic context,
let $\intstrats_{\play}$ denote the relative interior of the mixed strategy space $\strats_{\play}\equiv \simplex(\pures_{\play})$,
and endow $\orthantplay$ with the \shah metric as above.
We then define the \define{individual payoff gradient} of player $\play\in\players$ as the vector field $\grad_{\play}\pay_{\play}$ which is
\begin{enumerate*}
[(\itshape a\upshape)]
\item
tangent to $\intstrats_{\play}$;
and
\item
satisfies the defining relation
\end{enumerate*}
\begin{equation}
\label{eq:paygrad-Shah}
\inner{\grad_{\play}\pay_{\play}(\strat)}{\tvec_{\play}}
	= \dir\pay_{\play}(\strat;\tvec_{\play})
\end{equation}
for all $\tvec_{\play} \in \R^{\pures_{\play}}$ that are tangent to $\intstrats_{\play}$ at $\strat_{\play}$ (that is, $\sum_{\pure_{\play}\in\pures_{\play}} \tvec_{\play\pure_{\play}} = 0$).
Then, by invoking the explicit expression \eqref{eq:grad-Shah-coords} and observing that $\pd\pay_{\play} / \pd\strat_{\play\pure_{\play}} = \pay_{\play}(\pure_{\play};\strat_{-\play})$, we finally obtain the following geometric characterization of the \acl{RD}.

\begin{restatable}{proposition}{RepShah}
\label{prop:rep-Shah}
Under the \shah metric, \eqref{eq:RD} is equivalent to the steepest individual payoff ascent dynamics
\begin{equation}
\label{eq:dyn-paygrad}
\dot\strat_{\play}
	= \grad_{\play} \pay_{\play}(\strat)
\end{equation}
\ie $\repfield_{\play}(\strat) = \grad_{\play} \pay_{\play}(\strat)$ for all $\play\in\players$.
\end{restatable}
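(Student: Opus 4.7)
The plan is to apply the explicit coordinate formula \eqref{eq:grad-Shah-coords} to each player's payoff function, with the other players' strategies held fixed, and to match the resulting expression against the definition \eqref{eq:repfield} of $\repfield_{\play}$.

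First, I would fix $\play\in\players$ and regard $\pay_{\play}(\argdot;\strat_{-\play})$ as a smooth function of $\strat_{\play} \in \orthantplay$ equipped with the \shah metric restricted to this factor. The multilinear expansion \eqref{eq:pay-lin} immediately yields
\[
\frac{\pd\pay_{\play}}{\pd\strat_{\play\pure_{\play}}}(\strat) = \pay_{\play}(\pure_{\play};\strat_{-\play})
\quad\text{for each $\pure_{\play}\in\pures_{\play}$.}
\]
Substituting these partials into \eqref{eq:grad-Shah-coords} with $\point \leftarrow \strat_{\play}$ gives
\[
[\grad_{\play}\pay_{\play}(\strat)]_{\pure_{\play}}
    = \strat_{\play\pure_{\play}}\bracks*{\pay_{\play}(\pure_{\play};\strat_{-\play})
        - \insum_{\purealt_{\play}\in\pures_{\play}} \strat_{\play\purealt_{\play}}\,\pay_{\play}(\purealt_{\play};\strat_{-\play})}.
\]
A second appeal to \eqref{eq:pay-lin} collapses the inner sum to $\pay_{\play}(\strat)$, so the bracket becomes $\pay_{\play}(\pure_{\play};\strat_{-\play}) - \pay_{\play}(\strat)$; this is exactly $\repfield_{\play\pure_{\play}}(\strat)$, i.e.\ the $\pure_{\play}$-component of the right-hand side of \eqref{eq:RD} for player $\play$.

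What I would flag as the one point requiring care—rather than a genuine obstacle—is confirming that the vector produced by \eqref{eq:grad-Shah-coords} actually qualifies as the \shah gradient \emph{along} $\intstrats_{\play}$ in the sense of \eqref{eq:paygrad-Shah}, rather than merely as an ambient gradient on $\orthantplay$. This requires two checks. Tangency reduces to $\insum_{\pure_{\play}}[\grad_{\play}\pay_{\play}(\strat)]_{\pure_{\play}} = 0$, and the simplex constraint $\insum_{\pure_{\play}}\strat_{\play\pure_{\play}} = 1$ makes this sum collapse to $\pay_{\play}(\strat) - \pay_{\play}(\strat) = 0$ via \eqref{eq:pay-lin}. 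The defining identity \eqref{eq:paygrad-Shah} for all $\tvec_{\play}$ tangent to $\intstrats_{\play}$ is precisely the equation \eqref{eq:grad-Shah-coords1} whose solution is \eqref{eq:grad-Shah-coords}, as already established in \cref{app:replicator-geometry}. Together these identify the computed vector with $\grad_{\play}\pay_{\play}(\strat)$ for every $\play\in\players$, which is exactly the stated claim $\repfield_{\play}(\strat) = \grad_{\play}\pay_{\play}(\strat)$.
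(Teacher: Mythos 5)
Your proposal is correct and follows essentially the same route as the paper: the proof in \cref{app:replicator-geometry} amounts to solving the defining relation \eqref{eq:paygrad-Shah} in coordinates with an undetermined additive function that is then fixed by the tangency requirement, which is exactly the derivation of \eqref{eq:grad-Shah-coords} that you invoke and then specialize to $\fn = \pay_{\play}(\argdot;\strat_{-\play})$ using $\pd\pay_{\play}/\pd\strat_{\play\pure_{\play}} = \pay_{\play}(\pure_{\play};\strat_{-\play})$. Your explicit verification of tangency and of the defining identity for tangent $\tvec_{\play}$ supplies the same content as the paper's gauge-fixing step, so no gap remains.
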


A version of this result appears without proof in \cite{LM15};
for completeness, we defer the details of the proof of \cref{prop:rep-Shah} to \cref{app:replicator-geometry}.
What is more important for our purposes is that, as we show in \cref{sec:app-potential-games}, if the game admits a potential in the sense of \eqref{eq:pot}, combining \cref{prop:rep-Shah,eq:pot-diff,eq:grad-Shah} shows that \eqref{eq:RD} \emph{is a \shah potential system}, that is, $\dot\strat = \grad\pot$.

As far as we are aware, the closest result to \cref{prop:rep-Shah} in the literature is \emph{Kimura's maximum principle} \cite{Kim58} which states that, in potential games, \eqref{eq:RD} is a \shah gradient system \textendash\ thus lifting the discrepancy between the ``dynamic'' and ``strategic'' notions of potential that arose before.
\Cref{prop:rep-Shah} provides a broad generalization of this principle to the effect that, \emph{in any game}, players following \eqref{eq:EW}/\eqref{eq:RD} track the direction of steepest unilateral payoff ascent, provided that displacements are measured relative to the \shah metric.

\subsection{Incompressible games}
\label{sec:incompressible}

Going back to the Helmholtz decomposition \eqref{eq:Helmholtz}, we see that it involves two Euclidean differential operators, the gradient $\nabla\pot$ and the divergence $\nabla\!\cdot\!\incfield$.
\Cref{eq:grad-Shah} shows how to redefine gradients relative to the \shah metric, but the corresponding construction for the divergence is more intricate.
The reason for this is that \eqref{eq:RD} has an inert degree of freedom along $(1,\dotsc,1)$, so the standard definition of the divergence on the ambient space of the simplex is not appropriate \citep[Chap.~6]{carmoRiemannianGeometry1992}.
To circumvent this, we will introduce a more parsimonious representation of \eqref{eq:RD} which has no redundant directions;
we do this first in the case of a single player with action set $\pures = \setof{0,1,\dotsc,\nEffs}$, and only reinstate the player index $\play\in\players$ toward the end of this section.

To proceed, consider the coordinate transformation
\begin{equation}
\push(\strat_{0},\strat_{1},\dotsc,\strat_{\nEffs})
	= (\strat_{1},\dotsc,\strat_{\nEffs})
\end{equation}
which maps the standard unit simplex of $\R^{\nEffs+1}$ to the ``corner of cube''
\begin{equation}
\corcube
	= \setdef{\effstrat\in\clorthant[\R^{\nEffs}]}{\sum_{\eff=1}^{\nEffs} \effstrat_{\eff} \leq 1}
\end{equation}
of $\R^{\nEffs}$ by eliminating $\strat_{0}$ (\ie by replacing the constraint ``summing to $1$'' with ``summing to at most $1$'').%
\footnote{This coordinate transformation goes back at least to \citet{ritzbergerNashField1990}; see also \citet[p.227]{Wei95}.}
Then, for all $\eff=1,\dotsc,\nEffs$, the dynamics \eqref{eq:RD} become
\begin{align}
\ddt{\effstrat_{\eff}}
	&= \dot\strat_{\eff}
	= \strat_{\eff}
		\bracks*{\payfield_{\eff}(\strat) - \insum_{\pure=0}^{\nEffs} \strat_{\pure} \payfield_{\pure}(\strat)}
	\notag\\
	&\quad
	= \effstrat_{\eff}
		\bracks*{\effpayfield_{\eff}(\effstrat) - \insum_{\effalt=1}^{\nEffs} \effstrat_{\effalt} \effpayfield_{\effalt}(\effstrat)}
\label{eq:RD-eff}
\tag{RD$_{0}$}
\end{align}
where, in obvious notation, we set
\begin{equation}
\label{eq:payfield-eff}
\effpayfield_{\eff}(\effstrat)
	= \payfield_{\eff}(\strat) - \payfield_{0}(\strat)
	\quad
	\text{for all $\eff=1,\dotsc,\nEffs$}.
\end{equation}

Seeing as the dynamics evolve in an open set of $\R^{\nEffs}$ (as opposed to a hyperplane of $\R^{\nEffs+1}$), there is no longer any redundancy in the dynamics' degrees of freedom.
In view of this, we will need to ``push forward'' the \shah metric from $\R^{\nEffs+1}$ to $\R^{\nEffs}$ (or, more precisely, the positive orthants thereof) in a way that is compatible with $\chart$.

To do so, we begin by noting that the preimage of the standard frame $\setof{\effbvec_{\eff}}_{\eff=1}^{\nEffs}$ of $\R^{\nEffs}$ restricted to the tangent space $\tanplane$ of $\simplex$ in $\R^{\nEffs+1}$ is
\begin{equation}
\push^{\ast}(\effbvec_{\eff}) \defeq \bvec_{\eff} - \bvec_{0}
	\quad{\text{for all $\eff=1,\dotsc,\nEffs$}}
	\,.
\end{equation}
Accordingly, as we explain in more detail in \cref{app:reduction}, the metric transported in this way to the interior $\intcorcube$ of $\corcube$ will be given by the metric tensor
\begin{equation}
\label{eq:Shah-eff}
\effgmat_{\eff\effalt}(\effstrat)
	= \inner{\effbvec_{\eff}}{\effbvec_{\effalt}}_{\effstrat}
	= \inner{\bvec_{\eff} - \bvec_{0}}{\bvec_{\effalt} - \bvec_{0}}_{\strat}
	= \frac{\delta_{\eff\effalt}}{\strat_{\eff}} + \frac{1}{\strat_{0}}
\end{equation}
for all $\eff,\effalt=1,\dotsc,\nEffs$ and all $\effstrat\in\intcorcube$.

We now have all the ingredients required to define the \shah divergence operator on the interior $\intcorcube$ of $\corcube$.
Since $\intcorcube$ is an open subset of $\R^{\nEffs}$ (which was not the case for the \emph{relative} interior $\intsimplex$ of $\simplex$ in $\R^{\nEffs+1}$), the \define{\shah divergence} of a vector field $\effvecfield\from\intcorcube\to\R^{\nEffs}$ may be defined by the Riemannian expression%
\footnote{The divergence on a Riemannian manifold is a generalization of the divergence operator from vector calculus to curved spaces;
for details, see \cref{app:div-cod-product}.}
\begin{equation}
\label{eq:div-Shah}
\diver\effvecfield(\effstrat)
	\defeq \frac{1}{\sqrt{\det\effgmat(\effstrat)}}
		\sum_{\eff=1}^{\nEffs} \frac{\pd}{\pd\effstrat_{\eff}}
			\parens*{\sqrt{\det\effgmat(\effstrat)} \, \effvecfield_{\eff}(\effstrat)}
\end{equation}
with $\effgmat$ given by \eqref{eq:Shah-eff}.
In particular, when applied to each player's individual steepest ascent payoff field $\repfield_{\play}(\strat) = \grad_{\play} \pay_{\play}(\strat)$, the coordinate expression \eqref{eq:div-Shah} yields
\begin{equation}
\label{eq:div-payfield}
\diver_{\play} \repfield_{\play}(\effstrat)
	= \frac{1}{\sqrt{\det\effgmat_{\play}(\effstrat_{\play})}}
		\sum_{\eff_{\play}=1}^{\nEffs_{\play}} \frac{\pd}{\pd\effstrat_{\play\eff_{\play}}}
			\parens*{\sqrt{\det\effgmat_{\play}(\effstrat_{\play})} \, \repfield_{\play\eff_{\play}}(\effstrat)}
\end{equation}
where, in view of \cref{eq:repfield,eq:payfield-eff}, and in a slight \textendash\ but suggestive \textendash\ abuse of notation, we have set
\begin{align}
\label{eq:repfield-eff}
\repfield_{\play\eff_{\play}}(\effstrat)
	\defeq \effstrat_{\play\eff_{\play}}
		\bracks*{\effpayfield_{\play\eff_{\play}}(\effstrat)
			- \insum_{\effalt_{\play}=1}^{\nEffs_{\play}} \effstrat_{\play\effalt_{\play}} \effpayfield_{\play\effalt_{\play}}(\effstrat)}
\end{align}
with $\effpayfield_{\play\eff_{\play}}(\effstrat) = \payfield_{\play\eff_{\play}}(\strat) - \payfield_{\play,0}(\strat)$ defined as in \eqref{eq:payfield-eff}
for all $\play\in\players$ and all $\eff_{\play} = 1,\dotsc,\nEffs_{\play}$.

With all this in hand, we are finally in a position to define incompressible games:

\begin{restatable}{definition}{DefIncompressible}
\label{def:incompressible}
A finite game $\fingame \equiv \fingamefull$ will be called \define{incompressible} relative to the \shah metric when
\begin{equation}
\label{eq:incompressible}
\diver\repfield(\effstrat)
	\defeq \insum_{\play\in\players} \diver_{\play} \repfield_{\play}(\effstrat)
	= 0.
\end{equation}
\end{restatable}

We should stress here that \cref{def:incompressible} is motivated by purely geometric considerations, and provides a ``complement'' to the class of potential games in a geometric context.
In this regard,
our goal in the sequel will be to use this definition as the basis for a Helmholtz-like decomposition relative to the \shah metric and, in so doing, we understand the dynamic and game-theoretic implications of such a decomposition.
We carry this out in the next section.

\section{Analysis and results}
\label{sec:results}

\subsection{A geometric decomposition of games}

To recap, our analysis so far has highlighted the relation between the \shah metric and learning under \eqref{eq:EW}\,/\,\eqref{eq:RD}.
On that account, the first question that we seek to address is whether Helmholtz's theorem can be extended to the present context, and whether such a decomposition resolves the dynamic/strategic disconnect that underlies the ``vanilla'' Helmholtz decomposition.
Our first result below answers this question in the positive.

\begin{restatable}{theorem}{ThDecomposition}
\label{thm:Hodge}
Every finite game $\fingame$ can be decomposed as
\begin{equation}
\label{eq:Hodge}
\fingame
	= \potgame + \incgame
\end{equation}
where
$\potgame$ is potential and $\incgame$ is incompressible.
In particular, at the vector field level, we have
\begin{equation}
\label{eq:Hodge-vec}
\repfield
	= \grad\pot
		+ \incfield
\end{equation}
where $\pot$ is a potential for $\potgame$ and $\incfield$ is incompressible in the sense of \eqref{eq:incompressible}.
\end{restatable}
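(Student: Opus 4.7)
The plan is to establish the vector-field decomposition \eqref{eq:Hodge-vec} first; the game-level statement \eqref{eq:Hodge} then follows by identifying the gradient component with the replicator field of a strategically equivalent potential game having mixed potential $\pot$. My strategy is to realize \eqref{eq:Hodge-vec} as an orthogonal projection in the finite-dimensional space of replicator-type vector fields, taken with respect to an $L^{2}$-type inner product weighted by the \shah volume form.

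Concretely, working in the reduced coordinates $\effstrat$ on the interior of $\prod_{\play}\corcube_{\play}$, I would endow the space of replicator-type fields with the inner product
\begin{equation*}
\inner{\effvecfield}{\alt\effvecfield} = \int \inner{\effvecfield(\effstrat)}{\alt\effvecfield(\effstrat)}_{\effgmat} \sqrt{\det \effgmat(\effstrat)} \, d\effstrat,
\end{equation*}
where $\effgmat$ is the product \shah metric tensor of \eqref{eq:Shah-eff}. The crucial step is the integration-by-parts identity
\begin{equation*}
\inner{\grad\pot}{\alt\effvecfield} = -\inner{\pot}{\diver\alt\effvecfield},
\end{equation*}
valid for every multilinear $\pot$ and every replicator-type $\alt\effvecfield$, which identifies the subspace of \shah-gradients of multilinear potentials and the subspace of incompressible fields as mutually orthogonal. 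Projecting $\repfield$ orthogonally onto the former yields $\grad\pot$ for a multilinear $\pot$, and the residual $\incfield = \repfield - \grad\pot$ is divergence-free by construction, which is precisely \eqref{eq:Hodge-vec}.

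The central technical hurdle, in my view, is justifying the integration-by-parts formula on the non-compact open domain where the \shah volume element $\sqrt{\det\effgmat}$ degenerates near every face of the simplex. The saving grace is that the prefactor $\effstrat_{\play\eff_{\play}}$ in \eqref{eq:repfield-eff} forces each component of the replicator field to vanish on the corresponding coordinate hyperplane, with an analogous cancellation occurring on the face $\sum_{\eff_{\play}}\effstrat_{\play\eff_{\play}} = 1$ where the reference strategy $0_{\play}$ carries full mass. A careful matching of the degeneration rate of $\sqrt{\det\effgmat}$ against the vanishing rate of the replicator field near each boundary face should kill all boundary contributions and validate the orthogonal decomposition. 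The residual linear-algebraic check \textendash\ namely, that the subspace of \shah-gradients of multilinear potentials has the correct dimension for the projection to be well-posed modulo strategic equivalence \textendash\ is then a routine dimension count over the finite-dimensional space of games.
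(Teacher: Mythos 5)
Your route is genuinely different from the paper's: there, \cref{thm:Hodge} is a short corollary of the combinatorial decomposition of \citet{CMOP11} combined with \cref{thm:harmonic} (harmonic $\iff$ incompressible) and \cref{prop:potential-iff-gradient}, whereas you propose an intrinsic $L^{2}$-Hodge argument. Unfortunately the proposal has a gap at its central step, and it is not the boundary-term issue you single out. Orthogonally projecting $\repfield$ onto the finite-dimensional subspace $\mathcal{G}$ of \shah gradients of multilinear potentials only guarantees that the residual $\incfield$ is orthogonal to $\mathcal{G}$; via your integration-by-parts identity this reads $\int \pot \, (\diver\incfield)\,\sqrt{\det\effgmat}\,d\effstrat = 0$ for every multilinear $\pot$. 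That is strictly weaker than $\diver\incfield = 0$: because you test only against the finite-dimensional family of multilinear potentials (not against all smooth functions, as in the genuine Hodge decomposition on a compact manifold), vanishing of the divergence follows only if $\diver\incfield$ itself lies in a subspace on which this weighted pairing with multilinear functions is non-degenerate \textendash\ for instance, the span of multilinear functions. So the residual is \emph{not} ``divergence-free by construction.''

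Closing this hole requires exactly the computation your proposal omits: that the \shah divergence of the replicator field of \emph{any} finite game is itself a multilinear function of $\strat$, namely $\diver\repfield = \tfrac{1}{2}\sum_{\play}\sum_{\pure_{\play}}\bigl(\payfield_{\play\pure_{\play}} - \pay_{\play}\bigr)$ (\cref{th:shah-divergence-repfield-result}). This is the technical heart of the paper and is highly metric-specific \textendash\ the appendix notes that under the Euclidean metric the projected payoff field is divergence-free for \emph{every} game, so no soft structural argument can stand in for the explicit calculation. The same fact is what your ``routine dimension count'' secretly needs: without it one only has $\dim\mathcal{D} = \dim\mathcal{V} - \rank(\diver|_{\mathcal{V}})$ with no handle on the rank, hence no way to verify $\mathcal{G}\oplus\mathcal{D} = \mathcal{V}$ on the space $\mathcal{V}$ of replicator fields. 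Once multilinearity of the divergence is in hand your scheme does go through (and the boundary terms indeed vanish, each component of $\repfield$ decaying like $\strat_{\pure}$ against the $\strat_{\pure}^{-1/2}$ blow-up of $\sqrt{\det\effgmat}$, with the \shah volume finite), so what is missing is a concrete, identifiable lemma rather than a flaw in the overall architecture \textendash\ but as written the decisive step is unsupported.
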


\cref{thm:Hodge} comes as a consequence of \cref{thm:harmonic}, which relates harmonic to incompressible games, and which we state later in this section.
Because the calculations are fairly lengthy and involved, we defer all relevant details to \cref{app:incompressible}, and we focus here on the game-theoretic implications of \cref{thm:Hodge}.

A first conclusion that can be drawn from \cref{thm:Hodge} is that the decomposition \eqref{eq:Hodge} pinpoints two concrete building blocks of the space of games:
potential games and incompressible games.
With regard to the potential component, \cref{thm:Hodge} resolves the dynamic-strategic disconnect that arose when we applied the standard Helmholtz decomposition to \eqref{eq:RD}:
the component $\potgame$ of \eqref{eq:Hodge} also admits a \shah potential in the sense of \eqref{eq:grad-Shah}, so there is no longer any mismatch between the two viewpoints.

The role of the incompressible component is less transparent, but it is clarified by the striking equivalence below:

\begin{restatable}{theorem}{ThHarmonicIncompressible}
\label{thm:harmonic}
A finite game is harmonic if and only if it is incompressible.
In particular, up to strategic equivalence, the decompositions \eqref{eq:Candogan} and \eqref{eq:Hodge} coincide.
\end{restatable}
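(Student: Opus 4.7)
The plan is to compute the \shah divergence of the replicator field $\repfield$ in closed form and to show by a short polynomial-identity argument that its vanishing on $\intcorcube$ is equivalent to the harmonic condition \eqref{eq:harm}. Once this class identification is in hand, the ``in particular'' clause follows immediately from the uniqueness (up to strategic equivalence) of the Candogan decomposition.

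The computation proceeds one player at a time in the reduced coordinates of \cref{sec:incompressible}. Because the metric $\effgmat_{\play}$ of \eqref{eq:Shah-eff} is diagonal plus a rank-one perturbation, the matrix-determinant lemma yields the clean product form $\det\effgmat_{\play}(\effstrat_{\play}) = 1/\prod_{\pure_{\play}\in\pures_{\play}}\strat_{\play\pure_{\play}}$, and in turn the logarithmic derivative $\pd_{\effstrat_{\play\eff_{\play}}}\log\sqrt{\det\effgmat_{\play}} = \tfrac{1}{2}(1/\strat_{\play 0} - 1/\strat_{\play\eff_{\play}})$. Substituting into \eqref{eq:div-payfield} and using that each $\payfield_{\play\pure_{\play}}(\strat)$ is independent of $\strat_{\play}$, the partial derivatives $\pd\repfield_{\play\eff_{\play}}/\pd\effstrat_{\play\eff_{\play}}$ produce only two polynomial terms, while the volume-form correction $\tfrac{1}{2}\sum_{\eff_{\play}} \repfield_{\play\eff_{\play}}/\strat_{\play 0}$ contracts via the identity $\sum_{\eff_{\play}} \strat_{\play\eff_{\play}} = 1 - \strat_{\play 0}$, so the apparent $1/\strat_{\play 0}$ singularity cancels exactly. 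The net outcome is the multilinear identity
\begin{equation*}
\diver_{\play}\repfield_{\play}(\strat)
    = \tfrac{1}{2}\insum_{\purealt_{\play}\in\pures_{\play}} \bracks{\pay_{\play}(\purealt_{\play};\strat_{-\play}) - \pay_{\play}(\strat)}.
\end{equation*}

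Summing over $\play\in\players$ produces $\diver\repfield(\strat) = \tfrac{1}{2}\insum_{\play}\insum_{\purealt_{\play}}\bracks{\pay_{\play}(\purealt_{\play};\strat_{-\play}) - \pay_{\play}(\strat)}$, which is manifestly the multilinear extension to $\strats$ of the per-profile quantity appearing in \eqref{eq:harm}. Being multilinear in $\strat$, this expression vanishes identically on $\intcorcube$ iff it vanishes at every vertex $\pure\in\pures$, iff $\fingame$ is harmonic; this proves the equivalence of the two classes. For the ``in particular'' claim, both \eqref{eq:Candogan} and \eqref{eq:Hodge} express $\fingame$ as the sum of a potential game and an ``anti-aligned'' one; since the anti-aligned classes have just been shown to coincide, the uniqueness of \eqref{eq:Candogan} up to strategic equivalence forces the two decompositions to agree in the same sense.

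The main obstacle is carrying out the divergence computation cleanly: the volume form injects the a priori singular factors $1/\strat_{\play 0}$ and $1/\strat_{\play\pure_{\play}}$, which are incompatible with the polynomial nature of the payoff field, and one must verify by hand that the $1/\strat_{\play 0}$ contributions conspire to cancel, leaving a multilinear expression that matches \eqref{eq:harm} on the nose. The matrix-determinant-lemma reduction of the volume form in the first step is what renders this cancellation both clean and transparent.
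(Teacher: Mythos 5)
Your proposal is correct and follows essentially the same route as the paper: compute the \shah divergence of the replicator field player-by-player in reduced coordinates (determinant of the metric via the matrix-determinant lemma, logarithmic derivative of the volume density, cancellation of the singular terms using $\sum_{\eff_{\play}}\effstrat_{\play\eff_{\play}} = 1-\strat_{\play 0}$ and the fact that $\payfield_{\play}$ does not depend on $\strat_{\play}$), arrive at $\diver\repfield = \tfrac12\sum_{\play}\sum_{\purealt_{\play}}\bracks{\pay_{\play}(\purealt_{\play};\strat_{-\play})-\pay_{\play}(\strat)}$, and conclude by the vanishing-of-a-multilinear-extension argument plus uniqueness of the Candogan decomposition. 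This matches the paper's \cref{th:shah-divergence-repfield-result} and its proof of \cref{thm:harmonic} step for step.
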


This result hinges on a series of geometric calculations involving the explicit coordinate expression of the \shah divergence operator \eqref{eq:div-Shah};
we defer this calculation to \cref{app:incompressible}, where we discuss all relevant details.
What is more important for our purposes is that \cref{thm:harmonic} provides a fairly unexpected \textendash\ and operationally significant \textendash\ interpretation of incompressible games:
even though incompressible games were introduced solely based on their relation with the \shah metric \textendash\ and, through that, to the learning dynamics \eqref{eq:EW} \textendash\ they are characterized by the same ``negative externalities'' property \eqref{eq:harm} which states that the net incentive to deviate toward and/or away from any pure strategy profile is zero.
As we shall see below, this strategic ``conservation of incentives'' is mirrored in the evolution of learning in incompressible / harmonic games under \eqref{eq:EW}.

\subsection{Dynamic considerations}

We now turn to our paper's second major objective:
understanding the behavior of learning under \eqref{eq:EW} in the class of harmonic\,/\,incompressible games.

The first thing to note here is that, as in the Euclidean case, incompressibility is inherently tied to volume preservation.
However, in contrast to the Euclidean case, volumes must now be measured relative to the \shah metric.
The relevant device in our Riemannian setting is the notion of the \define{\shah volume form}, defined on the (open) unit simplex $\intsimplex$ of $\R^{\nEffs+1}$ as
\begin{equation}
\label{eq:vol}
\vol(\borel)
	= \int_{\push(\borel)}
		\!\sqrt{\det\effgmat(\effstrat)} \dd\effstrat_{1}\dotsm \!\dd\effstrat_{\nEffs}
\end{equation}
where $\borel$ is an open subset of $\intsimplex$ 
and $\effgmat(\effstrat)$ is the coordinate representation of the \shah metric in the ``corner-of-cube'' coordinates $\effstrat = \push(\strat)$ of \cref{sec:incompressible}.

As we discuss in \cref{app:geometry}, the Riemannian version of Liouville's theorem states that, if the vector field $\repfield(\strat)$ is incompressible, the dynamics $\dot\strat = \repfield(\strat)$ are \define{volume-preserving} in the sense that
\begin{equation}
\label{eq:preserve}
\vol(\borel_{\time})
	= \vol(\borel_{\tstart})
\end{equation}
where $\borel_{\tstart} \subseteq \intsimplex$ is an open set of initial conditions and $\borel_{\time}$ is the image of $\borel_{\tstart}$ after following the flow of $\repfield$ for time $\time$.
We thus get the following result:

\begin{restatable}{proposition}{PropVolumeConservation}
\label{prop:preserve}
If $\fingame$ is incompressible, \eqref{eq:RD} is volume-preserving under the \shah volume form \eqref{eq:vol}.
\end{restatable}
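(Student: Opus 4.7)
The plan is to apply the Riemannian version of Liouville's theorem directly, using the machinery developed in \cref{sec:incompressible}. The key structural facts are already in place: the Shahshahani divergence of $\repfield$ has the explicit coordinate expression \eqref{eq:div-payfield} in the corner-of-cube chart $\push$, and the Shahshahani volume form \eqref{eq:vol} is given in the same chart as the weighted Lebesgue density $\sqrt{\det\effgmat(\effstrat)}\dd\effstrat_{1}\dotsm \dd\effstrat_{\nEffs}$. Incompressibility of $\fingame$ means precisely that $\diver\repfield(\effstrat) = 0$ throughout $\intcorcube$, so what remains is to turn this pointwise vanishing into the global volume-preservation identity \eqref{eq:preserve}.

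Concretely, I would first transport the problem from $\intsimplex$ to $\intcorcube$ via the diffeomorphism $\push$: this carries \eqref{eq:RD} to \eqref{eq:RD-eff} and re-expresses $\vol$ as the weighted Lebesgue measure above, so the question reduces to showing that the flow of the reduced vector field \eqref{eq:repfield-eff} preserves this density on an open subset of $\R^{\nEffs}$. For any open $\borel\subseteq\intsimplex$ whose image under \eqref{eq:RD} after time $\time$ is $\borel_{\time}$, I would then invoke the Riemannian transport identity
\begin{equation*}
\ddt{\vol(\borel_{\time})} = \int_{\borel_{\time}} \diver\repfield \dd\vol,
\end{equation*}
which is the standard consequence of the Jacobi formula for the flow $\flowt$ applied to the weighted density $\sqrt{\det\effgmat}$; indeed, this is precisely why \eqref{eq:div-Shah} (rather than the bare Euclidean divergence) is the correct object to require to vanish. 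By the incompressibility hypothesis, the integrand is identically zero, so $\vol(\borel_{\time}) = \vol(\borel_{\tstart})$ for all $\time \geq 0$, as claimed.

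The only technical point that needs checking is that the flow of \eqref{eq:RD} is globally defined on $\intsimplex$, so that $\borel_{\time}$ remains inside the chart domain on which the computation is performed. This is standard: each component of \eqref{eq:RD} carries a factor of $\strat_{\play\pure_{\play}}$, hence the boundary of $\simplex$ is invariant under the flow and $\intsimplex$ is positively invariant. Beyond this minor verification, the proof is essentially a direct application of Liouville's theorem, with the genuine content residing in the preparatory constructions of \cref{sec:incompressible}.
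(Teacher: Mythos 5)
Your proposal is correct and follows essentially the same route as the paper: both reduce to the corner-of-cube chart, identify incompressibility with the pointwise vanishing of the \shah divergence of $\repfield$, and conclude via the Riemannian version of Liouville's theorem, with the global existence of the flow on $\intstrats$ handled exactly as you indicate (invariance of the interior because each component of \eqref{eq:RD} carries a factor $\strat_{\play\pure_{\play}}$). No gaps; the transport identity you invoke is precisely the Riemannian Liouville formula the paper states and cites in its appendix.
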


This result (which we prove and discuss in detail in \cref{app:incompressible}) suggests that \eqref{eq:EW} is unlikely to converge in the class of incompressible \textendash\ and therefore \emph{harmonic} \textendash\ games.
In particular, \cref{prop:preserve}  should be contrasted to a result of \citet{FVGL+20}, who showed that \eqref{eq:EW} is volume-preserving for \emph{every game} relative to the Euclidean volume form on the ``dual'' space of the score variables $\score_{\play}$.
We stress here however that the volume-preservation result of \cite{FVGL+20} applies to \emph{every game}, a property which plays a crucial role in showing that any asymptotically stable state of \eqref{eq:EW}\,/\,\eqref{eq:RD} must be a pure strategy profile (in fact, a strict \acl{NE}) \cite{FVGL+20,Wei95}.
By contrast, \cref{prop:preserve} \emph{does not} apply to all games and essentially, is an equivalence:
if a game is not incompressible, the Riemannian version of Liouville's formula (which we state formally in \cref{app:geometry}) shows that \eqref{eq:RD} is expanding (resp.~contracting) in areas of positive (resp.~negative) divergence, and is not volume-preserving overall.

In this sense, the \shah volume form is more descriptive, and allows for a finer understanding of the flow of \eqref{eq:RD}.
In fact, as we show in \cref{app:incompressible}, incompressibility under the \shah metric induces a further striking structural property:

\begin{restatable}{theorem}{ThmIncompressibleConstantOfMotion}
\label{thm:constant}
If $\fingame$ is incompressible, the induced dynamics \eqref{eq:EW}\,/\,\eqref{eq:RD} admit a constant of motion.
Specifically, there exists a function $\energy\from\intstrats\to\R$ such that $\energy(\stratof{\time}) = \energy(\stratof{\tstart})$ for every initial condition $\stratof{\tstart}\in\intstrats$.
\end{restatable}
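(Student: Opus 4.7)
The plan is to reduce to the harmonic case via \cref{thm:harmonic} and then exhibit an explicit constant of motion that can be verified by direct computation. I would take as candidate the function
\[
\energy(\strat) \defeq -\sum_{\play \in \players} \sum_{\pure_\play \in \pures_\play} \log \strat_{\play\pure_\play},
\]
which is smooth on $\intstrats$. A direct matrix-determinant-lemma calculation shows $\sqrt{\det \effgmat_\play} = 1/\sqrt{\prod_{\pure_\play} \strat_{\play\pure_\play}}$ for each player, so $\energy$ coincides up to a factor of $2$ with the logarithm of the density of the \shah volume form \eqref{eq:vol} on the product simplex. Given that \cref{prop:preserve} already guarantees volume preservation, its logarithmic density is the natural first candidate for a conserved scalar.

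The central ingredient is a mixed-strategy reformulation of the harmonic identity \eqref{eq:harm}. Since both sides of \eqref{eq:harm} depend multilinearly on the pure profile $\pure$, taking expectations with respect to the product distribution $\prod_\play \strat_\play$ produces
\[
\sum_{\play \in \players} \sum_{\pure_\play \in \pures_\play} \pay_\play(\pure_\play; \strat_{-\play}) = \sum_{\play \in \players} \abs{\pures_\play} \, \pay_\play(\strat) \qquad \text{for all } \strat \in \strats.
\]
This converts the combinatorial characterization of harmonic games into a form directly usable in the calculus of \eqref{eq:RD}.

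From there, the rest is a direct computation. Along \eqref{eq:RD} one has
\[
\dot\energy = -\sum_{\play, \pure_\play} \frac{\dot\strat_{\play\pure_\play}}{\strat_{\play\pure_\play}} = -\sum_{\play, \pure_\play} \pay_\play(\pure_\play; \strat_{-\play}) + \sum_{\play} \abs{\pures_\play} \pay_\play(\strat),
\]
which vanishes by the mixed harmonic identity above. Since $\intstrats$ is forward-invariant under \eqref{eq:RD}, integrating in time gives $\energy(\strat(\time)) = \energy(\strat(\tstart))$ for every initial condition in $\intstrats$, as required.

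The hard part here is not the verification but the identification of the right $\energy$. Incompressibility only guarantees volume preservation (\cref{prop:preserve}); volume preservation and the existence of a conserved quantity are in general independent properties of a flow, so some extra insight is needed. What saves the day is the sharp match between the multilinear structure of the harmonic identity and the logarithmic form of the \shah volume density, which conspire to produce a first integral that is, in essence, forced by the underlying Riemannian geometry of \eqref{eq:EW}/\eqref{eq:RD}.
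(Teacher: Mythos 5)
Your proof is correct and follows essentially the same route as the paper: the paper's constant of motion is $\energy(\strat)=\sum_{\play}\nPures_{\play}\KL_{\b_{\play}}(\strat_{\play})=-\sum_{\play}\sum_{\pure_{\play}}\log\strat_{\play\pure_{\play}}+\text{const}$, identical to yours up to an additive constant, and the same computation $\dot\energy=-\sum_{\play,\pure_{\play}}\bracks{\payfield_{\play\pure_{\play}}(\strat)-\pay_{\play}(\strat)}$ is made to vanish. The only cosmetic difference is that the paper identifies this quantity with $-2\diver\repfield$ and invokes incompressibility directly, whereas you identify it with the multilinear extension of the harmonic defect via \cref{thm:harmonic} \textendash\ but these are the same object by the paper's own proof of that theorem.
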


This result is surprising because it ties together two drastically different \textendash\ and, to a certain extent, distinctly independent \textendash\ properties:
volume preservation on the one hand, and the existence of conserved quantities on the other.
In the context of learning under \eqref{eq:EW}\,/\,\eqref{eq:RD}, the existence of constants of motion has only been established for very special classes of games, namely two-player zero-sum games with an interior equilibrium \citep[p.~75]{HS98}, positive affine transformations or polymatrix/network versions of the above
\citep{MPP18,
nagarajanChaosOrderSymmetry2020,
PS14},
and certain other games with a min-max structure.
In this regard, \cref{thm:constant} serves to identify a much wider class of $\nPlayers$-player games, not necessarily with a min-max structure, where the no-regret dynamics \eqref{eq:EW} are conservative.


\begin{figure}[tbp]
\centering
\footnotesize
\includegraphics[height=32ex]{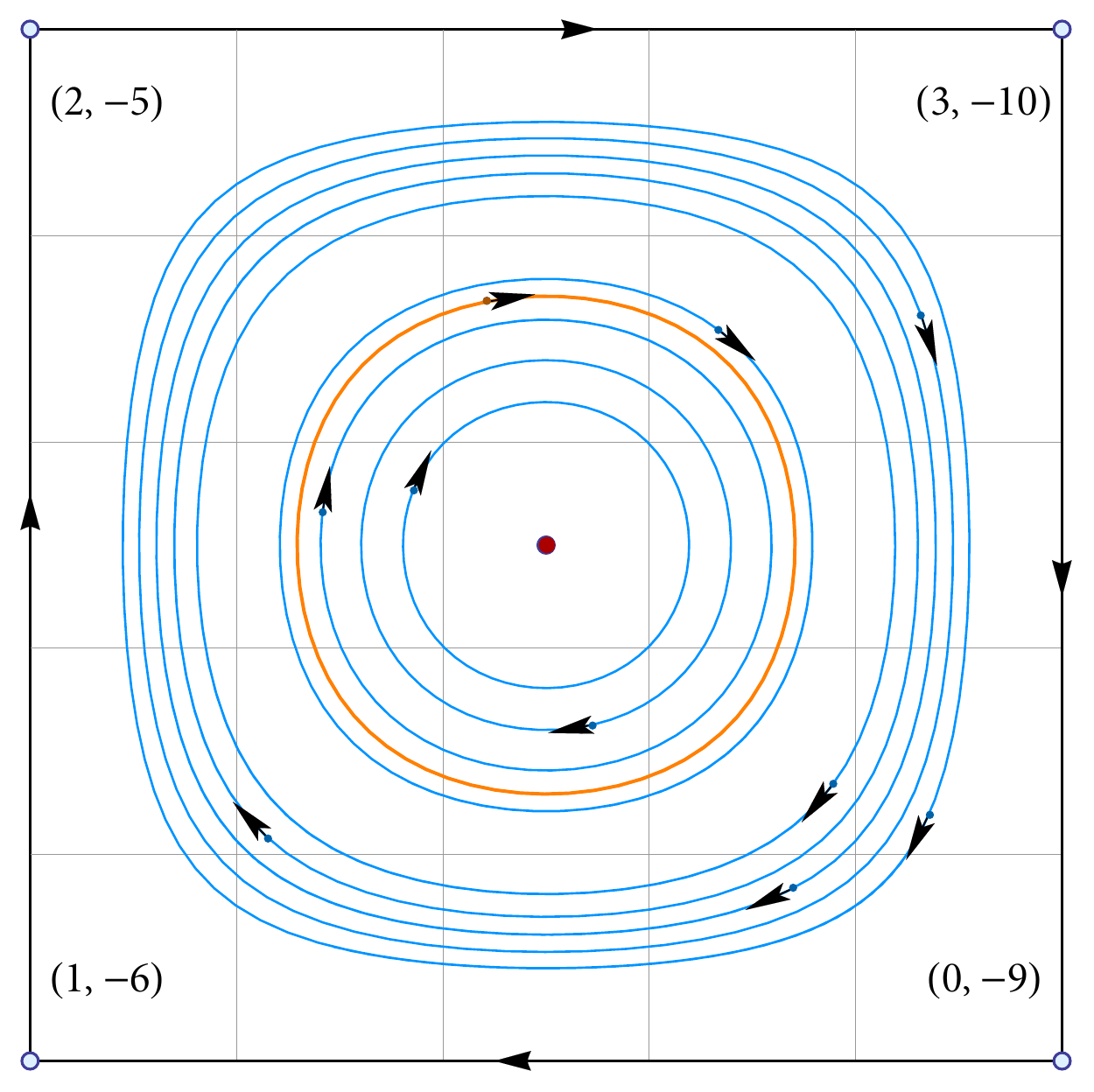}
\quad
\includegraphics[height=32ex]{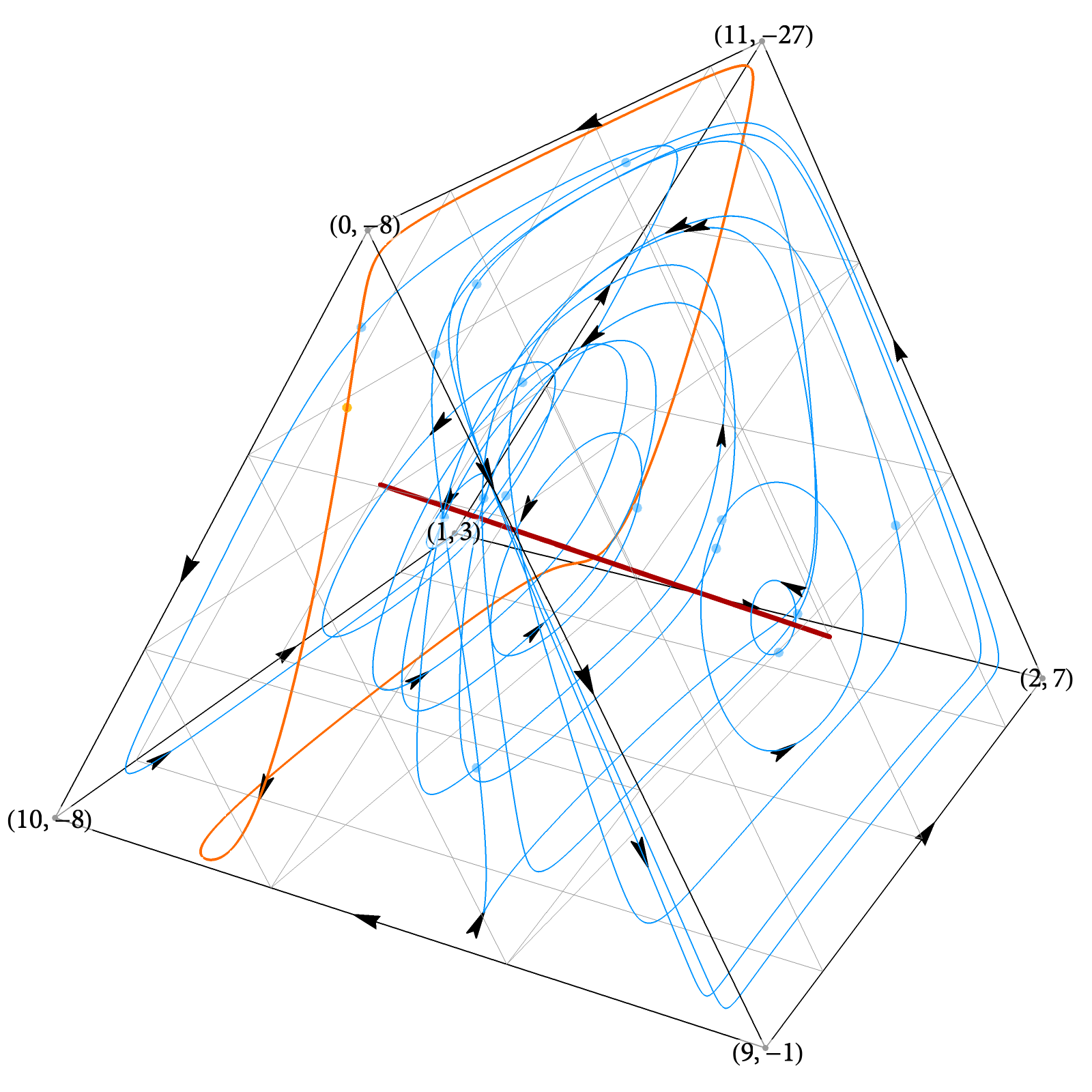}
\quad
\includegraphics[height=32ex]{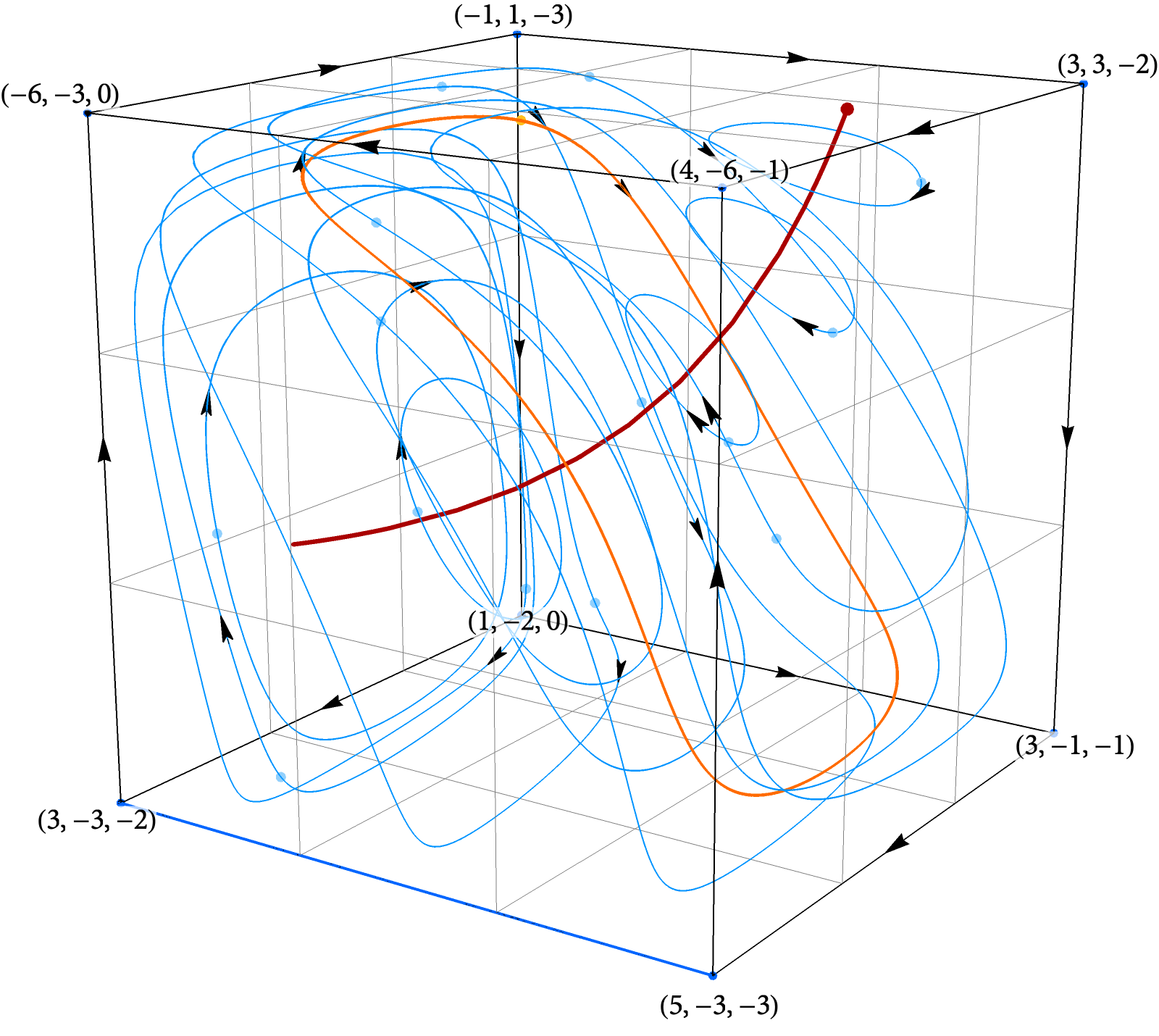}
\caption{The evolution of \eqref{eq:EW} in three randomly generated harmonic games with random initial conditions
(left: $2\times2$;
middle: $3\times2$;
right: $2\times2\times2$).
In all cases, the dynamics are Poincaré recurrent and cycle the game's \aclp{NE} (depicted in red). None of these games is zero-sum; the left and right games are strategically equivalent to a \ac{ZSG}, while the middle is not \textendash\ \cf \cref{rem:harmonic-vs-zero-sum}.
For visual clarity, we have highlighted a randomly selected orbit in each case, and the arrows indicate the direction in which orbits are traversed. In \cref{app:harmonic-potential-trajectories} we include a series of trajectories of \eqref{eq:EW}/\eqref{eq:RD} for a convex combination of potential and a harmonic game, showing how Poincaré recurrence breaks down as the relative magnitude of the potential component increases.}
\label{fig:portraits}
\end{figure}


A concrete consequence of the above is that, in view of \cref{thm:constant}, the trajectories of \eqref{eq:EW} in incrompressible games are constrained to move on the level sets of a certain function.
In fact, as we show in \cref{app:poincare-recurrence-primal-dual}, this function is convex, so its level sets are concentric topological spheres (as boundaries of convex sets, namely the function's sublevel sets).
In turn, this means that the phase space of the dynamics foliates into an ensemble of spheres, each of which constrains the evolution of \eqref{eq:EW}.
In a certain sense, this is the closest that one can get to proving periodic behavior in general dynamics \textendash\ and, in fact, by the Poincaré-Bendixson theorem, it is easy to see that the dynamics are periodic when \eqref{eq:RD} has effective dimension $2$ or $3$ (so in the case of $2\times2$, $2\times3$ and $2\times2\times2$ games, \cf \cref{fig:portraits}).

This brings us to our final result on the long-run behavior of the dynamics \eqref{eq:EW}\,/\,\eqref{eq:RD}:
Even when periodicity fails, it only fails by an arbitrarily small amount.

\begin{restatable}{theorem}{ThIncompressibleRecurrent}
\label{thm:recurrent}
If $\fingame$ is harmonic, the dynamics \eqref{eq:EW}\,/\,\eqref{eq:RD} are Poincaré recurrent.
Specifically, for almost every initialization $\stratof{\tstart}\in\intstrats$, the induced trajectory $\stratof{\time}$ returns arbitrarily close to $\stratof{\tstart}$ infinitely often:
there exists an increasing sequence of times $\curr[\time] \uparrow \infty$ such that $\stratof{\curr[\time]} \to \stratof{\tstart}$.
\end{restatable}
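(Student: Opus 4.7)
The plan is to invoke Poincaré's recurrence theorem from measure-theoretic ergodic theory, combining the two structural properties already established for harmonic games: \cref{prop:preserve} (the \shah volume form is flow-invariant) and \cref{thm:constant} (existence of a conserved quantity~$\energy$).

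First I would set up the invariant measure space. Since $\fingame$ is harmonic, \cref{thm:harmonic} makes it incompressible, so the flow $\flowt$ of \eqref{eq:RD} preserves the \shah volume $\vol$ on the open simplex $\intstrats$ and preserves the value of $\energy$. The phase space therefore foliates into the flow-invariant level sets $\sub_{c} \defeq \setdef{\strat\in\intstrats}{\energy(\strat) = c}$. The key geometric fact I would need to verify is that each $\sub_{c}$ is \emph{compact} as a subset of $\intstrats$; equivalently, that $\energy$ is coercive in the sense that $\energy(\strat) \to \infty$ whenever $\strat \to \bd\strats$. This should follow from the explicit form of $\energy$ produced in the proof of \cref{thm:constant}, which one expects to inherit the Kullback\textendash Leibler\,/\,cross-entropy structure familiar from the two-player zero-sum case.

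Next I would disintegrate $\vol$ along the fibers of $\energy$ by a coarea-type formula, writing
\begin{equation*}
\vol(\borel)
    = \int_{\R} \nu_{c}(\borel \cap \sub_{c}) \dd c
\end{equation*}
for every Borel set $\borel\subseteq\intstrats$, where $\nu_{c}$ is a finite Borel measure supported on the compact level set $\sub_{c}$ (finiteness being inherited from the smoothness of $\energy$ and the compactness of $\sub_{c}$). Because $\flowt$ preserves both $\vol$ and $\energy$, the pushforward $(\flowt)_{\ast}\nu_{c}$ provides another valid disintegration of $\vol$ with the same marginals, and essential uniqueness of disintegrations forces $(\flowt)_{\ast}\nu_{c} = \nu_{c}$ for almost every $c$.

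The final step is the classical Poincaré recurrence theorem: on each compact invariant set $\sub_{c}$ carrying the finite invariant measure $\nu_{c}$, the set of $\nu_{c}$-recurrent points has full $\nu_{c}$-measure, so for $\nu_{c}$-almost every $\stratof{\tstart}\in\sub_{c}$ one can extract a sequence $\curr[\time]\uparrow\infty$ with $\stratof{\curr[\time]}\to\stratof{\tstart}$; integrating over $c$ via Fubini then yields recurrence for $\vol$-almost every $\stratof{\tstart}\in\intstrats$, which is exactly the claim. The main obstacle I anticipate is precisely the coercivity of $\energy$: without it some level set could fail to be compact in $\intstrats$, orbits could drift arbitrarily close to $\bd\strats$ without returning, and the measure-theoretic argument would collapse. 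Establishing coercivity \textendash\ thereby confining the dynamics to compact invariant subsets of $\intstrats$ \textendash\ is thus the crux of converting volume preservation and a constant of motion into genuine Poincaré recurrence.
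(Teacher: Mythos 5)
Your argument is correct in outline and would go through: the constant of motion from \cref{thm:constant} is, up to an additive constant, $\energy(\strat) = -\sum_{\play}\sum_{\pure_{\play}}\log\strat_{\play\pure_{\play}}$, which is indeed coercive (it blows up as any coordinate tends to $0$), so its level sets are compact in $\intstrats$; the coarea disintegration of the \shah volume along the regular fibers of this strictly convex function yields finite flow-invariant conditional measures, and fiberwise Poincaré recurrence plus Fubini gives the claim. The paper, however, takes a shorter route that bypasses the constant of motion entirely: by \cref{thm:harmonic} and \cref{prop:preserve} the flow preserves the \shah volume, and the decisive observation is that the \emph{total} \shah volume of the strategy space is already finite \textendash\ Akin's square-root map $\strat_{\pure}\mapsto 2\sqrt{\strat_{\pure}}$ is an isometry of each factor $\intstrats_{\play}$ onto a portion of a Euclidean sphere of radius $2$, giving $\vol(\intstrats_{\play}) = \pi^{(\nEffs_{\play}+1)/2}\big/\Gamma\bigl(\tfrac{\nEffs_{\play}+1}{2}\bigr) < \infty$ \textendash\ so the measure-theoretic Poincaré theorem applies directly on all of $\intstrats$, with no need for compact invariant fibers or disintegration. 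Your approach is essentially the classical one used for zero-sum games, where the relevant volume form lives on an unbounded space and a coercive constant of motion is genuinely needed to confine orbits to sets of finite measure; it buys the finer statement that recurrence holds within each energy level, at the cost of the coarea and disintegration-uniqueness machinery, whereas the paper trades that refinement for a two-line reduction to finiteness of the ambient \shah volume.
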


\cref{thm:recurrent} (which we prove in \cref{app:incompressible}) shows that the behavior of \eqref{eq:EW}/\eqref{eq:RD} in harmonic games is orthogonal to their behavior in potential games:
in the latter, every orbit eventually converges to \acl{NE};
in the former, the system's orbits cycle back in an almost-periodic manner to (a neighborhood of) their starting points infinitely often (\cf \cref{fig:portraits}).
This provides a partial negative answer to an open question of \citet{CMOP11} regarding the convergence of the \acl{RD} in harmonic games, and shows that potential and harmonic games are orthogonal to each other also in the sense of learning.

\section{Concluding remarks}
\label{sec:discussion}

We find the equivalence between harmonic and incompressible games particularly intriguing as it links four otherwise distinct and independent notions:
\begin{enumerate*}
[(\itshape i\hspace*{.5pt}\upshape)]
\item
a standard class of game-theoretic learning schemes (which lead to no-regret, so players become more efficient over time);
\item
the existence of anti-aligned incentives (encoded by the notion of a harmonic game);
and,
through the surprising property of volume preservation,
\item
the existence of a constant of motion
and
\item
Poincaré recurrence (the prototypical manifestation of non-convergent, quasi-periodic behavior).
\end{enumerate*}
\Cref{thm:recurrent} in particular shows that the interplay between these notions is significantly more intricate than what the strong no-regret properties of \eqref{eq:EW} might suggest:
in harmonic games, the players' long-run behavior under the dynamics of \acl{EW} is Poincaré recurrent and fails to converge, even though the empirical distribution of play converges to the game's set of \aclp{CCE}.

Under this light, the decomposition of a game into a potential and a harmonic/incompressible component is strongly reminiscent of Conley's decomposition theorem \cite{Con78} which states that any dynamical system can be decomposed into an attracting, convergent part, and a chain-recurrent part.
Of course, Conley's theorem concerns a decomposition of the game's state space, not the flow itself;
nonetheless, this alignment between the dynamic and strategic components of a game hints at a much deeper connection which opens up many directions for further research.

One such direction concerns the general class of \acf{FTRL} dynamics \citep{SS11,SSS06}, of which \eqref{eq:EW}/\eqref{eq:RD} is a special case:
concretely, we conjecture here that Poincaré recurrence holds for harmonic games in the \emph{entire} class of \ac{FTRL} dynamics.
While close in spirit, the techniques presented in this paper do not extend to the class of \ac{FTRL} dynamics because the analogue of \cref{thm:harmonic} fails to hold for dynamics other than \eqref{eq:EW}/\eqref{eq:RD}, so there is no longer a clear link between incompressibility and harmonicity.
We leave this question open for the future.

\section*{Acknowledgments}
\begingroup
\small
%
%
This research was supported in part by 
the French National Research Agency (ANR) in the framework of
the PEPR IA FOUNDRY project (ANR-23-PEIA-0003),
the ``Investissements d'avenir'' program (ANR-15-IDEX-02),
the LabEx PERSYVAL (ANR-11-LABX-0025-01),
MIAI@Grenoble Alpes (ANR-19-P3IA-0003),
and
project MIS 5154714 of the National Recovery and Resilience Plan Greece 2.0 funded by the European Union under the NextGenerationEU Program.
\endgroup

\appendix
\setcounter{remark}{0}
\numberwithin{equation}{section}	
\numberwithin{lemma}{section}	
\numberwithin{proposition}{section}	
\numberwithin{theorem}{section}	
\numberwithin{corollary}{section}	

\section{Basic facts and definitions from Riemannian geometry}
\label{app:geometry}

Riemannian geometry, a cornerstone of differential geometry, provides a powerful framework for analyzing spaces endowed with a metric structure.
At its core lies the notion of a \define{Riemannian manifold}, a smooth manifold endowed with a smoothly varying inner product structure on its tangent spaces.
This structure allows for the definition of gradient directions of smooth functions, lengths of curves, angles between tangent vectors, and various notions of curvature, enabling the study and analysis of geometric properties of the manifold, and of dynamical systems defined thereon.
In what follows, we provide a quick dictionary of some of the basic notions that we use throughout our paper.

\subsection{Riemannian manifolds}
A \define{smooth manifold} $\mfld$ of dimension $\dims$ is a Hausdorff, second countable topological space equipped with a collection of \define{local charts} $(U_\alpha, \chartalt_\alpha)$, where each $U_\alpha$ is an open subset of $M$ and each $\chartalt_\alpha: U_\alpha \rightarrow \mathbb{R}^\dims$ is a homeomorphism onto an open subset of $\mathbb{R}^\dims$.
\footnote{The charts are required to satisfy also the compatibility condition that the \define{transition maps} $\chartalt_\alpha \circ \chartalt_\beta^{-1}: \chartalt_\beta(U_\alpha \cap U_\beta) \rightarrow \chartalt_\alpha(U_\alpha \cap U_\beta)$ are smooth whenever $U_\alpha \cap U_\beta \neq \varnothing$.} Points in the co-domain of each local chart are called \define{local coordinates}.

The \define{tangent space} $\tspace_\point \mfld$ of a smooth manifold $\mfld$ at a point $\point \in \mfld$ is the vector space of all derivations on the space of smooth functions defined on an open neighborhood of $\point$; the dimension of $\tspace_\point \mfld$ is the same as the dimension of $\mfld$.

Intuitively, a smooth manifold can be thought of as a topological space that \textit{locally} resembles $\R^{\dims}$; typical example of smooth manifolds include smooth surfaces in the Euclidean space. Building on this intuition, the tangent space $\tspace_\point \mfld$ of $\mfld$ a can be thought of as the space of all possible \quotes{directions} or \quotes{velocities} one can move in from the point $\point$ without leaving the surface. For instance, if $\mfld$ is the Earth's surface, $\tspace_\point \mfld$ at a point $\point$ would be the plane including vectors representing north, south, east, west \textendash\ but not upwards.

\begin{remark}
\label{rem:tangent-space-euclidean-manifold}
Another fundamental example of smooth manifold is the Euclidean space $\R^{\dims}$ itself, with the identity map as a chart. Its tangent space at any point $\point$ is an $\dims$-dimensional vector space, hence isomorphic to $\R^{\dims}$; in the following we will identify the tangent space $\tspace_{\point}\R^{\dims}$ with $\R^{\dims}$ itself.
\endenv
\end{remark}

A \define{Riemannian manifold} $(\mfld, \g)$ is a smooth manifold $\mfld$ together with a smoothly varying positive definite symmetric bilinear form $\g_\point : \tspace_\point \mfld \times \tspace_\point \mfld \rightarrow \R$ defined on each tangent space $\tspace_\point \mfld$ with $\point \in \mfld$. This form, called the \define{Riemannian metric}, assigns to each pair of tangent vectors $\tvec, \tvecalt $ at a point $\point$ a real number $\g_\point(\tvec, \tvecalt)$, satisfying:

\begin{enumerate}
    \item Smoothness: The map $\point \mapsto \g_\point$ is smooth.
    \item Positive definiteness: For all $\point \in \mfld$ and $\tvec \in \tspace_\point \mfld$, $\g_\point(\tvec, \tvec) \geq 0$ with equality iff $\tvec = 0$.
    \item Symmetry: For all $\point \in \mfld$ and $\tvec, \tvecalt \in \tspace_\point \mfld$, $\g_\point(\tvec, \tvecalt) = \g_\point(\tvecalt,  \tvec)$.
\end{enumerate}
Throughout this work use equivalently the notations $\g_{\point}(\tvec, \tvecalt) \equiv \inner{\tvec}{\tvecalt}_{\point} \equiv \tvec^{T} \cdot \g_{\point} \cdot \tvec$ for all $\tvec, \alt\tvec \in\tspace_{\point}\mfld$, where in the third expression $\g_{\point}$ is a $\dim\mfld \times \dim\mfld$-dimensional matrix and $\cdot$ denotes matrix multiplication.

Finally, a \define{vector field $\vfield$ on $\mfld$} is a smooth map $\point \mapsto \vfield\of\point \in \tspace_{\point}\mfld$ that for all points $\point$ on the manifold gives a vector $\tvec = \vfield\of\point$ in the tangent space to $\mfld$ at $\point$. A vector field $\vfield$ on $\mfld$ can be written locally as a linear combination $\vfield = \sum_{\coord = 1}^{\dim\mfld} \vfield^{\coord} \bvec_{\coord}$, where $\vfield^{\coord}\from\mfld\to\R$ is a smooth function on the manifold and $\bvec_{\coord}$ is a \define{basis vector field}, \ie $\bvec_{\coord}\of\point$ is the $\coord$-th element in a basis of $\tspace_{\point}\mfld$, for all $\point \in \mfld$ and all $\coord = 1, \dots, \dim\mfld$. An ordered collection $\{ \bvec_{\coord} \}_{\coord = 1, \dots, \dim\mfld}$ of such basis vector fields is called \define{frame bundle}.

\subsection{Riemannian gradients}
Given a Riemannian manifold $(\mfld, \g)$ and a smooth function $ \fn \from \mfld \to \R$, a Riemannian metric allows to define a special vector field on $\mfld$:
\begin{definition}
The \define{gradient} of $\fn$ is the vector field $\grad\fn$ on $\mfld$ defined by
\begin{equation}
\inner{\grad\fn\of{\point}}{\tvec}_{\point}
	= \dir\fn(\point;\tvec) \in \R \eqcomma
\end{equation}
for all points $\point \in \mfld$ and all tangent vectors $\tvec \in \tspace_{\point}\mfld$.
\end{definition}
The components of $\grad\fn$ can be expressed in any local chart as follows: let $\vfield = \bvec_{\coord}$ be a basis vector field with $\coord \in \setof{1, \dots, \dim\mfld}$, and denote by $ \bracks{\d\fn\of\point}_{\coord} 	\defeq \dir\fn(\point; \bvec_{\coord}\of\point)$ the directional derivative of $\fn$ at $\point$ in the direction of $\bvec_{\coord}\of\point$. 
\begin{lemma}[Components of gradient field]
\label{lemma:sharp-components-inverse-metric}
For all $\point \in \mfld$, the components of $\grad\fn$ are given by the matrix multiplication between the inverse matrix of the metric $\g^{-1}$, and the array of basis directional derivatives $\d\fn$:
\begin{equation}
\label{eq:grad-coords}
\grad\fn(\point) = \g^{-1}(\point) \d\fn(\point) \eqdot
\end{equation}
\end{lemma}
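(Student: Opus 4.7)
The plan is to read the defining relation $\inner{\grad\fn(\point)}{\tvec}_{\point} = \dir\fn(\point;\tvec)$ componentwise in the chosen frame bundle $\setof{\bvec_{\coord}}_{\coord=1}^{\dim\mfld}$, which reduces the claim to inverting a finite-dimensional positive-definite matrix.

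First, I will expand the unknown as $\grad\fn(\point) = \sum_{\coord} \parens{\grad\fn}^{\coord}(\point) \, \bvec_{\coord}(\point)$ in the chosen basis. Since both sides of the defining relation are linear in $\tvec \in \tspace_{\point}\mfld$, it suffices to test the identity against $\tvec = \bvec_{\coordalt}(\point)$ for each $\coordalt = 1,\dotsc,\dim\mfld$. Using bilinearity of $\inner{\argdot}{\argdot}_{\point}$ together with the definition of the metric tensor $\g_{\coord\coordalt}(\point) = \inner{\bvec_{\coord}(\point)}{\bvec_{\coordalt}(\point)}_{\point}$, the left-hand side becomes $\sum_{\coord} \g_{\coord\coordalt}(\point) \, \parens{\grad\fn}^{\coord}(\point)$, whereas the right-hand side is by definition $\bracks{\d\fn(\point)}_{\coordalt}$. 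Equating these for every $\coordalt$ yields the linear system $\g(\point) \cdot \grad\fn(\point) = \d\fn(\point)$, where the gradient components are stacked into a column vector.

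To conclude, the positive-definiteness axiom of the Riemannian metric guarantees that the matrix $\g(\point)$ is invertible at every $\point \in \mfld$, so I can multiply both sides by $\g^{-1}(\point)$ to obtain the announced identity $\grad\fn(\point) = \g^{-1}(\point) \d\fn(\point)$.

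There is no substantive obstacle here; the statement is essentially a coordinate restatement of the definition of the Riemannian gradient. The only point worth flagging is that verifying the identity on basis vectors suffices to establish it for all tangent vectors, which is immediate from linearity of both sides in $\tvec$ together with the fact that $\setof{\bvec_{\coord}(\point)}_{\coord=1}^{\dim\mfld}$ spans $\tspace_{\point}\mfld$.
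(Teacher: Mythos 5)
Your proof is correct and follows essentially the same route as the paper: expand the gradient in the frame, pair against basis vectors to obtain the linear system $\g(\point)\,\grad\fn(\point) = \d\fn(\point)$, and invert using positive-definiteness of the metric. The only cosmetic difference is that you explicitly note that testing against basis vectors suffices by linearity, which the paper leaves implicit.
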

\begin{proof}

Write $\gfield_{\coord}(\point) \defeq \bracks{\grad\fn(\point)}_{\coord}$. By symmetry of $\g$,
\[
\inner{\grad\fn\of{\point}}{\bvec_{\coord}\of{\point}}_{\point}
	= \sum_{\coordalt, \coordaltalt = 1}^{\dim{\mfld}} \g_{\coordalt \coordaltalt}(\point) \, \gfield_{\coordalt}(\point) \, \delta_{\coord \coordaltalt} 
	 = \sum_{\coordalt = 1}^{\dim{\mfld}} \g_{\coordalt \coord}(\point) \, \gfield_{\coordalt}(\point)
	 = \bracks{\g(\point)\gfield(\point)}_{\coord} \eqdot
\]
By definition of gradient this expression is equal to $ \bracks{\d\fn\of\point}_{\coord}$, so we get the matrix equation $\g(\point)\gfield(\point) = \d\fn\of\point$ for all $\point \in \mfld$. Since $\g\of\point$ is positive-definite for all $\point \in \mfld$ we can multiply from the left both sides of this equations by the inverse matrix $\g^{-1}(\point)$ to get~\eqref{eq:grad-coords}.
\end{proof}

\begin{remark}[Euclidean \vs non-Euclidean gradient]
\label{rem:euclidean-vs-riemannian-gradient}
The Euclidean metric in $\vecspace$ is represented by the identity matrix $\g_{\coord\coordalt}(\point) = \delta_{\coord\coordalt}$, from which the familiar result that the gradient of a function \textit{is} the array of basis directional derivatives, or \define{differential}, of $\fn$: $\grad\fn(\point) = \bracks{\pd\fn\of{\point; \bvec_{\coord}}}_{\coord = 1}^{\vdim} = \d\fn(\point)$. In a non-Euclidean setting, the difference between the gradient and the differential of a function is given by the (inverse) metric tensor. \endenv
\end{remark}

\paragraph{Gradients give directions of maximal rate of change}
Given a smooth function $\fn \from \mfld \to \R$ on a Riemannian manifold $(\mfld, \g)$, the gradient of $\fn$ at $\point$ gives the direction of maximal rate of change of $\fn$ at $\point$; we make this precise with the following lemma. 

\begin{lemma}
\label{lemma:gradient-direction-maximal-change}
On a Riemannian manifold $(\mfld, \g)$ let $\dirspace_{\point} \subset \tspace_{\point}\mfld \defeq \setdef{\tvec \in \tspace_{\point}\mfld}{ \norm{\tvec}_{\point} = 1}$ be the set of \define{directions at $\point$}, \ie the set of tangent vectors at $\point$ of unitary norm, where $\norm{\tvec}_{\point} \defeq \sqrt{\inner{\tvec}{\tvec}_{\point}}$ is the the norm induced by the Riemannian inner product. Then for all smooth functions $\fn \from \mfld \to \R$ and for all $\point \in \mfld$,
\begin{equation}
\frac{\grad\fn(\point)}{\norm{\grad\fn(\point)}_{\point}} = \argmax_{\tvec \in \dirspace_{\point}}\setof{\pd\fn(\point; \tvec)} \eqdot
\end{equation}
\end{lemma}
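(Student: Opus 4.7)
The plan is to reduce the statement to a direct application of the Cauchy–Schwarz inequality on the inner product $\inner{\argdot}{\argdot}_{\point}$. By the defining relation of the Riemannian gradient established earlier, for every $\tvec \in \tspace_{\point}\mfld$ we have
\begin{equation*}
\pd\fn(\point;\tvec) = \inner{\grad\fn(\point)}{\tvec}_{\point}\eqdot
\end{equation*}
Since $\inner{\argdot}{\argdot}_{\point}$ is a genuine inner product (by definition of the Riemannian metric), Cauchy–Schwarz yields the two–sided bound $\abs{\pd\fn(\point;\tvec)} \leq \norm{\grad\fn(\point)}_{\point}\norm{\tvec}_{\point}$, with equality if and only if $\tvec$ is collinear with $\grad\fn(\point)$.

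The second step is to restrict to $\tvec \in \dirspace_{\point}$, \ie to tangent vectors with $\norm{\tvec}_{\point} = 1$. The inequality becomes $\pd\fn(\point;\tvec) \leq \norm{\grad\fn(\point)}_{\point}$, and since we are maximizing the \emph{signed} quantity $\pd\fn(\point;\tvec)$ (not its absolute value), the equality case is achieved precisely by the \emph{positive} normalization of $\grad\fn(\point)$, namely $\tvec = \grad\fn(\point)/\norm{\grad\fn(\point)}_{\point}$. Plugging this back into the inner product verifies that the maximum is indeed attained, giving the desired identification of the argmax.

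There is no real obstacle here; the only subtlety is a mild caveat at critical points of $\fn$, where $\grad\fn(\point) = 0$ and the directional derivative vanishes identically on $\dirspace_{\point}$, making every direction a (non-strict) maximizer and the normalization in the statement ill-defined. The lemma should therefore be read as holding at non-critical points of $\fn$, which is the only case in which the geometric interpretation of ``direction of steepest ascent'' is meaningful and is the only case invoked in the subsequent application to the replicator dynamics via \cref{eq:steepest-Shah}.
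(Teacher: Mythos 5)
Your proof is correct and follows essentially the same route as the paper's: express the directional derivative as $\inner{\grad\fn(\point)}{\tvec}_{\point}$ via the defining relation of the gradient, then apply Cauchy\textendash Schwarz with its equality case to identify the maximizing unit direction. The remark about critical points is a sensible clarification not made explicit in the paper, but it does not change the argument.
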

\begin{proof}
By definition of gradient, $\pd\fn(\point; \tvec) = \inner{\grad\fn\of{\point}}{z}_{\point}$. By Cauchy-Schwarz inequality, we have $\abs{\inner{\tvec}{\alt\tvec}_{\point}} \leq  \norm{\tvec}_{\point} \, \norm{\alt\tvec}_{\point}$ for all $\tvec,\alt\tvec \in \tspace_{\point}\mfld$ with equality iff $\tvec \propto \alt\tvec$, thus $\pd\fn(\point; \tvec)$ is maximized in the direction $\tvec \propto \grad\fn(\point)$.
\end{proof}

\subsection{Riemannian divergence}

\label{app:div-cod-product}
In vector calculus, the \define{divergence} is a differential operator mapping a vector field $\vfield$ to a function:
\begin{equation}
\label{eq:euclidean-divergence}
\div{\vfield} = \sum_{\coord = 1}^{\dim\mfld} \frac{\pd}{\pd\point_{\coord}}   \vfield^{\coord} \quad \text{[Euclidean]} \eqdot
\end{equation}
This operator captures how much the field is locally spreading out or converging at a given point: loosely speaking, if the divergence is positive in the neighborhood of a point, the vector field is locally spreading out (\eg the outward-radial field $\vfield(x,y) = (x,y)$); if it is negative, the vector field is locally converging (\eg the inward-radial field  $\vfield(x,y) = (-x,-y)$); and if it is zero, the field is neither locally spreading out nor converging (\eg the hyperbolic field  $\vfield(x,y) = (x,-y)$ or the spherical field $\vfield(x,y) = (y,-x)$).

Let $\vfields$ and  $\funcs$ denote respectively the space of vector fields and smooth functions on a Riemannian manifold $(\mfld, \g)$. To generalize the divergence operator to this setting one must take into account how the \textit{volume element} of the metric $\g$ changes from point to point, and this in turn depends on the \textit{determinant} $\det{\g}$ \textendash\ \cf \cref{eq:riemannian-volume}. With this idea in mind, we give the following generalization of the Euclidean divergence operator to a Riemannian setting~\citep{leeIntroductionSmoothManifolds2012, jostRiemannianGeometryGeometric2017, leeIntroductionRiemannianManifolds2018}.
\begin{definition}
\label{def:riemannian-divergence}
The \define{divergence operator} $\div\from\vfields\to\funcs$ on a Riemannian manifold $(\mfld, \g)$ is defined by
\begin{equation}
\label{eq:riemannian-divergence}
\div{\vfield} =  \frac{1}{\sqrt{\detg}}\sum_{\coord = 1}^{\dim\mfld} \frac{\pd}{\pd\point_{\coord}} \ll \sqrt{\detg} \vfield^{\coord}\rr \quad \text{for all } \vfield \in \vfields \eqcomma
\end{equation}
where $\detg \defeq \det{\g}$.
\endenv
\end{definition}

\begin{remark}
If $(\mfld, \g)$ is the standard Euclidean space equipped with the Euclidean metric then the constant term $\sqrt\detg$ is not affected by the partial derivatives and cancels out with $(\sqrt\detg)^{-1}$, giving back the familiar \cref{eq:euclidean-divergence}.
\endenv
\end{remark}

Note that \cref{eq:riemannian-divergence} can be rewritten by product rule as
\begin{equation}
\label{eq:riemannian-divergence-two-terms}
\div{\vfield} =  \sum_{\coord = 1}^{\dim\mfld} \ll \frac{\de_{\coord}\sqrt{\detg}}{\sqrt{\detg}} \rr \vfield^{\coord} + \sum_{\coord = 1}^{\dim\mfld} \de_{\coord}\vfield^{\coord} \eqcomma
\end{equation}
where $\pd_{\coord}$ is a shorthand for $\frac{\pd}{\pd\point_{\coord}}$.
\begin{example}[Divergence on the sphere]
The determinant of the Euclidean metric in $\mathbb{R}^3$ induced on the unit sphere in standard spherical coordinates fulfills $\sqrt{\det{g}} = \sin\theta$, so by \cref{eq:riemannian-divergence-two-terms} the divergence of the vector field $\vfield(\theta, \phi)= (\vfield^\theta, \vfield^\phi)$ is
\[
\div{\vfield} = \partial_\theta \vfield^\theta + \partial_\phi \vfield^\phi + \frac{\vfield^\theta}{\tan{\theta}} \eqdot
\]

In particular the divergence of a longitudinal vector field $\vfield = (1,0)$ is $\frac{1}{\tan\theta}$, which diverges to infinity a the north pole, is zero at the equator, and diverges to minus infinity at the south pole. This captures the fact that a small set of initial conditions starting close to the north pole and evolving along flow of $\vfield$ quickly expands moving towards the equator; the rate of expansion decreases until the equator is crossed, after which the flow lines converge at increasingly higher rate towards the south pole. 

Conversely, the divergence of a latitudinal vector field $\vfield = (0,1)$ is $0$, capturing the fact that the volume of a small set of initial conditions remains constant along the flowlines parallel to the equator.
\endenv
\end{example}
\paragraph{Riemannian divergence on product manifold}
In this section we show that the divergence operators on two Riemannian manifolds naturally induce a divergence operator on the product manifold.

Let $(\mfld, \g_{\mfld})$ and $(\alt\mfld, \g_{\alt\mfld})$ be Riemannian manifolds with coordinates $\point$ and $\pointalt$ respectively, with $\g_{\mfld}$ represented by the matrix $(\g_{\mfld})_{ij}$ for $i, j  = 1, \dots,  \dim\mfld$, and $\g_{\alt\mfld}$ represented by the matrix $(\g_{\alt\mfld})_{hk}$ for $h, k  = 1, \dots,  \dim\alt\mfld$. Consider the \define{product manifold} $P = \mfld \times \alt\mfld$ with $\dim{P} = \dim{\mfld} + \dim{\alt\mfld}$, coordinates $(\point,\pointalt)$, and metric $\g(\point,\pointalt) \defeq \g_{\mfld}(\point) + \g_{\alt\mfld}(\pointalt)$. The matrix representing the metric $\g$ is the block diagonal matrix with the matrices of $\g_{\mfld}$ and $\g_{\alt\mfld}$ on the diagonal, so $\det{\g} = \det{\g_{\mfld}} \det{\g_{\alt\mfld}}$. In the following we denote $\detg \defeq \det{\g} = \detg_{\mfld}\detg_{\alt\mfld}$.

A vector field $\vfield$ on $\mfld$ locally written as $\vfield(\point) = \insum_{i = 1}^{\dim\mfld} \vfield^{i}(\point) \bvec_{\point_{i}}$ can be naturally seen as a vector field $\vfield$ on $P$, given by $\vfield(\point, \alt\point) = \insum_{A = 1}^{\dim{P}} \vfield^{A}(\point, \alt\point) \bvec_{A} = \insum_{i = 1}^{\dim{\mfld}} \vfield^{i}(\point) \bvec_{\point_{i}} + \insum_{j = 1}^{\dim{\alt\mfld}} 0 \,  \bvec_{\pointalt_{j}} $. With this identification in mind, two vector fields $\vfield$ on $\mfld$ and $\alt\vfield$ on $\alt\mfld$ naturally give a vector field $Z$ on $P$ by $Z(\point,\pointalt) \defeq \vfield(\point)+\alt\vfield(\pointalt) = \insum_{i = 1}^{\dim\mfld} \vfield^{i}(\point) \bvec_{\point_{i}} + \insum_{j = 1}^{\dim\alt\mfld} \alt\vfield^{j}(\pointalt)\bvec_{\pointalt_{j}} $.

\begin{lemma}[The divergence operator does not mix coordinates]
\label{lemma:div-does-not-mix}
Let $\vfield$ be a vector field on a Riemannian manifold $(\mfld,\g)$ and $\alt\vfield$ a vector field on a Riemannian manifold $(\alt\mfld,\alt\g)$. Then the vector field $Z = \vfield + \alt\vfield$ on the product manifold $(P = \mfld \times \alt\mfld, \g = \g_{\mfld} + \g_{\alt\mfld})$ fulfills
\begin{equation}
\div_{\g}{Z} = \div_{\g_{\mfld}}{\vfield} + \div_{\g_{\alt\mfld}}{\alt\vfield} \eqdot
\end{equation}
\end{lemma}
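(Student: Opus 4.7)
The plan is to apply the coordinate definition of the Riemannian divergence from \cref{def:riemannian-divergence} directly and then exploit the product structure of both the metric and the vector field to split the resulting sum. This is essentially a bookkeeping argument; no conceptual obstacle is expected.

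First I would fix local coordinates $(\point,\pointalt) = (\point_1,\dots,\point_{\dim\mfld},\pointalt_1,\dots,\pointalt_{\dim\alt\mfld})$ on $P = \mfld \times \alt\mfld$ and label indices $A,B = 1,\dots,\dim P$, reserving Latin indices $i,j$ for the $\mfld$-block and $h,k$ for the $\alt\mfld$-block. By construction the metric matrix $\g$ is block diagonal with blocks $\g_\mfld(\point)$ and $\g_{\alt\mfld}(\pointalt)$, so
\begin{equation*}
\sqrt{\det\g(\point,\pointalt)} \;=\; \sqrt{\det\g_\mfld(\point)}\,\sqrt{\det\g_{\alt\mfld}(\pointalt)}.
\end{equation*}
Moreover, by the definition of $Z$, its coordinate components are $Z^i(\point,\pointalt) = \vfield^i(\point)$ for $i$ in the $\mfld$-block and $Z^h(\point,\pointalt) = \alt\vfield^h(\pointalt)$ for $h$ in the $\alt\mfld$-block.

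Second I would plug these facts into \eqref{eq:riemannian-divergence}, splitting the sum over $A$ into the two blocks:
\begin{equation*}
\div_{\g} Z \;=\; \frac{1}{\sqrt{\det\g_\mfld}\sqrt{\det\g_{\alt\mfld}}}\left[\sum_{i=1}^{\dim\mfld} \frac{\pd}{\pd\point_i}\!\left(\sqrt{\det\g_\mfld}\sqrt{\det\g_{\alt\mfld}}\,\vfield^i(\point)\right) + \sum_{h=1}^{\dim\alt\mfld} \frac{\pd}{\pd\pointalt_h}\!\left(\sqrt{\det\g_\mfld}\sqrt{\det\g_{\alt\mfld}}\,\alt\vfield^h(\pointalt)\right)\right].
\end{equation*}
In the first sum, the factor $\sqrt{\det\g_{\alt\mfld}(\pointalt)}$ does not depend on $\point$, so it pulls out of $\pd/\pd\point_i$; symmetrically, $\sqrt{\det\g_\mfld(\point)}$ pulls out of $\pd/\pd\pointalt_h$ in the second sum.

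Third, I would cancel these factors against the prefactor $1/(\sqrt{\det\g_\mfld}\sqrt{\det\g_{\alt\mfld}})$, obtaining
\begin{equation*}
\div_{\g} Z \;=\; \frac{1}{\sqrt{\det\g_\mfld}}\sum_{i=1}^{\dim\mfld}\frac{\pd}{\pd\point_i}\!\left(\sqrt{\det\g_\mfld}\,\vfield^i\right) + \frac{1}{\sqrt{\det\g_{\alt\mfld}}}\sum_{h=1}^{\dim\alt\mfld}\frac{\pd}{\pd\pointalt_h}\!\left(\sqrt{\det\g_{\alt\mfld}}\,\alt\vfield^h\right),
\end{equation*}
and recognize each term as $\div_{\g_\mfld}\vfield$ and $\div_{\g_{\alt\mfld}}\alt\vfield$ respectively, yielding the claimed identity. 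The only step requiring any attention is justifying the factorization of $\sqrt{\det\g}$ and the independence of $\vfield^i$ from $\pointalt$ (resp. $\alt\vfield^h$ from $\point$), both of which follow immediately from the hypotheses and the block-diagonal form of the product metric.
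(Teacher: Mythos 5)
Your proof is correct and follows essentially the same route as the paper's: both rest on the block-diagonal factorization $\det\g = \det\g_{\mfld}\det\g_{\alt\mfld}$ and the fact that each block of $Z$ depends only on its own coordinates, so the foreign determinant factor is constant under the relevant partial derivatives and cancels. The only cosmetic difference is that the paper first invokes linearity of the divergence to treat $\vfield$ and $\alt\vfield$ separately, whereas you split the index sum directly; the computation is identical.
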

\begin{proof} The $\div{}$ operator acts linearly on vector fields~\citep{leeIntroductionSmoothManifolds2012}, so
\[
\div_{\g}{Z} = \div_{\g}{\vfield} + \div_{\g}{\alt\vfield} \quad \text{by linearity of } \div{} \eqdot
\]
So if we show that $\div_{\g}{\vfield} = \div_{\g_{\mfld}}{\vfield}$ we are done. This is true since the coordinates of the two manifolds remain decoupled under the Cartesian product operation, so they are acted upon only by derivatives of the corresponding type:
\[
\begin{split}
\div_{\g}{\vfield} & =  \sum_{A = 1}^{\dim{P}} \ll \frac{\de_{A} \sqrt{\detg}}{\sqrt{\detg}} \rr \vfield^{A} + \sum_{A = 1}^{\dim{P}} \de_{A}\vfield^{A}   \\
& = \sum_{i = 1}^{ \dim{\mfld}}  \ll \frac{\de_{\point_{i}} \ll\sqrt{\detg_{\mfld}}\sqrt{\detg_{\alt\mfld}}\rr}{\sqrt{\detg_{\mfld}}\sqrt{\detg_{\alt\mfld}}} \rr \vfield^{i} + \sum_{i = 1}^{\dim{\mfld}} \de_{\point_{i}}\vfield^{i} + \sum_{j = 1}^{\dim{\alt\mfld}}0 \eqdot  \\
\end{split}
\]
Now $\de_{\point_{i}} \ll\sqrt{\detg_{\mfld}}\sqrt{\detg_{\alt\mfld}}\rr = \de_{\point_{i}} \ll\sqrt{\detg_{\mfld}}\rr\sqrt{\detg_{\alt\mfld}}$, so the terms containing $\sqrt{\detg_{\alt\mfld}}$ simplify:
\[
\div_{\g}{\vfield}  = \sum_{i = 1}^{ \dim{\mfld}}  \ll \frac{\de_{\point_{i}} \ll\sqrt{\detg_{\mfld}}\rr}{\sqrt{\detg_{\mfld}}} \rr \vfield^{i} + \sum_{i = 1}^{\dim{\mfld}} \de_{\point_{i}}\vfield^{i}   = \div_{\g_{\mfld}}{\vfield} \eqdot \qedhere
\]
\end{proof}

\subsection{Flows on manifolds}
\label{sec:flows-manifolds}

We recall here a few concepts from the theory of dynamical systems on manifolds. We refer the reader to \citet[Ch.~9,16]{leeIntroductionSmoothManifolds2012} for the general theory and to \citet{FVGL+20} for a concise treatment in the context of no-regret learning. For a detailed account of the theory of ordinary differential equations and deterministic dynamical systems in continuous time in the context of multipopulation evolutionary dynamics we refer the reader to the excellent introduction by \citet[Ch. 6]{Wei95}, and in particular to Section 6.6 for a general discussion on the Euclidean version of Liouville's theorem, and to sections 5.2.2 and 5.8.2 for relevant applications. 

Given a smooth vector field $\vfield \in \vfields$ on a smooth manifold $\mfld$, a \define{smooth global integral curve of $\vfield$} is a smooth curve $\curve\from\R\to\mfld$ such that $\dot{\curve}(\time) = \vfield(\curve(\time))$ for all $\time \in \R$. The point $\curve(0)$ is called \define{starting point of $\curve$}. If a smooth global integral curve $\curve$ of $\vfield$ with starting point $\point$ exists, then it is the unique maximal solution to the initial value problem
\begin{equation}
\tag{IVP}
\label{eq:dynamical-system}
\dot{\curve} = \vfield(\curve), \quad \curve(0) = \point \eqdot
\end{equation}

Given a smooth manifold $\mfld$, a \define{smooth global flow} on $\mfld$ is a smooth map $\flow\from \R \times \mfld \to\mfld$ such that for all $\time, \timealt \in \R$ and $\point \in \mfld$, $\flow(\time, \flow(\timealt, \point)) = \flow(\time+\timealt, \point)$ and $\flow(0, \point) = \point$. Given a smooth global flow, fixing $\time \in \R$ one can define the \define{orbit map} $\flowt\from\mfld\to\mfld$ by $\flowt(\point) = \flow(\time, \point)$; the orbit map of a smooth global flow can be shown to be a diffeomorphism of $\mfld$ onto itself with inverse $\ll \flowt \rr^{-1} = \flow_{-\time}$. Similarly, by fixing $\point \in \mfld$ one can define the curve $\flowp \from \R \to \mfld$ by $\flowp(\time) = \flow(\time, \point)$.

Given a smooth global flow $\flow$ and a smooth vector field $\vfield \in \vfields$ on a smooth manifold $\mfld$, we say that \textit{$\flow$ is the flow of $\vfield$} if $\vfield \ll \flowp(\time)  \rr = \dot{\flowp}(\time)$ for all $\time \in \R$ and $\point \in \mfld$. If $\vfield$ admits a global flow, then $\flowp\from\R\to\mfld$ is an integral curve of $\vfield$ with starting point $\point$, hence a solution to the initial value problem \eqref{eq:dynamical-system} \textendash\ equivalently, the orbit map $\flowt\from\mfld\to\mfld$ maps any initial condition $\point \in \mfld$ to the point $\curve\of\time$, where $\curve$ is the maximal solution to the initial value problem \eqref{eq:dynamical-system}.

A vector field may not always admit a global flow, since it may not always be the case that every integral curve is defined for all time. The \define{Fundamental Theorem of Flows} \citep[Th. 9.12]{leeIntroductionSmoothManifolds2012} asserts that every smooth vector field on a smooth manifold determines a unique \textit{local maximal} smooth flow;\footnote{The definition of local maximal flow is analogue to that of global flow, restricting the domain to a suitable open subset $\flowD \subseteq \R \times \mfld$.} the proof is an application of the existence, uniqueness, and smoothness theorem for solutions of ordinary differential equations. For the scope of this work note that the trajectories of \eqref{eq:RD} on $\intstrats$ are defined for all $\time \in \R$, so the replicator vector field $\repfield$ defines a smooth global flow on $\intstrats$.

Next, we look at the relation between the Riemannian divergence of a vector field defined in \cref{app:div-cod-product}, and the \define{volume} of a set of initial conditions evolving along the flow of such vector field. We warm up in an Euclidean setting, before moving to a Riemannian one. 

\paragraph{The Euclidean Liouville’s theorem}
Consider a vector field $\vfield \in \vfields$ on $ \mfld = \R^{\dims}$ that admits a global flow $\flow \from \R \times \R^{\dims} \to \R^{\dims}$.
For any open set $\open \subseteq \R^{\dims}$ and any $\time \in \R$ denote by $\open_{\time}$ the image of $\open$ under the orbit map $\flowt\from\R^{\dims}\to\R^{\dims}$:
\begin{equation}
\label{eq:open_time}
\open_{\time} \defeq \flowt\of\open = \setdef{\flowt(\point)}{\point \in \open} \subseteq \R^{\dims}  \eqdot
\end{equation}
Note that $\open_{\time = 0} = \open$, since the orbit map $\flow_0$ is the identity map on $\R^{\dims}$.

A fundamental result of classical mechanics known as \define{Liouville’s theorem}~\cite{Arn89} relates the Euclidean divergence \eqref{eq:euclidean-divergence} of the vector field $\vfield$, which is a function $\div{\vfield}\from\R^{\dims}\to\R$, with the Euclidean volume of an open set of initial conditions evolving along the flow of $\vfield$:
\begin{theorem*}[Euclidean Liouville's theorem]
\label{th:liouville-euclidean}
Given a smooth vector field $\vfield$ in $\R^{\dims}$ and an open set $\open \subseteq \R^{\dims}$,
\begin{equation}
\label{eq:liouville-euclidean}
\frac{d}{d\time} \vol\of{\open_{\time}}  = \int_{\open_{\time}} \div{\vfield} \d\point \eqcomma
\end{equation}
for all $\time \in \R$ such that the flow of $\vfield$ is defined.
\end{theorem*}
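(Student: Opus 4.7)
The plan is to prove the formula via the change-of-variables theorem, reducing the time derivative of $\vol(\open_\time)$ to an integral involving the Jacobian determinant of the flow map $\flowt$, and then invoking the classical Jacobi / Abel--Liouville identity that identifies the time derivative of this determinant with the divergence.

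First I would write the volume as an integral over the fixed set $\open$, at the price of introducing the Jacobian of $\flowt$. Since $\flowt\from\R^{\dims}\to\R^{\dims}$ is a diffeomorphism onto its image $\open_{\time}$ (by the Fundamental Theorem of Flows), the change of variables formula gives
\begin{equation*}
\vol(\open_{\time}) \;=\; \int_{\open_{\time}} d\pointalt \;=\; \int_{\open} \abs{\det D\flowt(\point)} \, d\point,
\end{equation*}
where $D\flowt(\point)$ is the Jacobian matrix of $\flowt$ at $\point$. At $\time=0$, $\flow_0$ is the identity so $\det D\flow_0(\point) = 1$; by continuity in $\time$ and nonvanishing of the determinant (diffeomorphism), $\det D\flowt(\point) > 0$ for all $\time$ in a connected interval containing $0$, which lets me drop the absolute value.

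Next, I would differentiate under the integral sign (justified by smoothness of $\flow$ and, if $\open$ is unbounded, by working on compact subsets and passing to the limit) to obtain
\begin{equation*}
\frac{d}{d\time}\vol(\open_\time) \;=\; \int_\open \frac{\pd}{\pd\time}\det D\flowt(\point)\, d\point.
\end{equation*}
The core step is the Abel--Jacobi--Liouville identity for the flow of a smooth vector field:
\begin{equation*}
\frac{\pd}{\pd\time}\det D\flowt(\point) \;=\; \bigl(\div\vfield\bigr)\bigl(\flowt(\point)\bigr)\cdot \det D\flowt(\point).
\end{equation*}
I would derive this by differentiating the flow equation $\pd_\time \flowt(\point) = \vfield(\flowt(\point))$ with respect to $\point$ to get $\pd_\time D\flowt(\point) = D\vfield(\flowt(\point))\cdot D\flowt(\point)$, and then applying Jacobi's formula $\pd_\time \det A(\time) = \det A(\time)\cdot \tr\bigl(A(\time)^{-1}\pd_\time A(\time)\bigr)$, noting that $\tr D\vfield = \div\vfield$ in the Euclidean setting by definition~\eqref{eq:euclidean-divergence}.

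Finally, substituting this identity back and applying the change of variables formula in the reverse direction yields
\begin{equation*}
\frac{d}{d\time}\vol(\open_\time) \;=\; \int_\open \bigl(\div\vfield\bigr)(\flowt(\point))\,\det D\flowt(\point)\, d\point \;=\; \int_{\open_\time}\div\vfield(\pointalt)\, d\pointalt,
\end{equation*}
which is the claimed identity. The main obstacle is the Abel--Jacobi--Liouville identity; everything else is a routine application of change of variables and differentiation under the integral.
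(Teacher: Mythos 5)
Your proposal is correct: the paper does not prove this statement itself but defers to \citet[Ch.~3]{Arn89}, and your argument \textendash\ change of variables via the flow map, the variational equation $\pd_{\time} D\flowt = D\vfield(\flowt)\cdot D\flowt$, and Jacobi's formula identifying $\tr D\vfield$ with $\div\vfield$ \textendash\ is precisely the classical proof found there. No gaps; the positivity of $\det D\flowt$ and the differentiation under the integral sign are handled appropriately.
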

\begin{proof}
See \eg \citet[Ch. 3]{Arn89}.
\end{proof}
If a map $\phi\from\R^{\dims}\to\R^{\dims}$ fulfills $\vol(\open) = \vol(\phi\open)$ for all open subsets $\open \subseteq \mfld$ we say that the map is \define{volume-preserving.} An immediate corollary of Liouville's theorem is that the orbit maps of vector fields with zero divergence are volume-preserving:
\begin{corollary*}[Conservation of Euclidean volume]
If a vector field $\vfield$ in $\R^{\dims}$ fulfills $\div\vfield = 0$, then
\begin{equation}
\vol\of{\open_{\time}} = \vol\of{\open}
\end{equation}
for all open sets $\open \subseteq \R^{\dims}$ and all $\time \in \R$ such that the flow of $\vfield$ is defined.
\end{corollary*}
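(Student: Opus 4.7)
The plan is to derive the corollary directly from the Euclidean Liouville theorem stated just above, which already does all the work. The strategy is to integrate the identity
\[
\frac{d}{d\time}\vol(\open_{\time}) = \int_{\open_{\time}} \div \vfield \, d\point
\]
against the hypothesis $\div\vfield \equiv 0$.

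First, I would fix an open set $\open \subseteq \R^{\dims}$ and an interval $I \subseteq \R$ containing $0$ on which the flow $\flow$ of $\vfield$ is defined on $\open$, so that the family $\open_{\time} = \flowt(\open)$ is well-defined for $\time \in I$. Since $\div\vfield(\point) = 0$ for every $\point \in \R^{\dims}$, the integrand in the right-hand side of Liouville's theorem vanishes identically, hence $\int_{\open_{\time}} \div\vfield \, d\point = 0$ for every $\time \in I$. Applying Liouville's theorem gives $\frac{d}{d\time}\vol(\open_{\time}) = 0$ on $I$, so the real-valued map $\time \mapsto \vol(\open_{\time})$ is constant on $I$.

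Finally, by the definition of the orbit map, $\flow_{0}$ is the identity on $\R^{\dims}$, so $\open_{0} = \open$ and thus $\vol(\open_{0}) = \vol(\open)$. Combining this with the constancy established in the previous step yields $\vol(\open_{\time}) = \vol(\open)$ for every $\time$ for which the flow is defined, which is the claim. There is no real obstacle here: the entire argument is a one-line application of Liouville's theorem plus the initial condition $\flow_{0} = \mathrm{id}$, so the only care needed is to ensure that the interval of time on which the conclusion is stated is consistent with the interval on which Liouville's theorem was invoked.
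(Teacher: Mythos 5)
Your proof is correct and follows exactly the paper's argument: the right-hand side of the Euclidean Liouville identity vanishes when $\div\vfield = 0$, so $\vol(\open_{\time})$ is constant in $\time$, and $\flow_{0} = \mathrm{id}$ pins the constant to $\vol(\open)$. The extra care you take about the time interval and the initial condition is fine but not a substantive difference.
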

\begin{proof}
If $\div\vfield = 0$ the right hand side of \cref{eq:liouville-euclidean} vanishes, hence $\vol{\open_{\time}}$ is constant whenever the flow of $\vfield$ is defined.
\end{proof}
\paragraph{The Riemannian Liouville’s theorem}
The constructions of the previous paragraph generalize to the more general setting of Riemannian manifolds~\citep[Ch. 16]{leeIntroductionSmoothManifolds2012}.
Given a smooth vector field $\vfield \in \vfields$ that admits a global flow $\flow$ on a Riemannian manifold $(\mfld, \gmat)$ of dimension $\dims$, let
$
\open_{\time} = \flowt\of\open
$ be the image of any open subset $\open \subseteq \mfld$ under the orbit map $\flowt \from \mfld \to \mfld$, as in \cref{eq:open_time}. To generalize the Euclidean Liouville's theorem to this Riemannian setting we need the appropriate notions of \define{divergence of a vector field} and \define{volume of an open set} on a Riemannian manifold. The appropriate generalization of the divergence operator is given by \cref{eq:riemannian-divergence}; the appropriate notion of volume on a Riemannian manifold is the following~\citep[Ch. 16]{leeIntroductionSmoothManifolds2012}: If $\open$ is an open subset completely contained in the domain of a single smooth chart $(\mathcal{V}, \chartalt)$ of $\mfld$, then its \define{Riemannian volume} is\footnote{The definition extends to arbitrary open subsets of $\mfld$ by a partition of unity argument.} 
\begin{equation}
\label{eq:riemannian-volume}
\vol{\open} = \int_{\chartalt\of\open} \sqrt{\det\tilde{\g}(\tilde{\point})}  \, d\tilde{\point} \eqcomma
\end{equation}
where $\chartalt\from\open\to\R^{\dims}$ is an homeomorphism onto an open subset $\tilde{\open} \defeq \chartalt(\open)$ of $\R^\dims$ mapping $\point \in \mfld$ to $\tilde{\point} \in \tilde{\open}$; and $\tilde{\g}$ is the \define{effective representation} of the Riemannian metric $\g$ on $\tilde{\open}$ (\cf \cref{sec:eff-rep-metric}).

With these definitions at hand we can state the Riemannian version of Liouville’s theorem:
\begin{theorem*}[Riemannian Liouville's theorem]
\label{th:liouville-riemannian}
Given a vector field $\vfield \in \vfields$ on a Riemannian manifold $(\mfld, \g)$ and an open set $\open \subseteq \mfld$,
\begin{equation}
\label{eq:liouville-riemannian}
\frac{d}{d\time} \vol\of{\open_{\time}}  = \int_{\chartalt(\open_{\time})} \div{\vfield} \d\point 
\end{equation}
for all $\time \in \R$ such that the flow $\flow$ of $\vfield$ is defined, where $\open_{\time} = \flowt\of\open$ and $\chart$ is a chart whose domain contains $\open_{\time}$\footnote{As discussed in \citet{leeIntroductionSmoothManifolds2012} the result does not depend on the choice of smooth chart whose domain contains $\open_{\time}$.}.
\end{theorem*}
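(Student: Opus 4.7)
The plan is to prove the theorem via the Cartan/Lie derivative machinery of differential forms, which yields the sharpest statement in a coordinate-free way, and then to unpack it in a chart. First I would recast the Riemannian volume as $\vol(\open_{\time}) = \int_{\open_{\time}} \omega_{\g}$, where $\omega_{\g}$ is the Riemannian volume form, whose coordinate expression $\sqrt{\detg} \, d\tilde{\point}^{1}\wedge\dotsb\wedge d\tilde{\point}^{\dims}$ matches the chart definition in \eqref{eq:riemannian-volume}. This reframing turns the assertion into a statement about the time-derivative of the integral of a top form over a region that is transported by the flow of $\vfield$.

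The key step is the \emph{Reynolds transport theorem} for a flow on a manifold: for any top form $\eta$,
\begin{equation*}
\frac{d}{d\time}\int_{\open_{\time}} \eta = \int_{\open_{\time}} \mathcal{L}_{\vfield}\,\eta,
\end{equation*}
where $\mathcal{L}_{\vfield}$ denotes the Lie derivative along $\vfield$. Applied to $\eta = \omega_{\g}$, Cartan's magic formula gives $\mathcal{L}_{\vfield}\omega_{\g} = d(\iota_{\vfield}\omega_{\g}) + \iota_{\vfield}(d\omega_{\g})$; since $\omega_{\g}$ is a top form on $\mfld$, $d\omega_{\g}=0$ for dimensional reasons, so $\mathcal{L}_{\vfield}\omega_{\g} = d(\iota_{\vfield}\omega_{\g})$. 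The intrinsic identity $d(\iota_{\vfield}\omega_{\g}) = (\div\vfield)\,\omega_{\g}$ \textendash\ which is, in fact, the coordinate-free definition of the Riemannian divergence \textendash\ then yields
\begin{equation*}
\frac{d}{d\time}\vol(\open_{\time}) = \int_{\open_{\time}} (\div\vfield)\,\omega_{\g},
\end{equation*}
which is exactly \eqref{eq:liouville-riemannian} once the right-hand side is unpacked in the chart $(\mathcal{V},\chartalt)$.

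The main obstacle I foresee is the verification that the explicit coordinate formula \eqref{eq:riemannian-divergence} adopted in \cref{def:riemannian-divergence} agrees with the intrinsic identity $\mathcal{L}_{\vfield}\omega_{\g} = (\div\vfield)\omega_{\g}$ used above. This is standard but requires a short calculation in local coordinates: one computes $d(\iota_{\vfield}\omega_{\g})$ using $\omega_{\g} = \sqrt{\detg}\,d\tilde{\point}^{1}\wedge\dotsb\wedge d\tilde{\point}^{\dims}$, and checks that the resulting coefficient is precisely the right-hand side of \eqref{eq:riemannian-divergence-two-terms}. A purely coordinate-based alternative route would pull the integral back to the fixed set $\chartalt(\open)$ through the change of variables $\tilde{\flow}_{\time}$, differentiate under the integral sign, and use Jacobi's formula $\tfrac{d}{d\time}\det D\tilde{\flow}_{\time} = (\det D\tilde{\flow}_{\time}) \cdot (\tr D\tilde{\vfield})\!\circ\!\tilde{\flow}_{\time}$ together with the chain rule applied to $\sqrt{\detg}\!\circ\!\tilde{\flow}_{\time}$; the contribution of the volume factor recombines exactly with the trace term to reproduce the Riemannian divergence of \eqref{eq:riemannian-divergence-two-terms}, and reversing the change of variables recovers the desired identity. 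The general case of $\open$ not lying in a single chart follows in either approach by a standard partition-of-unity argument.
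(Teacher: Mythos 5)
Your argument is correct and is essentially the same as the paper's: the paper does not prove this theorem itself but simply defers to \citet[Ch.~16]{leeIntroductionSmoothManifolds2012}, and the Lie-derivative route you take (transport theorem, Cartan's formula, $d(\iota_{\vfield}\omega_{\g}) = (\div\vfield)\,\omega_{\g}$, then unpacking in a chart) is precisely the standard proof given there. The only point worth flagging is that \eqref{eq:liouville-riemannian} as printed suppresses the $\sqrt{\det\tilde{\g}}$ factor in the coordinate integral, so your invariant form $\int_{\open_{\time}}(\div\vfield)\,\omega_{\g}$ is the cleaner statement of what is actually meant.
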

\begin{proof}
See \eg \citet[Ch. 16]{leeIntroductionSmoothManifolds2012}. 
\end{proof}
As in the Euclidean case, if a map $\phi\from\mfld\to\mfld$ fulfills $\vol(\open) = \vol(\phi\open)$ for all open subsets $\open \subseteq \mfld$ we say that the map is \define{volume-preserving}; and the orbit maps of vector fields with zero Riemannian divergence are volume-preserving.
\begin{corollary*}[Conservation of Riemannian volume]
If a vector field $\vfield \in \vfields$ on a Riemannian manifold fulfills $\div\vfield = 0$ then
\begin{equation}
\label{eq:zero-riemannian-divergence-volume-conservation}
\vol\of{\open_{\time}} = \vol\of{\open} \eqcomma
\end{equation}
where $\open_{\time} = \flowt\of\open$, for all open sets $\open \subseteq \mfld$ and all $\time \in \R$ such that the flow $\flow$ of $\vfield$ is defined.
\end{corollary*}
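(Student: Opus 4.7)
The plan is to derive the conclusion as a one-line consequence of the Riemannian Liouville's theorem stated immediately above, using only the hypothesis that $\div\vfield$ vanishes identically. First I would fix an arbitrary open set $\open \subseteq \mfld$ and an arbitrary time $\time \in \R$ for which the flow $\flow$ of $\vfield$ is defined, and set $\open_{\timealt} = \flow_{\timealt}\of\open$ for $\timealt$ in a neighborhood of $[0,\time]$ (or $[\time,0]$ if $\time < 0$). Since $\flowt$ is a diffeomorphism of $\mfld$ onto its image, $\open_{\timealt}$ is an open subset of $\mfld$ and $\vol(\open_{\timealt})$ is well defined via \eqref{eq:riemannian-volume}, possibly after restricting to a chart whose domain contains $\open_{\timealt}$ (or, more generally, via a partition-of-unity argument covering it with finitely many such charts).

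Next I would apply \eqref{eq:liouville-riemannian} from the Riemannian Liouville's theorem to obtain
\begin{equation}
\ddt{\vol(\open_{\timealt})}
    = \int_{\chartalt(\open_{\timealt})} \div\vfield \dd\point
\end{equation}
for every $\timealt$ in the interval under consideration. By hypothesis, $\div\vfield \equiv 0$ on all of $\mfld$, so in particular it vanishes on $\chartalt(\open_{\timealt})$; the right-hand side is therefore $0$ identically in $\timealt$.

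Finally I would integrate this differential identity: the function $\timealt \mapsto \vol(\open_{\timealt})$ has zero derivative on the entire interval where the flow exists, so it is constant. Evaluating at $\timealt = 0$ gives $\vol(\open_{0}) = \vol(\flow_{0}\of\open) = \vol(\open)$, since $\flow_{0}$ is the identity. Comparing with the value at $\timealt = \time$ yields \eqref{eq:zero-riemannian-divergence-volume-conservation}, as desired. There is no real obstacle here, since the substantive work has already been carried out in proving Liouville's theorem itself; the only minor care required is to ensure that volumes remain well defined as the open set $\open_{\timealt}$ is transported by the flow, which is guaranteed by the fact that $\flowt$ is a diffeomorphism and by the chart-invariance statement recorded in the footnote to the Riemannian Liouville's theorem.
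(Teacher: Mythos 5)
Your argument is correct and is essentially the paper's own proof: apply the Riemannian Liouville theorem, note that the right-hand side of \eqref{eq:liouville-riemannian} vanishes identically when $\div\vfield = 0$, and conclude that $\timealt \mapsto \vol(\open_{\timealt})$ is constant, with $\vol(\open_{0}) = \vol(\open)$ since $\flow_{0}$ is the identity. The extra care you take about well-definedness of the volume along the flow is sound but not needed beyond what the theorem's statement already guarantees.
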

\begin{proof}
The proof is identical to the one for the Euclidean counterpart.
\end{proof}
\paragraph{Poincaré recurrence}
The last notion we need is that of \define{Poincaré recurrence}, a property of volume-preserving maps on sets of finite volume.
We present a measure-theoretic version of Poincaré's classical recurrence theorem, and adapt it to our Riemannian framework.

Given a measure space\footnote{Recall that a measure space $(\ms,\me)$ is a set $\ms$ with a countable-addictive function $\me$ from the sigma-algebra $\Sigma$ of $\ms$ into the nonnegative real numbers (including infinity) such that $\ms(\emptyset) = 0$, and that any element in $\Sigma$ is called a \define{measurable} subset of $\ms$.} $(\ms,\me)$, we say that $(\ms,\me)$ is \define{finite} if $\me(\ms) < \infty$, and that a map $\phi\from\ms\to\ms$ is \define{measure preserving} if $\me(\phi \open) = \me(\open)$ for all measurable subsets $\open \subseteq \ms$. Given a finite measure that is invariant under some map one has the following theorem~\citep{bekkaErgodicTheoryTopological2000}:
\begin{theorem*}[Poincaré \textendash\ Measure setting]
Let $(\ms, \me)$ be a finite measure space, and let $\phi \from \ms \to \ms$ be a measure preserving mapping. Let $\open$ be a measurable subset of $\ms$. Then almost every point $\point \in \open$ is infinitely recurrent with respect to $\open$, that is, the set $\setdef{n \in \N}{\phi^{n}\point \in \open}$ is infinite.
\end{theorem*}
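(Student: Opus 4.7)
The plan is to apply the classical disjointness argument at the heart of Poincaré's recurrence theorem, reducing the infinite-recurrence claim to a countable union of ``at least one return'' estimates. Let
\begin{equation}
B \defeq \setdef{\point\in\open}{\phi^{n}(\point)\in\open \text{ for only finitely many } n\in\N};
\end{equation}
the theorem is equivalent to $\me(B)=0$. Writing
\begin{equation}
B = \bigcup_{N\in\N} B_{N}, \qquad B_{N} \defeq \setdef{\point\in\open}{\phi^{n}(\point)\notin\open \text{ for all } n\geq N},
\end{equation}
countable subadditivity reduces the task to showing $\me(B_{N})=0$ for each fixed $N$.

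The core step is to exhibit, for fixed $N$, a pairwise disjoint family of preimages of $B_{N}$ inside $\ms$. The right choice is to consider the sets $\phi^{-kN}(B_{N})$ for $k=0,1,2,\dots$ and to verify pairwise disjointness directly from the definition of $B_{N}$. Indeed, if $\point \in \phi^{-iN}(B_{N})\cap \phi^{-jN}(B_{N})$ with $0\leq i<j$, then both $\phi^{iN}(\point)$ and $\phi^{jN}(\point)$ lie in $B_{N}$; applying the defining property of $B_{N}$ at the point $\phi^{iN}(\point)$ with the index $m = (j-i)N \geq N$ gives $\phi^{m}(\phi^{iN}(\point)) = \phi^{jN}(\point)\notin\open$, contradicting $\phi^{jN}(\point)\in B_{N}\subseteq\open$.

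Finally, since $\phi$ preserves $\me$, one has $\me(\phi^{-kN}(B_{N}))=\me(B_{N})$ for every $k$, so
\begin{equation}
\sum_{k=0}^{\infty}\me(B_{N}) = \me\Bigl(\bigcup_{k=0}^{\infty}\phi^{-kN}(B_{N})\Bigr) \leq \me(\ms) < \infty,
\end{equation}
which forces $\me(B_{N})=0$, as desired. The only delicate point is selecting the spacing $kN$ in the preimages so that disjointness can be read off directly from the definition of $B_{N}$; using $\phi^{-k}$ naively (for $k\geq 1$) would only yield the weaker ``at least one return'' conclusion, which is why iterating the argument over $N$ is what upgrades a single return into an infinite orbit return. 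Everything else is routine measure-theoretic bookkeeping once the disjointness is in place.
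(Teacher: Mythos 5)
Your proof is correct, and it is the standard disjointness argument for Poincaré recurrence; the paper does not prove this theorem itself but defers to the literature (Bekka and Mayer, Th.~1.7), where essentially the same argument appears: decompose the non-recurrent set as $B=\bigcup_{N}B_{N}$ with $B_{N}=\{x\in\mathcal{U}:\phi^{n}(x)\notin\mathcal{U}\ \text{for all}\ n\geq N\}$, check that the preimages $\phi^{-kN}(B_{N})$, $k=0,1,2,\dotsc$, are pairwise disjoint, and conclude $\mu(B_{N})=0$ from finiteness of the total measure. The only point worth flagging is that you use measure preservation in the preimage form $\mu(\phi^{-kN}(B_{N}))=\mu(B_{N})$, whereas the paper's definition is stated via forward images $\mu(\phi\,\mathcal{U})=\mu(\mathcal{U})$; for the invertible flow maps to which the theorem is applied in the paper the two formulations are equivalent, so nothing breaks, but the preimage formulation is the one your argument (and the theorem for general non-invertible maps) actually requires.
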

\begin{proof}
See \eg \citet[Th. 1.7]{bekkaErgodicTheoryTopological2000}.
\end{proof}
\begin{remark}
\label{rem:riemannian-mfld-is-measure-space}
The Riemannian volume \cref{eq:riemannian-volume} on a Riemannian manifold $(\mfld, \g)$ defines a measure $\me$ on the Borel sigma-algebra of $\mfld$ by $\me(\open) = \vol{\open} $, hence a Riemannian manifold is in particular a measure space~\citep{spivakCalculusManifoldsModern1965}, on which Poincaré's theorem applies. Furthermore every Riemannian manifold is a separable metric space,\footnote{The distance between two points being the infimum of the lengths of piecewise geodesics joining them \citep{munkresTopology1999, leeIntroductionRiemannianManifolds2018}.} so one can formulate a Riemannian version of Poincaré's theorem: given a Riemannian manifold $(\mfld, \g)$ of finite volume and a volume-preserving map $\phi\from\mfld\to\mfld$, almost every point $\point \in \mfld$ is $\phi$-recurrent, that is, there is a strictly increasing sequence of integers $\curr[\time] \uparrow \infty$ such that $\lim_{\run \to \infty}\phi^{\curr[\time]}\point \to \point$; see \citet[Corollary 1.8]{bekkaErgodicTheoryTopological2000}.
\end{remark}
The tools presented in this appendix will be used in \cref{app:incompressible} to prove some of the main results of this paper, namely that a game is incompressible if and only if it is harmonic (via a Riemannian divergence operator); that \acl{RD} on incompressible games are volume-preserving with respect to a non-Euclidean Riemannian structure (via Liouville's theorem); and that \acl{RD} on incompressible games exhibit Poincaré recurrence (via Poincaré's theorem).

\section{Effective representation of games}
\label{app:reduction}

\begin{figure}
\centering
\begin{tikzpicture}[scale=2]
\draw[->] (-0.5,0) -- (1.5,0) node[below] {$\effstrat_1$}; 
\draw[->] (0,-0.5) -- (0,1.5) node[left] {$\effstrat_2$}; 
\node[blue] at (0.35,0.35) {$\effstrat$}; 
\fill[blue, opacity=0.3] (0,0) -- (0,1) -- (1,0) -- cycle; 
\node at (0.5,0.8) {$\intcorcube$}; 
\fill[red, opacity=0.1] (-0.7,-0.7) -- (-0.7,+1.3) -- (+1.2,+1.3) -- (+1.2,-0.7) -- cycle; 
\node at (-0.3,1.0) {$\tspace\intcorcube$}; 
\node at (1.0,1.2) {$\R^{\nEffs}$}; 
\draw[red, ->] (0,0) -- (1,0) node[above] {$\effbvec_{1}$}; 
\draw[red, ->] (0,0) -- (0,1) node[above] {$\effbvec_{2}$}; 

\begin{scope}[shift={(3,0)}]
\draw[->] (0,0,0) -- (1.5,0,0) node[below] {$\strat_0$}; 
\draw[->] (0,0,0) -- (0,1.5,0) node[left] {$\strat_1$};  
\draw[->] (0,0,0) -- (0,0,1.5) node[above] {$\strat_2$}; 
\node[blue] at (0.5,0.5,0.5) {$\strat$}; 

\draw[red, ->] (0,0,0) -- (-0.4,0.4,0) node[above] {$\effbvec_{1}$}; 
\draw[red, ->] (0,0,0) -- (-0.4,0,0.4) node[above] {$\effbvec_{2}$}; 

\fill[red, opacity=0.1]  (-1.3/3,-1.3/3,2.6/3) -- (-1.3/3,2.6/3,-1.3/3) -- (2.6/3,-1.3/3,-1.3/3) -- cycle; 
\node at (-0.4,1.0) {$\tanplane$}; 

\fill[blue, opacity=0.3] (0,0,1) -- (0,1,0) -- (1,0,0) -- cycle; 
\node at (0.5,0.8) {$\intstrats$}; 
\node at (1.0,1.2) {$\R^{\nEffs+1}$}; 
\end{scope}

\draw[->, thick] (1.6,0.75) to[bend left] node[above] {$\incl$} (2.4,0.75);
\draw[<-, thick] (1.6,0.25) to[bend right] node[above] {$\chart$} (2.4,0.25);
\end{tikzpicture}
\caption{Maps \eqref{eq:simplex-parametrization} and \eqref{eq:simplex-chart} between the open corner of cube $\intcorcube$ in $\R^{\nEffs}$ and the open simplex $\intstrats$ in $\R^{\nEffs+1}$ for $\nEffs = 2$. In light red are the tangent spaces $\tspace\intcorcube = \R^{2}$ and $\tspace\intstrats = \tanplane$, where $\tanplane$ is the hyperplane $\tanplane = \setdef{(\strat_{0}, \strat_{1}, \strat_{2}) \in \R^{3}}{\strat_{0} + \strat_{1} + \strat_{2} = 0}$. The basis vectors $\effbvec_{1} = (1,0)$ and $\effbvec_{2} = (0,1)$ of $\tspace\intcorcube$ are mapped by \cref{eq:eff-rep-basis-vectors} to the vectors $\effbvec_{1} = \bvec_{1} - \bvec_{0} = (-1, 1, 0) $ and $\effbvec_{2} = \bvec_{2} - \bvec_{0} = (-1, 0, 1)$ in $\tanplane$.}
\label{fig:simplex-parametrization}
\end{figure}
Given the finite normal form game $\fingame = \fingamefull$ let $\nPures_{\play} \equiv \dimStrats_{\play}+1$ be the number of pure strategies of player $\play \in \players$, and denote the set of \aposs{} pure strategies as $\pures_{\play} = \setof{0_{\play}, 1_{\play}, \dots, \dimStrats_{\play}}$. Define $\effPures_{\play} \defeq \setof{ 1_{\play}, \dots, \dimStrats_{\play}}$; in the following the index $\pure_{\play} \in \pures_{\play}$ runs from $0_{\play}$ to $\dimStrats_{\play}$, and the index $\eff_{\play} \in \effPures_{\play}$ runs from $1_{\play}$ to $\dimStrats_{\play}$, unless otherwise specified.

Finite games in this form carry two intrinsic redundancies. First, $\nEffs_{\play}$ out of the $\nEffs_{\play}+1$ components of the mixed strategy $\strat_{\play} \in \strats_{\play}$ of player $\play$ are sufficient to completely specify it, since the remaining one is constrained by $\sum_{\pureplay}\strat_{\findex} = 1$. Second, two strategically equivalent games, albeit having different payoff functions, effectively represent the same game, since they display the same strategical and dynamical properties.\footnote{As discussed in \citet{CMOP11} strategically equivalent games have the same set of equilibria, but in general different efficiency (\eg Pareto optimality).} For this reason it is desiderable to introduce a \textit{reduced} or \textit{effective} representation of a game, in which \begin{enumerate*}
\item
the mixed strategy of each player is represented by an $\nEffs_{\play}$-dimensional object, and
\item strategically equivalent games are \quotes{clearly} the same, in a sense to be made precise.
\end{enumerate*}

To this end consider for each player the coordinates transformation between \aposs{} strategy space $\strats_{\play} = \simplex(\pures_{\play}) = \setdef{\strat_{\play} \in \clorthantplayalt }{\sum_{\pureplay \in \pures_{\play}}\strat_{\findex} = 1}$ and the \define{corner of cube simplex} $\corcube_{\play} = \setdef{\effstrat_{\play}\in\clorthantplayeff}{\sum_{\effplay \in \effPures_{\play} } \effstrat_{\eff} \leq 1}$ given by
\begin{equation}
\label{eq:simplex-parametrization}
\map{\incl}{\corcube_{\play}}{\strats_{\play}}{\effstrat_{\play}}{\strat_{\play}}
\quad \text{such that} \quad 
\begin{cases}
\strat_{\zindex} = 1 - \sum_{\eff_{\play} = 1}^{^{\nEffs_{\play}}} \effstrat_{\effindex} \\
\strat_{\effindex} = \effstrat_{\effindex}
\quad
\text{for all } \effplay \in \effrangeplay \eqdot
\end{cases}
\end{equation}
This map is visualized in \cref{fig:simplex-parametrization} with its obvious inverse\footnote{The maps $\incl$ and $\chart$ are labeled by $_0$ to denote the fact the the we express $\strat_{\zindex}$ as a function of the remaining coordinates; this choice comes without any loss of generality, \ie it would be equivalent to consider the map $\iota_{\pure}$ such that $\strat_{\play \pure_{\play}} = 1 - \sum _{\eff_{\play} \neq \pure_{\play} \effstrat_{\effindex}}$, and $\strat_{\effindex} = \effstrat_{\effindex}$ for all $\effplay \neq \pureplay$.}
\begin{equation}
\label{eq:simplex-chart}
\map{\chart}{\strats_{\play}}{\corcube_{\play}}{\strat_{\play}}{\effstrat_{\play}}
\quad \text{such that} \quad 
\effstrat_{\effindex} = \strat_{\effindex} \quad \text{for all } \effplay \in \effrangeplay \eqdot
\end{equation}
This standard reduction technique goes back at least to \citet[p.~4]{ritzbergerNashField1990}, and is employed
in many other works \citep{hofbauerEvolutionaryDynamicsBimatrix1996,MM10,LM13,LM15,CGM15,BM17,Wei95,HS98,TJ78}.

In the following we consider only \define{interior strategies} by restricting $\incl$ to $\incl|_{\intcorcube_{\play}}\from\intcorcube_{\play}\to\intstrats_{\play}$ (and we will denote $\incl|_{\intcorcube}$ just by $\incl$). \textit{Geometrically}, the reason to consider the relative interior of the strategy space is that $\intstrats_{\play}$ is a smooth manifold of dimension $\nEffs_{\play}$ with a global chart $\chart$ onto the open corner of cube $\intcorcube_{\play}$, which is an open subset of $\R^{\nEffs_{\play}}$; on the other hand, $\strats_{\play}$ is not a smooth manifold (\cf \cref{app:geometry}). For a \textit{dynamical} justification of the restriction to the interior of $\strats_{\play}$, \cf \cref{eq:intstrats} and the surrounding discussion.

Under the maps $\incl$ and its inverse $\chart$ the open corner of cube and the open simplex are fundamentally the same object; the corner of cube representation retains all the information existing on the simplex in a more efficient way, getting rid of the redundant degree of freedom. Thus, all the objects and structures defined on the open simplex $\intstrats_{\play}$ as a subspace of $\orthantplayalt$ \textendash\ such as payoff functions and payoff fields, vector fields and metrics \textendash\ must admit via \cref{eq:simplex-parametrization,eq:simplex-chart} an equivalent representation on the open corner of cube $\intcorcube_{\play}$, that we'll refer to interchangeably as \define{reduced} or \define{effective}. As opposed to effective, we will refer to objects defined on $\intstrats_{\play}$ as \define{full}.

In the next paragraphs we will present for each open simplex $\intstrats_{\play}$ and its corresponding open corner of cube $\intcorcube_{\play}$ the effective representation of payoff functions and payoff fields, \acl{RD}, tangent vectors, and metric tensors. The end result of this reduction procedure is the \define{effective representation} of the mixed extension of a finite game $\fingamefull$, in which all the relevant objects are define on (the interior of) the product corner of cube $\corcube = \prod_{\play \in \players} \corcube_{\play}$, rather than on the \quotes{redundant} original strategy space $\strats = \prod_{\play \in \players} \strats_{\play}$.

\subsection{Effective representation of payoff functions and payoff fields}
\begin{figure}
\centering
\begin{tikzpicture}[scale=1.5]

        \draw[blue, thick] (0,1) -- (1,0);
        \node at (0.6, 0.8) {$\intstrats_{\play}$};
        
        \draw[red, thick] (-1,1) -- (1,-1);
        \draw[red, dotted] (-1.2,1.2) -- (1.2,-1.2);
        \node at (-0.8, 0.4)  {$\tspace\intstrats_{\play}$};
        
        \draw[->] (-1.5,0) -- (1.5,0) node[right] {$\strat_{i,0}$};
        \draw[->] (0,-1.5) -- (0,1.5) node[above] {$\strat_{i,1}$};

\begin{scope}[shift={(4,0)}]

        \draw[blue, dashed] (0,0) -- (1,0);
        \draw[blue, dashed] (0,1) -- (1,1);
        \draw[blue, dashed] (0,0) -- (0,1);
        \draw[blue, dashed] (1,0) -- (1,1);
        \fill[blue, opacity=0.3] (0,0) -- (1,0) -- (1,1) -- (0,1) -- cycle;
        \node at (0.5, 0.7) {\small{$\intcorcube_{1} \times \intcorcube_{2}$}};
        
         \fill[red, opacity=0.1] (-1.2,-1.2) -- (1.2,-1.2) -- (1.2,1.2) -- (-1.2,1.2) -- cycle;
         \node at (-0.8, 1.3)  {$\tspace\of{\intcorcube_{1} \times \intcorcube_{2}}$};
        
        \draw[->] (-1.5,0) -- (1.5,0) node[right] {$\effstrat_{1}$};
        \draw[->] (0,-1.5) -- (0,1.5) node[above] {$\effstrat_{2}$};
        
        \end{scope}
    \end{tikzpicture}
\caption{A $(2 \times 2)$ game.~\textit{Left}: The strategy space of each player $\play \in \setof{1,2}$ in a $(2 \times 2)$ game is the $1$-dimensional open simplex $\intstrats_{\play}$ as a subspace of $\R^{2}$; the tangent space $\tspace\intstrats_{\play}$ is the line $\strat_{0} + \strat_{1} = 0$. \textit{Right}: The strategy space $ \intstrats_{1} \times\intstrats_{2}$ of a $2 \times 2$ game is a subset of $\R^{4}$, so we represent the open corner of cube $ \intcorcube =   \intcorcube_{1} \times \intcorcube_{2} = \setdef{\parens{\effstrat_{1}, \, \effstrat_{2}}}{ \effstrat_{1} > 0, \, \effstrat_{2} > 0, \, \effstrat_{1} < 1, \, \effstrat_{2} < 1 }$ as an open subset of $\R^{2}$. Its tangent space $\tspace\intcorcube$ is the whole $\R^{2}$.}
\label{fig:simplices}
\end{figure}
The effective representation of mixed strategies is given precisely by \cref{eq:simplex-chart}. Since payoff functions are scalar functions of these strategies, the effective representation $\effpay_{\play}\from\intcorcube_{\play}\to\R$ of the payoff function $\pay_{\play}\from\intstrats_{\play}\to\R$ is obtained as the restriction of $\pay_{\play}$ to $\intcorcube_{\play}$, \ie
\begin{equation}
\label{eq:eff-rep-pay}
\effpay_{\play}\of\effstrat = \pay_{\play}\of\strat
\end{equation}
for all $\play \in \players$, and all $\effstrat \in \effstrats, \strat \in \strats$ related by \cref{eq:simplex-parametrization}.

Just like the full payoff field is obtained differentiating the full payoff functions, the reduced payoff field is obtained differentiating the reduced payoff functions:\footnote{In more geometrical terms, $\effpay_{\play}$ (resp. $\effpayfield_{\play}$) is the \textit{pull-back} of $\pay_{\play}$ (resp. $\payfield_{\play}$) along $\incl$.}
\begin{equation}
\label{eq:eff-payfield-is-eff-individual-differential}
\payfield_{\findex}(\strat)
= \pay_{\play}(\pure_{\play};\strat_{-\play})
= \frac{\pd\pay_{\play}}{\pd\strat_{\findex}} \of\strat
\implies
\effpayfield_{\effindex}(\effstrat)
\defeq  \frac{\pd\effpay_{\play}}{\pd\effstrat_{\effindex}} \of\effstrat \eqdot
\end{equation}
\begin{remark}[Individual differential]
\label{rem:individual-differential}
\cref{eq:eff-payfield-is-eff-individual-differential} says that the components of the full (resp. reduced) payoff field  $\payfield_{\play}$ (resp. $\effpayfield_{\play}$) are obtained by partial differentiation of the payoff function $\pay_{\play}$ (resp. $\effpay_{\play}$) of player $\play \in \players$ with respect to \aposs{} mixed strategies $\strat_{\play}$ (resp. $\effstrat_{\play}$). As mentioned in \cref{rem:euclidean-vs-riemannian-gradient} we refer to the array of partial derivatives of a function as \define{differential} of the function; since we are differentiating each payoff function only with respect to the variables relative to one player, we say that the full (resp. reduced ) payoff field of a player is the \define{individual differential} of the full (resp. reduced) payoff function of the player, and for each $\play \in \players$ we write
\begin{equation}
\label{eq:individual-differential}
\payfield_{\play} = d_{\play} \pay_{\play} \quad \text{and} \quad \effpayfield_{\play} = \Eff{d}_{\play}\effpay_{\play} \eqdot
\end{equation}
\end{remark}
We have the following useful lemma to compute partial derivatives in effective coordinates:
\begin{lemma}
\label{eq:eff-chain-rule}Let $\fn\from\intstrats\to\R$ be a differentiable function and $\Eff{\fn}\from\intcorcube\to\R$ its effective representation. Then
\begin{equation}
\frac{\de }{ \de \effstrat_{\effindex} } \Eff{\fn}   \of \effstrat
= \ll \frac{\de}{ \strat_{\effindex} } - \frac{\de}{\strat_{\zindex} } \rr \fn \of \strat \eqcomma
\end{equation}
for all $\play \in \players, \eff_{\play} \in \effPures_{\play}$, and all $\strat \in \intstrats$ and $\effstrat \in \intcorcube$ related by \cref{eq:simplex-parametrization}.
\end{lemma}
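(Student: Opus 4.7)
The plan is to apply the multivariable chain rule to the composition $\Eff{\fn} = \fn \circ \incl$, which is how the effective representation is defined: since $\Eff{\fn}$ and $\fn$ agree on corresponding points via $\strat = \incl(\effstrat)$, we have $\Eff{\fn}(\effstrat) = \fn(\incl(\effstrat))$, and differentiating the right-hand side by the chain rule will produce the claimed formula.

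First I would fix a player $\play\in\players$ and an index $\eff_{\play}\in\effPures_{\play}$ and unpack the parametrization \eqref{eq:simplex-parametrization}. The only components of $\strat = \incl(\effstrat)$ that depend on the coordinate $\effstrat_{\play\eff_{\play}}$ are $\strat_{\play\eff_{\play}} = \effstrat_{\play\eff_{\play}}$ and $\strat_{\play 0_{\play}} = 1 - \sum_{\effalt_{\play}\in\effPures_{\play}} \effstrat_{\play\effalt_{\play}}$, whose partial derivatives with respect to $\effstrat_{\play\eff_{\play}}$ are $+1$ and $-1$ respectively; all other components of $\strat$ (including the strategies of players $\playalt\neq\play$) are independent of $\effstrat_{\play\eff_{\play}}$, so their contributions to the chain rule vanish.

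Substituting these partial derivatives into the chain rule expansion then gives
\[
\frac{\de }{ \de \effstrat_{\effindex} } \Eff{\fn}(\effstrat)
    = \frac{\de \fn}{\de \strat_{\effindex}}(\strat) \cdot 1
        + \frac{\de \fn}{\de \strat_{\zindex}}(\strat) \cdot (-1)
    = \left(\frac{\de}{\de \strat_{\effindex}} - \frac{\de}{\de \strat_{\zindex}}\right)\fn(\strat),
\]
which is exactly the claimed identity. There is no serious obstacle here: the lemma is a direct consequence of the definition of the effective representation and the explicit form of $\incl$, and the only subtlety worth flagging is that the asymmetry between the index $\zindex$ and the indices $\effindex$ in the formula is a consequence of the asymmetric role of $\strat_{\zindex}$ in the parametrization \eqref{eq:simplex-parametrization}, which singles out $\strat_{\zindex}$ as the dependent coordinate.
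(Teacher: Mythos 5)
Your proof is correct and follows exactly the paper's argument: apply the multivariable chain rule to $\Eff{\fn} = \fn\circ\incl$, observe from \eqref{eq:simplex-parametrization} that only $\strat_{\effindex}$ (derivative $+1$) and $\strat_{\zindex}$ (derivative $-1$) depend on $\effstrat_{\effindex}$, and substitute. No gaps.
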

\begin{proof}
Fix $\play \in \players$ and $\eff_{\play} \in \effPures_{\play}$. By the chain rule,
$
\frac{\de }{ \de \effstrat_{\effindex} } \Eff{\fn}   \of \effstrat
 =
 \sum_{\playalt \in \players}
 \sum_{\pure_{\playalt} = 0 _{\playalt}}^{\nEffs_{\playalt}}
 \frac{\de \strat_{\playalt \pure_{\playalt}}}{\de \effstrat_{\effindex}}
 \frac{\de}{\de{\strat_{\playalt \pure_{\playalt}}}} \fn\of\strat
$.
By \cref{eq:simplex-parametrization}, $\frac{\pd\strat_{\playalt 0_{\playalt}}}{\pd\effstrat_{\effindex}} = -\delta_{\play\playalt} $ and $\frac{\pd\strat_{\playalt\effalt_{\playalt}}}{\pd\effstrat_{\effindex}} = \delta_{\play\playalt}\delta_{\effplay\effalt_{\play}} $ for all $ \effalt_{\playalt} \in \setof{1_{\playalt}, \dots, \nEffs_{\playalt}}$, and we conclude expanding the sum and substituting.
\end{proof}
Applying the previous lemma to \cref{eq:eff-payfield-is-eff-individual-differential} we get the reduced expression of the payoff field:
\begin{equation}
\label{eq:eff-rep-payfield}
\effpayfield_{\effindex}(\effstrat) =  \payfield_{\effindex}(\strat) - \payfield_{\zindex}(\strat) \eqcomma
\end{equation}
with $\strat = \incl\of\effstrat$ for all $\effstrat \in \effstrats$ and all $\play \in \players, \, \eff_{\play}=1_{\play},\dotsc,\nEffs_{\play}$, in agreement with \cref{eq:payfield-eff} in the main text. The first order version of this equation gives an important relation between the Jacobian matrices of the full and reduced effective fields:
\begin{lemma}
\label{lemma:chain-derivative-eff-payfield}
The components of the Jacobian matrix of the effective payoff field are given by
\begin{equation}
\label{eq:chain-derivative-eff-payfield}
\frac{\de \v{\play}{\effalt}}{\dey{\playalt}{\eff}} \,  \of \effstrat
= \ll \frac{\de}{\dex{\playalt}{\eff}} - \frac{\de}{\dex{\playalt}{0}} \rr \,\ll \V{\play}{\effalt} - \V{\play}{0}\rr \of \strat
\end{equation}
with $\strat = \incl\of\effstrat$ for all $\effstrat \in \intcorcube$, $\play, \playalt \in \players$, $\effaltplay \in \effrangeplay$ and $\eff_\playalt \in \setof{1, \dots, \nEffs_{\playalt}}$ .
\end{lemma}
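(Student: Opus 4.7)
The plan is to derive the claim as a direct consequence of the two results immediately preceding it, namely \cref{eq:eff-rep-payfield} and the chain rule of \cref{eq:eff-chain-rule}. First, I would reinterpret \cref{eq:eff-rep-payfield},
$$\v{\play}{\effalt}(\effstrat) = \V{\play}{\effalt}(\strat) - \V{\play}{0}(\strat), \quad \strat = \incl(\effstrat),$$
as the statement that the function $\v{\play}{\effalt}\from\intcorcube\to\R$ is precisely the effective representation, in the sense of \cref{eq:eff-rep-pay}, of the scalar function $\fn \defeq \V{\play}{\effalt} - \V{\play}{0}\from\intstrats\to\R$. In particular, we may identify $\Eff{\fn} = \v{\play}{\effalt}$.

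Next, I would apply \cref{eq:eff-chain-rule} to this choice of $\fn$. That lemma holds for an arbitrary player and action index, so instantiating its composite differentiation index as $\playalt \eff_{\playalt}$ (with matching base index $\playalt 0_{\playalt}$) gives
$$\frac{\de \v{\play}{\effalt}}{\dey{\playalt}{\eff}}(\effstrat) = \ll \frac{\de}{\dex{\playalt}{\eff}} - \frac{\de}{\dex{\playalt}{0}} \rr \ll \V{\play}{\effalt} - \V{\play}{0} \rr (\strat),$$
which is exactly \cref{eq:chain-derivative-eff-payfield}.

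No real obstacle is expected: the argument is entirely mechanical once the appropriate function $\fn$ is identified. The only minor subtlety is that the differentiating player $\playalt$ need not coincide with the player $\play$ whose payoff field component appears on the left-hand side; however, since the reduction map $\incl$ acts independently on each player's block of coordinates, the difference-of-partials structure produced by \cref{eq:eff-chain-rule} is the same whether $\playalt = \play$ or $\playalt \neq \play$, so no case distinction is required.
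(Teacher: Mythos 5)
Your proposal is correct and is exactly the paper's argument: the paper's proof reads ``Immediate by \cref{eq:eff-chain-rule} and \cref{eq:eff-rep-payfield},'' which is precisely your identification of $\effpayfield_{\play\effalt_{\play}}$ as the effective representation of $\payfield_{\play\effalt_{\play}} - \payfield_{\play 0_{\play}}$ followed by an application of the chain-rule lemma with differentiation index $\playalt\eff_{\playalt}$. Your remark that no case distinction between $\playalt=\play$ and $\playalt\neq\play$ is needed is also accurate, since \cref{eq:eff-chain-rule} is stated for an arbitrary function of all coordinates.
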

\begin{proof}
Immediate by \cref{eq:eff-chain-rule} and \cref{eq:eff-rep-payfield}.
\end{proof}
Next is a simple but important property of payoff fields:
\begin{lemma}
\label{lemma:payfield-independent-strat}
For every player $\play \in \players$ and for every pure strategy $\pureplay \in \pures_{i}$, the component $\payfield_{\findex}$ of the payoff field $\payfield$ does \textit{not} depend on the mixed strategy of player $\play$:
\begin{equation}
\frac{\de \V{\play}{\pure}}{\de \x{\play}{\purealt}}  \equiv 0
\end{equation}
for all players $\play \in \players$ and all $\pureplay, \purealtplay \in \pures_{\play}$. Analogously, for the reduced payoff field,
\begin{equation}
\frac{\de \v{\play}{\eff}}{\de \effstrat_{\effindexalt}}  \equiv 0
\end{equation}
for all players $\play \in \players$ and all $\effplay, \effaltplay \in \effPures_{\play}$.
\end{lemma}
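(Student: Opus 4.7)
The plan is to observe that both statements are essentially immediate from the defining expression of the payoff field, and the main (minor) wrinkle is only to track how the coordinate change $\incl$ interacts with the ``others''/``own'' split of coordinates.

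For the first claim, I would unwind \cref{eq:payv-mixed}: by definition $\payfield_{\findex}(\strat) = \pay_{\play}(\pureplay;\strat_{-\play})$, which, expanded via the multilinearity of the mixed payoff, reads
\begin{equation*}
\V{\play}{\pure}(\strat)
    = \sum_{\pure_{-\play} \in \pures_{-\play}}
        \pay_{\play}(\pureplay,\pure_{-\play})
        \prod_{\playalt \neq \play} \x{\playalt}{\pure}\,.
\end{equation*}
No factor of the form $\x{\play}{\purealt}$ appears on the right-hand side, so $\de \V{\play}{\pure}/\de\x{\play}{\purealt} \equiv 0$ for every $\purealtplay \in \pures_{\play}$. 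This is essentially the multilinearity of expected payoffs combined with the fact that fixing $\pureplay$ as a pure deviation replaces any dependence on $\strat_{\play}$ by a deterministic choice.

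For the effective version, I would invoke \cref{eq:eff-rep-payfield}, $\v{\play}{\eff}(\effstrat) = \V{\play}{\eff}(\strat) - \V{\play}{0}(\strat)$, with $\strat = \incl\of\effstrat$. By the first part of the lemma, both $\V{\play}{\eff}$ and $\V{\play}{0}$ depend only on $\strat_{-\play}$. Now the key observation is that the coordinate change \eqref{eq:simplex-parametrization} acts player-wise: $\strat_{\play} = \incl_{\play}(\effstrat_{\play})$ and $\strat_{-\play} = \incl_{-\play}(\effstrat_{-\play})$, so $\strat_{-\play}$ does not depend on $\effstrat_{\play}$ at all. Hence the composition $\v{\play}{\eff}(\effstrat)$ depends only on $\effstrat_{-\play}$, giving $\de\v{\play}{\eff}/\de\effstrat_{\effindexalt} \equiv 0$ for every $\effaltplay \in \effPures_{\play}$.

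There is no real obstacle here; the ``proof'' is essentially a restatement of the definition and of the fact that each player's effective coordinates are decoupled from the others'. The only thing worth flagging explicitly is the player-wise nature of $\incl$, which ensures that the independence on $\strat_{\play}$ translates verbatim into independence on $\effstrat_{\play}$ after the reduction.
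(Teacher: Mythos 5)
Your proof is correct and follows essentially the same route as the paper: the first claim is immediate from multilinearity (the paper phrases it as $\payfield_{\play\pure_{\play}}$ being the partial derivative of the multilinear $\pay_{\play}$ with respect to $\strat_{\play\pure_{\play}}$, which is the same observation as your explicit expansion), and the second claim reduces to the first via $\effpayfield_{\play\eff_{\play}} = \payfield_{\play\eff_{\play}} - \payfield_{\play 0_{\play}}$. The only cosmetic difference is that the paper invokes its chain-rule identity (\cref{lemma:chain-derivative-eff-payfield}) for the reduced case, whereas you argue directly from the player-wise decoupling of $\incl$ — the same underlying computation.
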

\begin{proof}

The first statement is immediate by the fact that $\payfield_{\findex}\of\strat$ is the partial derivative with respect to $\strat_{\findex}$ of the multilinear function $\pay_{\play}(\strat)$; the second follows from \cref{lemma:chain-derivative-eff-payfield}.
\end{proof}
This property will be crucial in the proof of \cref{th:shah-divergence-repfield-result} in \cref{app:incompressible}.

\begin{example}[$2 \times 2$ game \textendash\ Effective payoff ]
\label{ex:22-game}
Consider a $2 \times 2$ game with $\pures_{1} = \pures_{2} = \setof{0,1}$ and $\effPures_{1} = \effPures_{2} = \setof{1}$.  The mixed strategies in full and effective representations are respectively
\begin{subequations}
\begin{align}
\strat & = \parens[\big]{
\parens{\stratEx{1}{0}, \stratEx{1}{1}} , 
\parens{\stratEx{2}{0}, \stratEx{2}{1}}} \in \strats_{1} \times \strats_{2} \\
\effstrat  &= \parens{\effstratEx{1}{1}, \effstratEx{2}{1}} \equiv (\effstrat_{1}, \effstrat_{2}) \in \corcube_{1} \times \corcube_{2} \eqcomma
\end{align}
\end{subequations}
where in the second line we drop an index since $\effPures_{i}$ is a singleton; \cf \cref{fig:simplices}. The full and effective payoff functions are
\begin{subequations}
\begin{align}
\pay_{\play}(\strat) 
	&= \stratEx{1}{0} \,  \stratEx{2}{0} \,  \pay_{\play}(0, 0) + \stratEx{1}{0} \,  \stratEx{2}{1} \,  \pay_{\play}(0, 1) + \stratEx{1}{1} \,  \stratEx{2}{0} \,  \pay_{\play}(1, 0) + \stratEx{1}{1} \,  \stratEx{2}{1} \,  \pay_{\play}(1, 1)
	\notag\\
\effpay_{\play}(\effstrat)
	&= \effstrat_{1}\effstrat_{2}\bracks[\big]{\pay_{\play}(0, 0) - \pay_{\play}(0, 1) - \pay_{\play}(1, 0) + \pay_{\play}(1, 1) }
		+ \effstrat_{1}\bracks[\big]{- \pay_{\play}(0, 0) + \pay_{\play}(1, 0)}
	\notag\\
	&\quad
		+ \effstrat_{2}\bracks[\big]{- \pay_{\play}(0, 0) + \pay_{\play}(0, 1)}
		+ \pay_{\play}(0, 0) \eqdot
\end{align}
\end{subequations}
Note that the full payoffs are polynomials of degree $2$ with each term of the same degree, while the reduced payoffs are polynomials of degree $2$ with terms of all possible degrees.

The two full payoff fields, with two components each, are
\begin{equation}
\payfield_{1}(\strat)  =
\begin{pmatrix}
\stratEx{2}{0} \, \pay_1(0, 0) + \stratEx{2}{1} \, \pay_1(0, 1)   \\ 
\stratEx{2}{0} \, \pay_1(1, 0) + \stratEx{2}{1} \, \pay_1(1, 1)
\end{pmatrix}
\quad \quad
\payfield_{2}(\strat)  =
\begin{pmatrix}
\stratEx{1}{0} \, \pay_2(0, 0) + \stratEx{1}{1} \, \pay_2(1, 0)  \\
\stratEx{1}{0} \, \pay_2(0, 1) + \stratEx{1}{1} \, \pay_2(1, 1)
\end{pmatrix} \eqcomma
\end{equation}
whereas the two reduced payoff fields, with one component each, are
\begin{subequations}
\begin{align}
\effpayfield_{1}(\effstrat)  = \effstrat_{2} \bracks[\big]{ \pay_1(0, 0) -  \pay_1(0, 1) -  \pay_1(1, 0) +  \pay_1(1, 1)} -  \pay_1(0, 0) +  \pay_1(1, 0)
\\
\effpayfield_{2}(\effstrat)  = \effstrat_{1} \bracks[\big]{ \pay_2(0, 0) -  \pay_2(0, 1) -  \pay_2(1, 0) +  \pay_2(1, 1)} -  \pay_2(0, 0) +  \pay_2(0, 1)\eqdot
\end{align}
\end{subequations}
Note that $\payfield_{\play}\of\strat$ does \textit{not} depend on $\strat_{\play}$, and $\effpayfield_{\play}\of\effstrat$ does \textit{not} depend on $\effstrat_{\play}$, as expected by \cref{lemma:payfield-independent-strat}.
\endenv
\end{example}

\paragraph{Effective payoff field and strategical equivalence} The expression for the effective payoff field can be used to show that two games are strategically equivalent if and only if they are described by the same effective payoff field.  Before making this precise, we give here a simple but powerful lemma that we will use often in the following:
\begin{lemma}[Vanishing of multilinear extension]
\label{lemma:multilinear-extension-zero} Given a finite game $\fingame = \fingamefull$ let $F\from\pures\to\R$ be a real function of pure strategy profiles and $\bar{F}\from\strats\to\R$ its multilinear extension, \ie
\begin{equation}
\bar{F}(\strat)
= \exwrt{\pure\sim\strat}{F(\pure)}
= \sum_{\pure \in \pures} F(\pure) \prod_{\play \in \players}\strat_{\findex}
\equiv  \sum_{\pure \in \pures} F(\pure) \,  \strat_{\pure}
\end{equation}
Then $F(\pure) = 0$ for all $\pure \in \pures$ if and only if $\bar{F}(\strat) = 0$ for all $\strat \in \strats$.
\end{lemma}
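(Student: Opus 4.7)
The plan is to prove both implications directly, with no heavy machinery: the forward direction is immediate by linearity, and the reverse direction follows from evaluating $\bar F$ at the vertices of the product simplex $\strats = \prod_\play \strats_\play$.

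For the forward direction, assume $F(\pure) = 0$ for all $\pure \in \pures$. Then every term in the defining sum $\bar F(\strat) = \sum_{\pure \in \pures} F(\pure) \, \strat_{\pure}$ vanishes, so $\bar F(\strat) = 0$ identically on $\strats$.

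For the reverse direction, I would use the identification of each pure profile $\purealt \in \pures$ with a vertex of $\strats$, namely the strategy profile $\strat^{\purealt}$ defined by $\strat^{\purealt}_{\play\pure_\play} = 1$ if $\pure_\play = \purealt_\play$ and $0$ otherwise. Under this identification, the product $\strat^{\purealt}_{\pure} = \prod_\play \strat^{\purealt}_{\play\pure_\play}$ equals $1$ when $\pure = \purealt$ and $0$ otherwise, so the defining sum collapses to a single term and yields $\bar F(\strat^{\purealt}) = F(\purealt)$. Hence the hypothesis $\bar F \equiv 0$ on $\strats$, evaluated at the vertex $\strat^{\purealt}$, immediately gives $F(\purealt) = 0$ for every $\purealt \in \pures$.

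There is no real obstacle here: the only thing to observe is that a multilinear function on a product of simplices is determined by its values at the vertices, and these values coincide with $F$ itself. The proof is essentially a two-line verification, and I would present it as such without invoking any polynomial-identity or induction argument.
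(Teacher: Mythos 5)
Your proposal is correct and follows exactly the paper's own argument: the forward direction is immediate from the definition, and the reverse direction evaluates $\bar F$ at the vertex $\strat^{\purealt}$ of $\strats$ corresponding to each pure profile $\purealt$, where the multilinear sum collapses to $F(\purealt)$. No gaps; nothing further is needed.
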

\begin{proof}
If $F(\pure) = 0$ for all $\pure \in \pures$ the conclusion is immediate. Conversely, assume that $\bar{F}(\strat) = 0$ for all $\strat \in \strats$. In particular, for any $\pure \in \pures$, this holds true for the mixed strategy $\strat_{\findexalt} = \delta_{\pureplay \purealtplay}$ in which each player $\play$ assigns all weight to $\pureplay$, \ie $0 = \bar{F}(x) = \sum_{\purealt \in \pures} F(\purealt) \prod_{\play \in \players} \delta_{\pureplay \purealtplay} = F(\pure)$.
\end{proof}

We now move on to show that two games are strategically equivalent if and only if they are described by the same effective payoff field. Recall from \cref{eq:strat-equiv} in the main text that two finite games $\fingamefull$ and $\fingamefullalt$ are strategically equivalent if
\begin{equation}
\tag{\ref{eq:strat-equiv}}
\alt\pay_{\play}(\purealt_{\play};\pure_{-\play}) - \alt\pay_{\play}(\pure_{\play};\pure_{-\play})
	= \pay_{\play}(\purealt_{\play};\pure_{-\play}) - \pay_{\play}(\pure_{\play};\pure_{-\play})
\end{equation}
for all $\play \in \players$ and all $\pure, \purealt \in \pures$. If two games $\fingame$ and $\alt\fingame$ are strategically equivalent, we write $\fingame \sim \alt\fingame$.

\begin{proposition}
\label{prop:strategical-equivalence}
Two finite games are strategically equivalent if and only if they have the same effective payoff field.
\end{proposition}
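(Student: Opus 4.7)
The plan is to prove the two implications separately, using the explicit formula \eqref{eq:eff-rep-payfield} together with the observation that $\payfield_{\findex}(\strat) = \pay_{\play}(\pure_{\play};\strat_{-\play})$ is multilinear in $\strat_{-\play}$, and then leaning on \cref{lemma:multilinear-extension-zero} to pass between pure strategy profiles and mixed ones. Concretely, for every $\play \in \players$ and $\eff_{\play} \in \effPures_{\play}$, the effective payoff field component reads
\begin{equation*}
\effpayfield_{\rindex}(\effstrat)
= \pay_{\play}(\eff_{\play};\strat_{-\play}) - \pay_{\play}(0_{\play};\strat_{-\play})
= \sum_{\pure_{-\play} \in \pures_{-\play}} \bracks[\big]{\pay_{\play}(\eff_{\play};\pure_{-\play}) - \pay_{\play}(0_{\play};\pure_{-\play})} \, \strat_{-\play,\pure_{-\play}}
\end{equation*}
where $\strat = \incl\of\effstrat$, so $\effpayfield_{\rindex}$ is exactly the multilinear extension in $\strat_{-\play}$ of the pure-strategy differences $\pay_{\play}(\eff_{\play};\pure_{-\play}) - \pay_{\play}(0_{\play};\pure_{-\play})$.

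For the forward direction, if $\fingame \sim \alt\fingame$ then specializing \eqref{eq:strat-equiv} to $\pure_{\play} = 0_{\play}$ and $\purealt_{\play} = \eff_{\play}$ gives equality of the pure-strategy coefficients appearing in the display above, so the multilinear extensions coincide and $\effpayfield = \Eff{\alt\payfield}$. For the converse, if $\effpayfield_{\rindex}(\effstrat) = \Eff{\alt\payfield}_{\rindex}(\effstrat)$ for all $\effstrat \in \intcorcube$, then by continuity the equality extends to all $\effstrat \in \corcube$ and hence, via $\incl$, to all $\strat \in \strats$; applying \cref{lemma:multilinear-extension-zero} to the difference $F(\pure_{-\play}) \defeq [\alt\pay_{\play}(\eff_{\play};\pure_{-\play}) - \alt\pay_{\play}(0_{\play};\pure_{-\play})] - [\pay_{\play}(\eff_{\play};\pure_{-\play}) - \pay_{\play}(0_{\play};\pure_{-\play})]$ (viewed as a function on $\pures_{-\play}$) yields
\begin{equation*}
\alt\pay_{\play}(\eff_{\play};\pure_{-\play}) - \alt\pay_{\play}(0_{\play};\pure_{-\play})
= \pay_{\play}(\eff_{\play};\pure_{-\play}) - \pay_{\play}(0_{\play};\pure_{-\play})
\end{equation*}
for every $\play \in \players$, every $\eff_{\play} \in \effPures_{\play}$, and every $\pure_{-\play} \in \pures_{-\play}$.

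The last step is to upgrade this \quotes{pivoted-at-$0_{\play}$} equality to the full strategic equivalence condition \eqref{eq:strat-equiv}. For arbitrary $\pure_{\play}, \purealt_{\play} \in \pures_{\play}$ one simply writes the telescoping identity
\begin{equation*}
\pay_{\play}(\purealt_{\play};\pure_{-\play}) - \pay_{\play}(\pure_{\play};\pure_{-\play})
= \bracks[\big]{\pay_{\play}(\purealt_{\play};\pure_{-\play}) - \pay_{\play}(0_{\play};\pure_{-\play})}
- \bracks[\big]{\pay_{\play}(\pure_{\play};\pure_{-\play}) - \pay_{\play}(0_{\play};\pure_{-\play})},
\end{equation*}
and the analogous identity for $\alt\pay_{\play}$, so each of the two bracketed terms is equal across the two games (trivially when $\pure_{\play}$ or $\purealt_{\play}$ equals $0_{\play}$), giving $\fingame \sim \alt\fingame$.

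The only subtle point is the passage from \quotes{equal multilinear extensions on $\intcorcube$} back to \quotes{equal coefficients at pure profiles}: this is precisely where \cref{lemma:multilinear-extension-zero} is needed, and where one must be careful that the variables $\strat_{-\play}$ (and not $\strat_{\play}$) are the ones being varied \textendash\ but this is ensured by \cref{lemma:payfield-independent-strat}, since $\effpayfield_{\rindex}$ does not depend on $\effstrat_{\play}$ and hence is genuinely a multilinear function of $\strat_{-\play}$ alone.
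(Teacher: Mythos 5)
Your proof is correct and rests on the same essential ingredients as the paper's: the expansion of $\effpayfield_{\rindex}$ as the multilinear extension of the differences $\pay_{\play}(\eff_{\play};\pure_{-\play}) - \pay_{\play}(0_{\play};\pure_{-\play})$, together with \cref{lemma:multilinear-extension-zero} to pass between pure and mixed profiles. The only difference is packaging \textendash\ the paper linearizes the argument by applying it to the difference game $\fingame - \alt\fingame$ and the notion of a non-strategic game, whereas you work with the two games directly and recover the general condition \eqref{eq:strat-equiv} by telescoping through $0_{\play}$ \textendash\ but the underlying computation is identical.
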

The proof of this result is better broken into steps.

Firstly, following \citet{CMOP11} we give the following definition and lemma:
\begin{definition}[Non-strategic game]
A finite normal form game $\fingamefullns$ is called \define{non-strategic} if all players are indifferent between all of their choices:
    \begin{equation}
    \label{eq:non-strategic-game}
        \payns_\play(\purealt_\play, \pure_{\others}) =  \payns_\play(\pure_\play, \pure_{\others})
    \end{equation}
        for all $\play \in \players$, all $\pure_{\others} \in \pures_{\others}$, and all $\pure_\play, \purealt_\play \in \pures_\play$.
\end{definition}
\begin{lemma}
\label{lemma:strat-equiv-iff-difference-non-strategic}
Two finite games $\fingamefull, \fingamefullalt$ are strategically equivalent if and only if their difference is a non-strategic game.
\end{lemma}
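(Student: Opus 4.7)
The plan is to observe that both conditions can be rewritten as statements about a single auxiliary payoff tuple, namely the pointwise difference $\payns_\play \defeq \alt\pay_\play - \pay_\play$ for each $\play \in \players$. First I would note that $\payns = (\payns_\play)_{\play\in\players}$ is the payoff structure of the ``difference game'' $\alt\fingame - \fingame$ (interpreted with the same player/action structure as $\fingame$ and $\alt\fingame$, as in the footnote defining addition of games in the main text).

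Next I would simply rewrite the strategic equivalence condition \eqref{eq:strat-equiv} by moving all $\alt\pay_\play$ terms to one side and all $\pay_\play$ terms to the other:
\begin{equation*}
\alt\pay_\play(\purealt_\play;\pure_{-\play}) - \pay_\play(\purealt_\play;\pure_{-\play})
	= \alt\pay_\play(\pure_\play;\pure_{-\play}) - \pay_\play(\pure_\play;\pure_{-\play}),
\end{equation*}
which in terms of $\payns_\play$ reads $\payns_\play(\purealt_\play;\pure_{-\play}) = \payns_\play(\pure_\play;\pure_{-\play})$ for all $\play \in \players$, all $\pure_{-\play}\in\pures_{-\play}$, and all $\pure_\play,\purealt_\play \in \pures_\play$. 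But this is precisely the defining condition \eqref{eq:non-strategic-game} of a non-strategic game applied to $\alt\fingame - \fingame$.

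Both directions of the equivalence then follow immediately from this rearrangement, so no further work is required. There is no real obstacle here: the lemma is essentially a tautology once one recognizes that strategic equivalence is defined in terms of pairwise payoff \emph{differences}, which is exactly what the difference game encodes. The only minor care needed is to make explicit that the difference game inherits the same player set and action sets as $\fingame$ and $\alt\fingame$, so that the defining condition of non-strategic games is literally applicable to it.
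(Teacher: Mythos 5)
Your proposal is correct and follows essentially the same route as the paper's proof: set $\payns_\play \defeq \alt\pay_\play - \pay_\play$ and observe that rearranging the terms in \eqref{eq:strat-equiv} yields exactly the defining condition \eqref{eq:non-strategic-game} for the difference game, in both directions. No gaps.
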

\begin{proof}
Let $\fingame - \alt\fingame$ be non-strategic; then $\payns \defeq \alt\pay - \pay$ fulfills \cref{eq:non-strategic-game}, and rearranging the terms we immediately get \cref{eq:strat-equiv}. Conversely let $\fingame$ and $\alt\fingame$ be strategically equivalent, and set $\payns \defeq \alt\pay - \pay$; again rearrange the terms in \cref{eq:strat-equiv} to immediately conclude that $\payns$ fulfills \cref{eq:non-strategic-game}.
\end{proof}
Secondly, we give the following characterization of non-strategic games:
\begin{proposition}
\label{prop:non-strategic-iff-vanishing-eff-payfield}
A finite game $\fingamefullns$ is non-strategic if and only if its effective payoff field $\effpayfield$ vanishes identically.
\end{proposition}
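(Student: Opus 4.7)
The plan is to unfold the definition of the effective payoff field in terms of the underlying pure payoffs and then exploit multilinearity in both directions.

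First I would recall that by \cref{eq:eff-rep-payfield} together with \cref{eq:payv-mixed} we have, for every $\play\in\players$ and every $\effplay\in\effPures_{\play}$,
\begin{equation*}
\effpayfield_{\effindex}(\effstrat)
	= \payfield_{\effindex}(\strat) - \payfield_{\zindex}(\strat)
	= \payns_{\play}(\eff_{\play};\strat_{-\play}) - \payns_{\play}(0_{\play};\strat_{-\play}),
\end{equation*}
where $\strat = \incl(\effstrat)$.  In particular, the \ac{RHS} is a multilinear function of $\strat_{-\play}$ that does not depend on $\strat_{\play}$ (\cf \cref{lemma:payfield-independent-strat}), so $\effpayfield_{\effindex}$ vanishes on $\intcorcube$ if and only if the multilinear function $\strat_{-\play} \mapsto \payns_{\play}(\eff_{\play};\strat_{-\play}) - \payns_{\play}(0_{\play};\strat_{-\play})$ vanishes on $\strats_{-\play}$.

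For the forward direction, assume $\fingame$ is non-strategic, so $\payns_{\play}(\purealt_{\play},\pure_{-\play}) = \payns_{\play}(\pure_{\play},\pure_{-\play})$ for every pure profile.  Weighting this identity by $\prod_{\playalt\neq\play} \strat_{\playalt\pure_{\playalt}}$ and summing over $\pure_{-\play}\in\pures_{-\play}$ immediately yields $\payns_{\play}(\purealt_{\play};\strat_{-\play}) = \payns_{\play}(\pure_{\play};\strat_{-\play})$ for all $\strat_{-\play}\in\strats_{-\play}$; specializing to $\purealt_{\play} = \effplay$ and $\pure_{\play} = 0_{\play}$ gives $\effpayfield_{\effindex}\equiv 0$.

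For the converse direction, assume $\effpayfield\equiv 0$.  Then for every $\play\in\players$ and every $\effplay\in\effPures_{\play}$ the multilinear function $F(\strat_{-\play}) \defeq \payns_{\play}(\eff_{\play};\strat_{-\play}) - \payns_{\play}(0_{\play};\strat_{-\play})$ is identically zero on $\strats_{-\play}$ (at least on the interior, and therefore everywhere by continuity, or equivalently by \cref{lemma:multilinear-extension-zero}).  Specializing to a pure profile $\strat_{-\play} = \pure_{-\play}$, this yields $\payns_{\play}(\eff_{\play},\pure_{-\play}) = \payns_{\play}(0_{\play},\pure_{-\play})$ for every $\effplay\in\effPures_{\play}$ and every $\pure_{-\play}\in\pures_{-\play}$.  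Since $\pures_{\play} = \{0_{\play}\}\cup\effPures_{\play}$, all of player $\play$'s pure payoffs at a fixed $\pure_{-\play}$ coincide with $\payns_{\play}(0_{\play},\pure_{-\play})$, so \cref{eq:non-strategic-game} holds and $\fingame$ is non-strategic.

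There is no serious obstacle here: the only care needed is in passing from the interior identity $\effpayfield\equiv 0$ on $\intcorcube$ to an identity on all of $\strats$, which is handled either by continuity of multilinear functions or by a direct application of \cref{lemma:multilinear-extension-zero}.
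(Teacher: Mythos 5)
Your proof is correct and follows essentially the same route as the paper's: both directions unfold $\effpayfield_{\effindex}$ as the difference $\payns_{\play}(\eff_{\play};\strat_{-\play}) - \payns_{\play}(0_{\play};\strat_{-\play})$, use multilinearity to pass from pure to mixed profiles in the forward direction, and invoke \cref{lemma:multilinear-extension-zero} (or equivalently continuity of multilinear maps) to come back to pure profiles in the converse. Your explicit remark about extending the identity from $\intcorcube$ to all of $\strats$ is a welcome point of care that the paper glosses over, but it does not change the argument.
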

\begin{proof}
Let $\fingame$ be non-strategic. Then its effective payoff field fulfills
\begin{equation}
\begin{split}
\effpayfield_{\effindex} \of\effstrat
	& = \payfield_{\effindex} \of\strat - \payfield_{\zindex} \of\strat
	= \payns_{\play} (\effplay, \strat_{\others}) - \payns_{\play} (0_{\play}, \strat_{\others}) \\
	& = \sum_{\pure_{\others}} \strat_{\findexothers} \, \bracks[\big]{ \payns_{\play} (\effplay, \pure_{\others}) - \payns_{\play} (0_{\play}, \pure_{\others})} 
	= 0
\end{split}
\end{equation}
for all $\effstrat \in \corcube$, all $\play \in \players$, and all $\effplay \in \effrangeplay$, where the last equality holds by definition of non-strategic game.

Conversely, let the effective payoff field $\effpayfield\of\effstrat$ of $\fingame$ be identically zero for all $\effstrat \in \corcube$. Then by \cref{eq:eff-rep-payfield} all the components $\payfield_{\findex}(\strat) = \payns_{\play}(\pureplay, \strat_{\others})$ of the full payoff field are equal to each other, \ie $\payns_{\play}(\pureplay, \strat_{\others}) = \payns_{\play}(\purealtplay, \strat_{\others})$ for all $\strat \in \strats$ and $\pureplay, \purealtplay \in \pures_{\play}$, which in turn implies that $\payns_\play(\pure_\play, \pure_{\others}) =  \payns_\play(\purealt_\play, \pure_{\others})$ by \cref{lemma:multilinear-extension-zero}. 
\end{proof}
Finally, by \cref{lemma:strat-equiv-iff-difference-non-strategic}, \cref{prop:non-strategic-iff-vanishing-eff-payfield} is equivalent to \cref{prop:strategical-equivalence}:
\begin{proof}[Proof of \cref{prop:strategical-equivalence}]
Given two finite games $\fingamefull$, $\fingamefullalt$ we have the following implications:
\begin{equation}
\fingame \sim \alt\fingame \iff \fingame - \alt\fingame \text{ is non-strategic } \iff \effpayfield - \alt\effpayfield = 0 \iff \effpayfield = \alt\effpayfield 
\end{equation}
which concludes the proof.
\end{proof}
\subsection{Effective representation of the replicator dynamics}

We report here for ease of reference the full \acl{RD} \cref{eq:RD,eq:repfield} and its effective representation \cref{eq:RD-eff,eq:repfield-eff}, already derived in the main text:
\begin{equation}
\tag{\ref{eq:repfield}}
\dot{\strat}_{\findex} = 
\repfield_{\play\pure_{\play}}(\strat)
= \strat_{\play\pure_{\play}}
		\bracks*{
			\payfield_{\play\pure_{\play}}(\strat)
			- \insum_{\purealt_{\play} = 0}^{\nEffs_{\play}}
				\strat_{\play\purealt_{\play}} \payfield_{\play\purealt_{\play}}(\strat)} \quad \text{for all } \play \in \players \text{ and } \pureplay \in \rangeplay \eqcomma
\end{equation}
\begin{equation}
\tag{\ref{eq:repfield-eff}}
\dot{\effstrat}_{\effindex} = 
\repfield_{\play\eff_{\play}}(\effstrat)
	= \effstrat_{\play\eff_{\play}}
		\bracks*{\effpayfield_{\play\eff_{\play}}(\effstrat)
			- \insum_{\effalt_{\play}=1}^{\nEffs_{\play}} \effstrat_{\play\effalt_{\play}} \effpayfield_{\play\effalt_{\play}}(\effstrat)}  \quad \text{for all } \play \in \players \text{ and } \effplay \in \effrangeplay \eqdot
\end{equation}
\begin{remark}
\label{rem:notation-eff-repfield}
We prefer the notation $\repfield_{\effindex}$ over the more correct $\effrepfield_{\effindex}$ for the \textit{effective} replicator vector field for notational simplicity when there is no risk of ambiguity, and reinstate the tilde $\tilde{\cdot}$ when we want to stress the difference between the full and effective representations of the replicator field.
\endenv
\end{remark}

\subsection{Effective representation of tangent vectors}
As we saw, the map $\incl$ can be used to obtain the effective representation of \textit{points} in the strategy space (\ie mixed strategies) and of \textit{scalar functions} of these points (\ie functions of mixed strategies); the effective representation of \textit{vectors} that are tangent to the strategy space requires more care.

The open corner of cube $\intcorcube_{\play}$ is an open subset of $\orthantplayeff$ in its own right, so its tangent space is the whole euclidean space, that we denote by $\tspace\intcorcube_{\play} \equiv \R^{\nEffs_{\play}}$. On the other hand $\intstrats_{\play}$ is an open subset of an affine hyperplane in $\orthantplayalt$, and its tangent space is give by the hyperplane $\tanplane_{\play}$, the linear subspace in $\R^{\nEffs_{\play} + 1}$ of vectors whose components add up to zero, \cf \cref{lemma:tangent-space-simplex} and \cref{fig:simplex-parametrization}.

A basis vector $\effbvec_{\effindex}$ of $\R^{\nEffs_{\play}}$ must correspond via $\incl$ to a vector tangent to the simplex, \ie a vector in $\tanplane_{\play}$, that we want to determine. Since $\tanplane_{\play}$ is a linear subspace of $\R^{\nEffs_{\play}+1}$, it must be possible to express via $\incl$ this sought after vector as a linear combination of basis vectors $\setof{\bvec_{\findex}}_{\pureplay \in \rangeplay}$ of $\R^{\nEffs_{\play}+1}$, for all $\effplay \in \effrangeplay$.

The way to obtain this identification comes from an important result in differential geometry~\citep{leeIntroductionSmoothManifolds2012}: given a smooth map between two spaces such as $\incl\from\intcorcube_{\play}\to\intstrats_{\play}$, its differential induces a linear map $\d\incl\from\R^{\nEffs_{\play}}\to\tanplane_{\play}$ between the tangent spaces to the two spaces. The matrix representing this differential is the Jacobian matrix $\jac$ of $\incl$, so (dropping temporarily the player index $\play$ for notational simplicity) a basis vector $\effbvec_{\eff} \equiv \effbvec_{\effindex}$ of $\R^{\nEffs} \equiv \R^{\nEffs_{\play}}$ is mapped to the vector $\d\incl\of{\effbvec_{\eff}} \in \tanplane \subset \R^{\nEffs + 1}$ of component $\bracks*{\d\incl\of{\effbvec_{\eff}}}_{\pure} \equiv \bracks*{\d\incl\of{\effbvec_{\effindex}}}_{\findex}$ given by
\begin{equation}
\bracks*{\d\incl\of{\effbvec_{\eff}}}_{\pure}  = \sum_{\effalt = 1}^{\nEffs} \jac_{\pure \effalt} \bracks{\effbvec_{\eff}}_{\effalt} 
\end{equation}
for all $\pure \in \setof{0, \dots, \nEffs}$. Since $\effbvec_{\eff}$ is a basis vector its $\effalt$-th component is given by the Kronecker delta $\delta_{\eff\effalt}$, so
\begin{equation}
\begin{split}
\d\incl\of{\effbvec_{\eff}} 
& = \sum_{\pure = 0}^{\nEffs} \bracks*{\d\incl\of{\effbvec_{\eff}}}_{\pure} \, \bvec_{\pure}
= \sum_{\pure = 0}^{\nEffs} \jac_{\pure \eff} \, \bvec_{\pure}
= \sum_{\pure = 0}^{\nEffs} \frac{\pd\strat_{\pure}}{\pd\effstrat_{\eff}} \, \bvec_{\pure}
\end{split}
\end{equation}
for all $\eff \in \setof{1, \dots, \nEffs}$. Again by \cref{eq:simplex-parametrization} we have $\frac{\pd\strat_{0}}{\pd\effstrat_{\eff}} = -1 $ and $\frac{\pd\strat_{\effalt}}{\pd\effstrat_{\eff}} = \delta_{\eff\effalt} $ for all $\eff, \effalt \in \setof{1, \dots, \nEffs}$, so after reinserting the player index we get $\d\incl\of{\effbvec_{\rindex}} = \bvec_{\rindex} - \bvec_{\zindex}$. For notational simplicity in the following we drop the differential of the $\incl$ map and denote $\d\incl\of{\effbvec_{\rindex}}$ just by $\effbvec_{\rindex}$, so that in conclusion
\begin{equation}
\label{eq:eff-rep-basis-vectors}
\effbvec_{\rindex} = \bvec_{\rindex} - \bvec_{\zindex}
\end{equation}
for all $\play \in \players$ and all $\effplay \in \effrangeplay$; \cf \cref{fig:simplex-parametrization} for a  visual example.
\subsection{\shah metric}
\label{sec:eff-rep-metric}

As discussed in \cref{sec:decomposition}, the Euclidean metric is not attuned with the dynamical properties of \acl{RD}. For this reason \citet{Sha79} introduced a metric that \quotes{\textit{[...] turns out to be surprisingly effective in clarifying the dynamics of the [replicator dynamical] system}}. In this section we present this metric in its full and effective representations, along with some of its geometrical properties.

A remark on notation: we include where needed the player index $\play \in \players$ for ease of comparison with the other sections of this work; all expressions hold true with exactly the same form if it is omitted.
\begin{definition}
\label{def:sha-i-orhant}
The \shah metric on the positive orthant $\orthantplayalt$ is the smoothly varying positive definite symmetric bilinear form $\gmat_{\strat_{\play}} \from \orthantplayalt \times \orthantplayalt \to \R $ represented by the $(\nEffs_{\play} + 1) \times (\nEffs_{\play} + 1)$ matrix\footnote{Recall that $\tspace_{\strat}\orthantplayalt \cong \orthantplayalt $ by \cref{rem:tangent-space-euclidean-manifold}.}
\begin{equation}
\label{eq:sha-i-orhant}
\gmat_{\shindex}(\strat_{\play})
	\defeq \frac{\delta_{\pureplay \purealtplay }}{ \strat_{\findex}}
\end{equation}
for all $\strat_{\play}\in \orthantplay$ and $\pureplay, \purealtplay \in \rangeplay$.
\end{definition}
\paragraph{Effective \shah metric}
The components of the effective metric tensors $\effgmat_{\effstrat_\play}$ on $\intcorcube_{\play}$ are obtained by \cref{eq:gmat} as the inner product between effective tangent vectors, that by \cref{eq:eff-rep-basis-vectors} is
\begin{align}
\effgmat_{\effshindex}\of\effstrat
	&= \inner{\effbvec_{\effindex}}{\effbvec_{\effindexalt}}_{\effstrat}
	= \inner{\bvec_{\effindex} - \bvec_{\zindex}}{\bvec_{\effindexalt} - \bvec_{\zindex}}_{\strat}
	\notag\\
	&= \sum_{\pureplay\purealtplay} \frac{\delta_{\pureplay\purealtplay}}{\strat_{\findex}} \, (\delta_{\pureplay \effplay} - \delta_{\pureplay 0_{\play}})(\delta_{\purealtplay \effaltplay} - \delta_{\purealtplay 0_{\play}})
	\notag\\
	&= \frac{\delta_{\effplay\effaltplay}}{\strat_{\effindex}} - \frac{\delta_{\effindex 0_{\play}}}{\strat_{\effindex}} - \frac{\delta_{\effindexalt 0_{\play}}}{\strat_{\effindexalt}} + \frac{1}{\strat_{\zindex}}
	\eqdot
\end{align}
The second and third terms vanish, so in conclusion
\begin{equation}
\label{eq:eff-rep-shah}
 \effgmat_{\effshindex}\of{\effstrat_{\play}} =  \frac{\delta_{\effplay\effaltplay}}{\effstrat_{\effindex}}  + \frac{1}{1-\sum_{\effaltalt_{\play} = 1}^{\nEffs_{\play}}\effstrat_{\play\effaltalt_{\play}}}
\end{equation}
for all $\play \in \players$, all $\effplay, \effaltplay \in \effrangeplay$, and all $\effstrat_{\play} \in \intcorcube_{\play}$, in agreement with \cref{eq:Shah-eff}.
\paragraph{Determinant of the \shah metric}
Since the matrix representing the full \shah metric is diagonal its determinant is immediately given by
\begin{equation}
\label{eq:det-sha-full}
\det{\gmat_{\play}}(\strat_{\play}) = \frac{1}{\prod_{\pureplay = 0}^{\nEffs_{\play}}\strat_{\findex}} \eqdot
\end{equation}
The determinant of the \shah metric $\effgmat_{\play}$ in its effective representation then follows after a standard calculation based on the matrix determinant lemma, \viz
\begin{equation}
\label{eq:det-sha-eff}
\det{\effgmat}_{\play}(\effstrat_{\play}) 
= \frac{1}{\left(1-\sum_{\effplay = 1}^{\nEffs_{\play}} \effstrat_{\effindex}\right)\prod_{\effaltplay = 1}^{\nEffs_{\play}} \effstrat_{\effindexalt}}
= \frac{1}{ \strat_{\zindex} \prod_{\effaltplay = 1}^{\nEffs_{\play}} \effstrat_{\effindexalt}} \eqdot
\end{equation}
The fact that the determinant of the full and reduced metric formally agree is a particularity of the \shah metric, and is in general \textit{not} true for Riemannian metrics.

\paragraph{\shah unitary spheres}

As discussed in \cref{sec:decomposition} in the main text, the \shah unit sphere $\sphere_{\strat_{\play}} \defeq \setdef{\tvec\in \orthantplayalt }{\gmat_{\strat_{\play}}(\tvec, \tvec) = 1}$ at $\strat_{\play} \in \orthantplayalt$ becomes increasingly flattened along the $\strat_{\findex}$-axis as $\strat_{\findex}\to 0$, as depicted in \cref{fig:balls}. Indeed (omitting for a second the player index $\play$) the \shah unit sphere at $\strat$,
\begin{equation}
\label{eq:sha-ball}
\Big\{\tvec \in \R_{++}^{\nEffs+1} \text{ such that } 
\gmat_{\strat}(\tvec, \tvec)
= \sum_{\pure \purealt} \frac{\delta_{\pure\purealt}}{\strat_{\pure}} \tvec_{\pure} \tvec_{\purealt}
= \sum_{\pure} \frac{\tvec_{\pure}^{2}}{\strat_{\pure}} = 1 \Big\} \eqcomma
\end{equation}
is an hyper-ellipse with the size of the $\pure$-th axis going to zero as $\strat_{\pure} \to 0$.

\begin{remark}
\label{rem:sha-boundary-1}
As discussed in \cref{ex:RD-22-logistic}, the behavior of the \shah metric as the boundary is approached, responsible for the shrinking of unit spheres describe above, is also the key feature that confines the \acl{RD} to the interior of the strategy space.
\endenv
\end{remark}

This concludes our (by no means exhaustive) treatment of the geometrical properties of the \shah metric; for an in-depth treatment we refer the reader to \citet{akinGeometryPopulationGenetics1979} and \citet{Sha79}. In \cref{app:replicator-geometry} we turn at some of its \textit{dynamical} properties and at its deep connection with the \acl{RD}; on this matter see also
\citep{HS98,
mertikopoulosRiemannianGameDynamics2018,
harperInformationGeometryEvolutionary2009,
LM15,
mladenovicGeneralizedNaturalGradient2021}.

\section{Replicator dynamics as an individual \shah gradient system}
\label{app:replicator-geometry}
In this appendix we discuss the relation between the \shah metric and the \acl{RD}.

Given a finite normal form game $\fingamefull$, the evolution of the players' mixed strategies $\strat_{\play} \in \strats_{\play} = \simplex(\pures_{\play}) = \setdef{\strat_{\play} \in \clorthantplay }{\sum_{\pure_{\play} \in \pures_{\play}}\strat_{\findex} = 1}$ under the exponential weights learning scheme evolves according to the replicator dynamical system, that is
\begin{equation}
\tag{\ref{eq:RD}}
\dot\strat_{\play\pure_{\play}}
	= \strat_{\play\pure_{\play}}
		\bracks{\pay_{\play}(\pure_{\play};\strat_{-\play}) - \pay_{\play}(\strat)}
		= \strat_{\play\pure_{\play}}
		\bracks*{
			\payfield_{\play\pure_{\play}}(\strat)
			- \insum_{\purealt_{\play}\in\pures_{\play}}
				\strat_{\play\purealt_{\play}} \payfield_{\play\purealt_{\play}}(\strat)} 
\end{equation}
for all $\play \in \players$ and $\pureplay \in \pures_{\play}$.
We define $\RD_{\findex} (\strat) 
	\defeq \strat_{\play\pure_{\play}}
		\bracks*{
			\payfield_{\play\pure_{\play}}(\strat)
			- \insum_{\purealt_{\play}\in\pures_{\play}}
				\strat_{\play\purealt_{\play}} \payfield_{\play\purealt_{\play}}(\strat)}$ as in \cref{eq:repfield} in the main text, and  write the replicator system more compactly as\footnote{The notation $\RD$ is reminiscent of the \define{sharp} operator, of common use in differential geometry~\citep{leeIntroductionSmoothManifolds2012}. This is no coincidence: the payoff field $\payfield$ can be seen as a \define{dual vector field} or \define{$1$-form} in $\R^{\pures}$, and \acl{RD} follow the flow lines of the the primal-vector field obtained as the \shah sharp of the reduced payoff dual-vector field, as briefly discussed in \cref{app:geometric-tour}. This primal-dual interpretation is discussed in detail in \cite{mertikopoulosRiemannianGameDynamics2018}.}
\begin{equation}
\label{eq:RD-dynamical-system}
\dot{\strat}_{\play} = \RD_{\play}(\strat) \eqdot
\end{equation}
\paragraph{Interior and parallelism}
For~\eqref{eq:RD-dynamical-system} to make sense $\RD_{\play}(\strat)$ must point in a direction parallel to $\strats_{\play}$ for all $\strat$ along the trajectory. The notion of \quotes{parallelism} breaks down at the boundary of $\strats_{\play}$, but for each player $\play$ the interior of $\strats_{\play}$ is invariant under $\RD_{\play}$\footnote{If $\strat_{\findex}(t_{0}) = 0$ then $\strat_{\findex}(t) = 0 $ for all $t$ since $\dot{\strat}_{\findex} \propto \strat_{\findex}$, \cf \citet{HS98}. As shown in \cref{ex:RD-22-logistic}, this key dynamical feature bounding the \acl{RD} to the interior of the strategy space, is a consequence of the functional form of the \shah metric.}. So by restricting our attention to dynamics with initial conditions $\strat_{\play}(t_{0})$ in the \define{open mixed strategies space},
\begin{equation}
\label{eq:intstrats}
\intstrats_{\play} = \setdef{\strat_{\play} \in \orthantplay }{\sum_{\pure_{\play} \in \pures_{\play}}\strat_{\findex} = 1} \eqcomma
\end{equation}
we are sure that $\strat_{\play}(t) \in \intstrats_{\play}$ for all times $t$ and all players, avoiding boundary issues. This means that under \eqref{eq:RD} each pure strategy of each player has a non-zero probability to be played at all times, \ie $\strat_{\findex} \neq 0$ for all $\play \in \players$ and all $\pure_{\play} \in \pures_{\play}$.

Now unburdened from boundary issues we can give a precise notion of parallelism: a vector is parallel to $\intstrats_{\play}$ if its components sum to zero.
\begin{lemma}[Tangent space to open simplex]
\label{lemma:tangent-space-simplex}
The tangent space to the open simplex $\intstrats_{\play} \subset \orthantplay$ for any $\strat_{\play} \in \intstrats_{\play}$ is the hyperplane in $\R^{\pures_{\play}}$ of vectors whose components sum up to zero: 
\begin{equation}
\label{eq:tangent-space-simplex}
\tanplane_{\play} \defeq \tspace_{\strat_{\play}}\intstrats_{\play} = \setdef{\tvec_{\play} \in \R^{\pures_{\play}}}{\sum_{\pureplay}\tvec_{\findex} = 0} \eqdot
\end{equation}
\end{lemma}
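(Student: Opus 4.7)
The plan is to prove the two inclusions $\tspace_{\strat_{\play}}\intstrats_{\play} \subseteq \tanplane_{\play}$ and $\tanplane_{\play} \subseteq \tspace_{\strat_{\play}}\intstrats_{\play}$ by working with smooth curves, which is the most direct characterization of tangent vectors on a subset of Euclidean space. The key observation to leverage is that $\intstrats_{\play}$ is an open subset of the affine hyperplane $H_{\play} \defeq \setdef{\strat_{\play} \in \R^{\pures_{\play}}}{\sum_{\pureplay} \strat_{\findex} = 1}$, whose direction vector space is precisely $\tanplane_{\play}$.

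For the first inclusion, I would pick an arbitrary $\tvec_{\play} \in \tspace_{\strat_{\play}}\intstrats_{\play}$ and realize it as the velocity at $0$ of a smooth curve $\curve\from(-\eps,\eps)\to\intstrats_{\play}$ with $\curve(0) = \strat_{\play}$ and $\dot\curve(0) = \tvec_{\play}$. Since $\curve(\time) \in \intstrats_{\play} \subset H_{\play}$ for all $\time$, the scalar function $\time\mapsto\sum_{\pureplay} \curve_{\findex}(\time)$ is identically equal to $1$; differentiating at $\time=0$ gives $\sum_{\pureplay}\tvec_{\findex} = 0$, so $\tvec_{\play}\in\tanplane_{\play}$.

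For the reverse inclusion, given $\tvec_{\play}\in\tanplane_{\play}$, I would exhibit the straight-line curve $\curve(\time) = \strat_{\play} + \time\,\tvec_{\play}$. The sum constraint is preserved along $\curve$ because $\sum_{\pureplay}\curve_{\findex}(\time) = \sum_{\pureplay}\strat_{\findex} + \time\sum_{\pureplay}\tvec_{\findex} = 1 + 0 = 1$; moreover, since $\strat_{\play}$ has strictly positive components (it lies in the \emph{open} simplex), for $\eps>0$ small enough all components of $\curve(\time)$ remain positive for $\abs{\time}<\eps$, so $\curve$ takes values in $\intstrats_{\play}$. Hence $\tvec_{\play} = \dot\curve(0) \in \tspace_{\strat_{\play}}\intstrats_{\play}$.

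I do not anticipate any real obstacle: the statement is a standard computation of the tangent space of an open subset of an affine hyperplane, and the only subtlety is making explicit that openness in $\orthantplay$ (i.e., strict positivity of all components) is exactly what licenses the straight-line curve in the second inclusion \textendash\ this is also precisely why one works with $\intstrats_{\play}$ rather than $\strats_{\play}$.
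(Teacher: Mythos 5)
Your proof is correct, but it takes a different (more elementary) route than the paper. The paper's proof is a one-line application of the regular level set theorem: $\intstrats_{\play}$ is the level set $\sub_{\play}^{-1}(1)$ of the smooth function $\sub_{\play}(\strat_{\play}) = \sum_{\pure_{\play}} \strat_{\play\pure_{\play}}$ on $\orthantplay$, whose differential $\d\sub_{\play} = (1,\dotsc,1)$ is everywhere surjective, so the tangent space at any point is $\ker \d\sub_{\play} = \tanplane_{\play}$ by \citep[Prop.~5.38]{leeIntroductionSmoothManifolds2012}. You instead verify the two inclusions directly via the curve characterization of tangent vectors: differentiating the constant sum constraint along a curve gives $\tspace_{\strat_{\play}}\intstrats_{\play} \subseteq \tanplane_{\play}$, and the straight-line curve $\strat_{\play} + \time\,\tvec_{\play}$ \textendash{} which stays in the open simplex for small $\time$ precisely because all components of $\strat_{\play}$ are strictly positive \textendash{} gives the reverse inclusion. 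Both arguments are sound; the paper's is shorter but outsources the work to a cited theorem, while yours is self-contained and makes explicit why openness (relative interiority) is needed, which the level-set argument leaves implicit. No gap in either direction.
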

\begin{proof} The open simplex $\intstrats_{\play}$ is the level set of value $1$ of the smooth function $\sub_{\play}\from\orthant[\R]^{\pures_{\play}}\to\R$, $\sub_{\play}(\strat_{\play}) = \sum_{\pureplay} \strat_{\findex}$. The differential $\d{\sub_{\play}} = \1_{\play} \defeq (1, \dots, 1)$ of $\sub_{\play}$ is a surjective linear map from $\R^{\pures_{\play}}$ to $\R$ that does not depend on $\strat_{\play}$, so by a standard theorem~\citep[Prop.~5.38]{leeIntroductionSmoothManifolds2012} the tangent space to $\intstrats_{\play}$ at any point is the kernel of $\d\sub_{\play}$, that is $\setdef{\tvec_{\play} \in \R^{\pures_{\play}}}{\dualp{\1_{\play}}{\tvec_{\play}}=0}$.
\end{proof}
As a sanity check note that $\RD_{\play}$ evaluated at any $\strat \in \strats$ is parallel to $\strats_{\play}$, since
$
\sum_{\pure_{\play}} \RD_{\findex}(x) =  \sum_{\pure_{\play}}  \strat_{\play\pure_{\play}}
		\bracks{\pay_{\play}(\pure_{\play};\strat_{-\play}) - \pay_{\play}(\strat)} = 0
$
for all $\play \in \players$ and all $\strat \in \strats$.
\subsection{Full \shah metric and individual gradient}
As discussed in \cref{app:reduction}, the objects defining a game and a learning dynamics admit a \textit{full}, redundant representation; and an \textit{effective} one. In this section we use objects in the full representation to prove \cref{prop:rep-Shah} from the main text, stating that that \acl{RD} are equivalent to the steepest individual payoff ascent dynamics with respect to the \shah metric. In the next section, making use of the reduced representation, we present an alternative and more concise proof of the same result.

Given a finite game in normal form $\fingamefull$ let the mixed strategy of each player evolve according to \eqref{eq:RD}. For each player $\play \in \players$ endow the positive orthant $\orthantplay$ with the \shah metric given by \cref{eq:Shah} in the main text, and discussed in further detail in \cref{sec:eff-rep-metric}.
Using this metric we can give the following
\begin{definition}
\label{def:individual-gradient}
The \define{individual payoff gradient} $\grad_{\play}\pay_{\play}$ of the payoff function
$\pay_{\play}$
is the vector field on $\orthantplay$ that is
\begin{enumerate*}[(\itshape a\upshape)]
\item
\label{item:gradient-is-parallel}
parallel to $\intstrats_{\play}$, \ie
$\sum_{\pure_{\play}} \bracks{\grad_{\play}\pay_{\play}(\strat)}_{\pure_{\play}} = 0 $
for all $\strat \in \intstrats$; and
\item
\label{item:gradient-on-parallel}
fulfills
\end{enumerate*}
\begin{equation}
\label{eq:individual-gradient-directional-derivative}
\tag{\ref{eq:paygrad-Shah}}
\inner{\grad_{\play}\pay_{\play}(\strat)}{\tvec_{\play}}
	= \dir\pay_{\play}(\strat;\tvec_{\play})
\end{equation}
for all $\play \in \players$, $\strat \in \intstrats$, and $\tvec_{\play} \in \R^{\pures_{\play}}$ that are tangent to $\intstrats_{\play}$ (that is, $\sum_{\pure_{\play}\in\pures_{\play}} \tvec_{\play\pure_{\play}} = 0$). 
\end{definition}
\begin{remark}
\label{rem:gradient-gauge-fixing}
The definition is well-posed. From the material in \cref{app:geometry}, we we can use a Riemannian metric on $\orthantplay$ to define the gradient of a function by specifying the value of its inner product at all points with \textit{all} vectors $\tvec_{\play} \in \R^{\pures_{\play}}$. Condition \ref{item:gradient-on-parallel} specifies this value for vectors in the hyperplane \textit{parallel} to the simplex, leaving a degree of freedom to be specified -  namely, the value of the inner product between the gradient and vectors that are \textit{normal} to the simplex. Condition \ref{item:gradient-is-parallel} fixes this gauge by requiring the gradient itself to be parallel to the simplex, thus giving zero inner product with normal vectors. This gauge-fixing procedure will be crucial in the proof of \cref{prop:rep-Shah}. 
\endenv
\end{remark}
We are now in position to prove \cref{prop:rep-Shah} from the main body of the article, that we restate here for ease of reference:
\RepShah*
\begin{proof} Write $\gfield_{\play} \defeq \grad_{\play}\pay_{\play}$; we have to show that
\begin{equation}
\gfield_{\findex}\of\strat = \strat_{\play\pure_{\play}}
		\bracks{\pay_{\play}(\pure_{\play};\strat_{-\play}) - \pay_{\play}(\strat)}
\end{equation}
for all $\strat \in \intstrats, \play \in \players$, and $\pure_{\play} \in \pures_{\play}$. Let $\tvec_{\play} \in \tspace_{\strat_{\play}}\intstrats_{\play}$ be a tangent vector; by \cref{eq:sha-i-orhant} its \shah inner product with $\gfield_{\play}\of\strat$ is
\begin{equation}
\inner{ \gfield_{\play}\of\strat}{ \tvec_{\play} }
= \sum_{\pure_{\play} \purealt_{\play}} \frac{\delta_{\pure_{\play}\purealt_{\play}}}{\strat_{\findex}} \, \gfield_{\findex} \tvec_{\findexalt}
= \sum_{\pure_{\play}} \frac{\gfield_{\findex}}{\strat_{\findex}} \, \tvec_{\findex}
\end{equation}
for all $\strat \in \intstrats$ (note in particular that $\strat_{\findex} \neq 0$). By condition \ref{item:gradient-on-parallel} in the individual payoff gradient's definition, this inner product is also equal to
$
\inner{ \gfield_{\play}\of\strat}{ \tvec_{\play} }
= \pd\pay_{\play}\of{\strat, \tvec_{\play}}
= \sum_{\pure_{\play}} \tvec_{\findex} \frac{\pd{\pay_{\play}}\of\strat}{\pd\strat_{\findex}}
$; equating the two expressions and rearranging the sums one gets
\begin{equation}
\label{eq:dummy}
\tag{$\ast$}
\sum_{\pure_{\play}} \tvec_{\findex} \parens*{ \frac{\gfield_{\findex}\of\strat}{\strat_{\findex}} -  \frac{\pd{\pay_{\play}}\of\strat}{\pd\strat_{\findex}} } = 0 \eqdot
\end{equation}
Denote the term in brackets by
$
\dummyB_{\findex}\of\strat \defeq \frac{\gfield_{\findex}\of\strat}{\strat_{\findex}} -  \frac{\pd{\pay_{\play}}\of\strat}{\pd\strat_{\findex}}
$
and let $\dummyB_{\play}\of\strat = (\dummyB_{\findex}\of\strat)_{\pure_{\play} \in \pures_{\play}}$; \cref{eq:dummy} then reads $\tvec_{\play} \cdot \dummyB_{\play}\of\strat = 0$.

If we required this to hold for all $\tvec_{\play} \in \R^{\pures_{\play}}$  then $\dummyB_{\play}\of\strat$ should vanish identically; but since
$\tvec_{\play} \in \tspace_{\strat_{\play}}\intstrats_{\play}$ it follows that $\dummyB_{\play}\of\strat$ must belong to the annihilator of $\tspace_{\strat_{\play}}\intstrats_{\play}$, that is $\Span{\1_{\play}}$. In other words, the components of $\dummyB_{\play}\of\strat$ must all be the same:
\begin{equation}
\dummyB_{\findex}\of\strat = \dummyB_{\findexalt}\of\strat = \const_{\play}\of\strat \quad \forall \pure_{\play}, \purealt_{\play} \in \pures_{\play}, \quad \forall \strat \in \intstrats \eqcomma
\end{equation}
for some function $\const_{\play}\from\intstrats\to\R$.

We can get to this result more explicitly also by expanding the sum in \eqref{eq:dummy} and eliminating one of the components of $\tvec_{\play}$. Let $\dimStrats_{\play}+1$ be the number of pure strategies of player $\play$, and denote the set of \aposs{} pure strategies by $\pures_{\play} = \setof{0_{\play}, 1_{\play}, \dots, \dimStrats_{\play}}$. Then letting the index $\pure_{\play}$ run from $0_{\play}$ to $\dimStrats_{\play}$ and the index $\eff_{\play}$ run from $1_{\play}$ to $\dimStrats_{\play}$ we get
\begin{equation}
\begin{split}
0 & = \sum_{\pure_{\play}} \tvec_{\findex} \dummyB_{\findex}\of\strat =  \tvec_{\zindex} \dummyB_{\zindex}\of\strat + \sum_{\eff_{\play}} \tvec_{\rindex} \dummyB_{\rindex}\of\strat = \\
& -\sum_{\eff_{\play}}\tvec_{\rindex} \dummyB_{\zindex}\of\strat + \sum_{\eff_{\play}} \tvec_{\rindex} \dummyB_{\rindex}\of\strat = \sum_{\eff_{\play}} \tvec_{\rindex} \parens*{ \dummyB_{\rindex}\of\strat - \dummyB_{\zindex}\of\strat } 
\eqdot\end{split} 
\end{equation}
This time the $\tvec_{\rindex}$-s are $\dimStrats_{\play}$ unconstrained numbers, so the term in bracket must vanish identically, and we recover the sought after result that the components of $\dummyB_{\play}\of\strat$ must all be the same. Plugging this fact in the definition of $\dummyB_{\findex}$ and solving for $\gfield_{\findex}$ we get
\begin{equation}
\gfield_{\findex}\of\strat = \strat_{\findex}  \parens*{  \frac{\pd{\pay_{\play}}\of\strat}{\pd\strat_{\findex}} + \const_{\play}\of\strat } \eqdot
\end{equation}
So far we only used condition \ref{item:gradient-on-parallel} in \cref{def:individual-gradient} of individual gradient. The last step consists of invoking condition \ref{item:gradient-is-parallel} to fix the value of $\const_{\play}$ (this is the gauge-fixing procedure mentioned in \cref{rem:gradient-gauge-fixing}):
\begin{equation}
0 = \sum_{\pure_{\play}}\gfield_{\findex}\of\strat
=  \ll \sum_{\pure_{\play}} \strat_{\findex}  \frac{\pd{\pay_{\play}}\of\strat}{\pd\strat_{\findex}} \rr + \const_{\play}\of\strat \eqdot
\end{equation}
To conclude, $\pd\pay_{\play} / \pd\strat_{\play\pure_{\play}} = \pay_{\play}(\pure_{\play};\strat_{-\play})$ by \cref{eq:pay-lin}, so $\const_{\play}\of\strat
 = - \pay_{\play}\of\strat$ and $\gfield_{\findex}\of\strat = \strat_{\play\pure_{\play}}
		\bracks{\pay_{\play}(\pure_{\play};\strat_{-\play}) - \pay_{\play}(\strat)}$.
\end{proof}

\subsection{Effective \shah metric and individual gradient}
\label{sec:effective-individual-gradient}
In the previous section we work with objects in the \textit{full} representation, showing that the individual gradients of the payoff functions give to the full expression \eqref{eq:RD} of \acl{RD}. 
By working with object in the \textit{effective} representation we can provide an alternative and more concise proof for this fact: in such representation the parallelism condition~\ref{item:gradient-is-parallel} in \cref{def:individual-gradient} is automatically fulfilled, and one can use \cref{lemma:sharp-components-inverse-metric} to verify that the individual gradients with respect to the effective \shah metric \eqref{eq:eff-rep-shah} of the effective payoff functions give the effective replicator field \eqref{eq:RD-eff}.
\begin{proof}[Alternative proof of \cref{prop:rep-Shah}]
We need to verify that the matrix identity
\begin{equation}
\effrepfield_{\effindex}\of\effstrat
= \bracks{\effgmat_{\play}^{-1} \, \Eff{d}_{ \play}\effpay_{\play}}_{\effplay}
= \bracks{\effgmat_{\play}^{-1} \, \effpayfield_{\play}}_{\effplay}\of\effstrat
\end{equation}
holds true for all $\play \in \players$, all $\effplay \in \effrangeplay$, and all $\effstrat \in \intcorcube$. The effective replicator field $\effrepfield_{\effindex}\of\effstrat$ on the left hand side is given by \cref{eq:repfield-eff} (\cf \cref{rem:notation-eff-repfield} for the notation); the first equality comes from \cref{lemma:sharp-components-inverse-metric} for the components of the gradient field; and the second equality holds true because the individual differential of the effective payoff function is the effective payoff field by \cref{eq:individual-differential}. 

The inverse matrix $\effgmat^{-1}_{\play}$ of the effective \shah metric \eqref{eq:eff-rep-shah} can be computed by the Sherman\textendash Morrison formula:
\begin{equation}
\label{eq:eff-rep-inv-shah}
\effgmat^{-1}_{\effshindex} (\effstrat) 
=  \delta_{\effplay\effaltplay} \effstrat_{\effindex} - \effstrat_{\effindex}\effstrat_{\effindexalt}
\end{equation}
for all $\play \in \players, \effplay, \effaltplay \in \effrangeplay, \effstrat \in \intcorcube$; \cf \citep[Sec.~2]{LM15}. By matrix multiplication one then gets the sought after expression for the effective replicator field:
\begin{equation}
\sum_{\effaltplay} \effgmat_{\effshindex}^{-1} \effpayfield_{\effindexalt} \of{\effstrat}
 = \sum_{\effaltplay} \parens*{ \delta_{\effplay\effaltplay} \effstrat_{\effindex} - \effstrat_{\effindex}\effstrat_{\effindexalt} }\, \effpayfield_{\effindexalt}
= \effstrat_{\effindex} \bracks*{ \effpayfield_{\effindex}\of\effstrat - \sum_{\effaltplay} \effstrat_{\effindexalt}\effpayfield_{\effindexalt}\of\effstrat} = \effrepfield_{\effindex} \of{\effstrat} \eqdot \qedhere
\end{equation}
\end{proof}
In conclusion, the effective \acl{RD} are given by individual steepest ascent on effective payoff functions with respect to the effective \shah metric.
It is illustrative to see in a simple example how the functional form of the effective \shah metric guarantees that such dynamics remain confined to the \textit{interior} of the strategy space.
\begin{example}[\shah metric and \acl{RD} in a $2 \times 2$ game]
\label{ex:RD-22-logistic}
\begin{figure}
\centering
\includegraphics[width=0.6\textwidth]{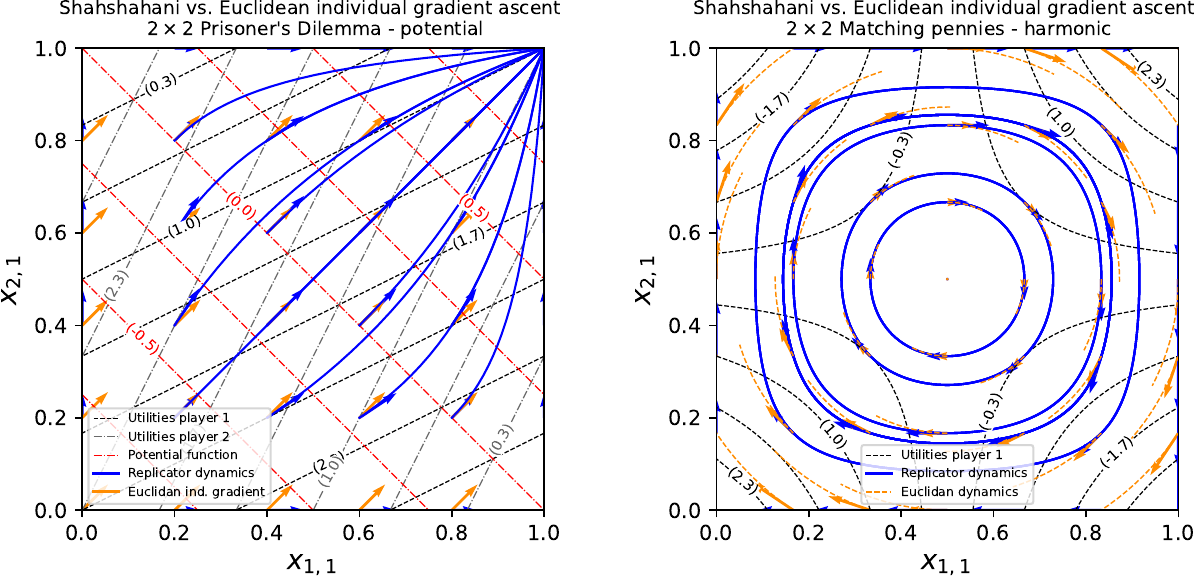}
\caption{Euclidean (orange) \vs \shah (blue) individual payoff gradients in a $2 \times 2$ potential and harmonic game (left and right respectively). Dark dotted lines represent payoff contours, and red dotted lines (left figure only) represent contours of the potential function. The replicator dynamical system \eqref{eq:logistic-system} is equivalent to individual \shah gradient ascent; the figure shows how the functional form of the inverse Shahshahani metric given by \cref{eq:functional-inverse-sha-2-2}, decaying to zero as the boundary is approached, is the key feature that confines \acl{RD} (blue) to the interior of the strategy space, whereas Euclidean steepest individual payoff ascent (orange) leads to hitting the boundary in finite time. The payoffs used in these examples are $u_{1} =(2, 0, 3, 1), u_2 = (2, 3, 0, 1)$ in Prisoner's Dilemma, that is potential with potential function $\pot = (-1, 0, 0, 1)$; and $u_1 = (3, -3, -3, 3), u_2 = (-3, 3, 3, -3)$ in rescaled Matching Pennies, that is harmonic.
}
\label{fig:RD-contours}
\end{figure}
In a $2 \times 2$ game the effective strategy space of each player is one dimensional, \ie $\effplay$ and $\effaltplay$ belong to the singleton $\setof{1_{\play}}$ for $\play \in \setof{1,2}$. As discussed in \cref{ex:22-game} the two reduced payoff fields, with one component each, are
\begin{gather}
\effpayfield_{1}(\effstrat_{1}, \effstrat_{2})
= \effstrat_{2} \bracks[\big]{ \pay_1(0, 0) -  \pay_1(0, 1) -  \pay_1(1, 0) +  \pay_1(1, 1)} -  \pay_1(0, 0) +  \pay_1(1, 0)
\\
\effpayfield_{2}(\effstrat_{1}, \effstrat_{2})
= \effstrat_{1} \bracks[\big]{ \pay_2(0, 0) -  \pay_2(0, 1) -  \pay_2(1, 0) +  \pay_2(1, 1)} -  \pay_2(0, 0) +  \pay_2(0, 1)
\end{gather}
By \cref{eq:eff-rep-inv-shah} the effective inverse \shah metric on each $1$-dimensional open corner of cube $\intcorcube = (0,1) \subset \R$ has only one component given by
\begin{equation}
\label{eq:functional-inverse-sha-2-2}
\begin{split}
\effgmat_{1}^{-1}\of{\effstrat_{1}, \effstrat_{2}}
 & = \effstrat_{1} (1-\effstrat_{1})
\\
\effgmat_{2}^{-1}\of{\effstrat_{1}, \effstrat_{2}}
 & = \effstrat_{2} (1-\effstrat_{2})
\end{split}
\end{equation}
This functional form of the inverse \shah metric is the key feature alluded at in \cref{rem:sha-boundary-1} that confines the \acl{RD} to the interior of the strategy space $\corcube$, as opposed to steepest individual payoff ascent with respect to the Euclidean metric, that leads to hitting the boundary in finite time. Expanding $\effgmat^{-1}_{i}$ around $\effstrat_{\play} = 0.5$ we see that $\effgmat^{-1}_{i}(\effstrat) \approx 0.25$, \ie toward the middle of the open corner of cube $\intcorcube$ the \shah metric is just a rescaled version of the Euclidean metric  $\effgmat^{-1}_{i} = 1$. On the other hand $\lim_{\effstrat_{\play} \to 0_{+}} \effgmat^{-1}_{i}(\effstrat) = \lim_{\effstrat_{\play} \to 1_{-}} \effgmat^{-1}_{i}(\effstrat) = 0 $, \ie toward the boundary of $\intcorcube$  the inverse \shah metric goes to zero, dampening the dynamics and bounding it to the interior of $\corcube$:  
\begin{equation}
\label{eq:logistic-system}
\begin{cases}
\effrepfield_{1}\of{\effstrat}
= \dot{\effstrat}_{1}
= \effg_{1}^{-1} \effpayfield_{1}
= \effstrat_{1} (1-\effstrat_{1}) \effpayfield_{1}\of{\effstrat}
=\effstrat_{1} \bracks*{\effpayfield_{1}\of{\effstrat} - \effstrat_{1} \, \effpayfield_{1}\of{\effstrat}}
\\
\effrepfield_{2}\of{\effstrat}
= \dot{\effstrat}_{2}
= \effg_{2}^{-1} \effpayfield_{2}
= \effstrat_{2} (1-\effstrat_{2}) \effpayfield_{2}\of{\effstrat}
= \effstrat_{2} \bracks*{\effpayfield_{2}\of{\effstrat} - \effstrat_{2} \, \effpayfield_{2}\of{\effstrat}}
\end{cases}
\end{equation}
which are \cref{eq:repfield-eff} for a $2 \times 2$ game (\cf\cref{rem:notation-eff-repfield} for the notation). The orbits of this dynamical system are plotted in \cref{fig:RD-contours} for Prisoner's Dilemma (potential) and Matching Pennies (harmonic).
\endenv
\end{example}

\subsection{Differential characterizations of potential games}
\label{sec:app-potential-games}

We conclude this appendix with two characterizations of \define{exact potential} games in the sense of \citet{MS96}.

\begin{definition}
The effective payoff field $\effpayfield$ of a finite game $\fingamefull$ is called \define{exact} if it is the differential of a function, namely if there exist a function $\pot\from\pures\to\R$, called \define{potential}, such that $\effpayfield_{\play}(\effstrat) = \Eff{d}_{\play}\Eff{\pot}(\effstrat)$ for all $\effstrat \in \corcube$ and all $\play \in \players$.
\end{definition}
\begin{remark}
We denote a function $\pot\from\pures\to\R$ and its multilinear extension $\pot\from\strats\to\R$,  $\pot(\strat) = \sum_{\pure}\strat_{\pure}\pot(\pure)$ by the same symbol, and it is understood that it is the multilinear extension undergoing differentiation. As in \cref{rem:individual-differential}, the differential $\d{f}$ of a differentiable function $f$ is the array of partial derivatives of the function. A tilde denotes as usual effective objects, and $\Eff{d}_{\play}$ denotes the array of partial derivatives with respect to the $\effstrat_{\play}$ coordinates.
\endenv
\end{remark}
Recall from the main text that a finite game $\fingame$ is exact-potential in the sense of \citet{MS96} if it admits a potential function $\pot\from\strats\to\R$ such that
\begin{equation}
\tag{\ref{eq:pot}}
\pay_{\play}(\purealt_{\play};\pure_{-\play}) - \pay_{\play}(\pure_{\play};\pure_{-\play})
	= \pot(\purealt_{\play};\pure_{-\play}) - \pot(\pure_{\play};\pure_{-\play})
\end{equation}
for all $\pure,\purealt\in\pures$ and all $\play\in\players$. The notion of exactness for the effective payoff field allows for an equivalent characterization:
\begin{lemma}
\label{lemma:potential-iff-effective-payfield-exact}
A finite game $\fingamefull$ is exact-potential in the sense of \citet{MS96} if and only if its effective payoff field $\effpayfield$ is exact.
\end{lemma}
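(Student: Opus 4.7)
The plan is to prove the two directions separately. The forward direction reduces to a chain-rule computation, while the reverse direction requires identifying the hypothesized effective potential with the effective representation of a multilinear function on $\strats$, which then plays the role of a potential via a strategic-equivalence argument.

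For the forward direction, assume $\fingame$ admits a potential $\pot$ in the sense of \cref{eq:pot}. Rearranging the potential condition shows that $\pay_\play(\pure)-\pot(\pure)$ is independent of $\pure_\play$; passing to multilinear extensions and differentiating yields $\payfield_{\play\pure_\play}(\strat) = \de_{\strat_{\findex}} \pot(\strat)$, since the ``leftover'' term depends only on $\strat_{-\play}$. Combining this with \cref{eq:eff-rep-payfield} and the chain-rule identity of \cref{eq:eff-chain-rule} then gives $\effpayfield_{\play\eff_\play}(\effstrat) = \de_{\effstrat_{\play\eff_\play}} \Eff{\pot}(\effstrat)$, where $\Eff{\pot}$ is the effective representation of $\pot$; hence $\effpayfield = \Eff{d}\Eff{\pot}$ is exact.

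For the reverse direction, suppose $\effpayfield_\play = \Eff{d}_\play \Eff{\pot}$ for some function $\Eff{\pot}\from\corcube\to\R$; the goal is to produce a potential $\pot\from\pures\to\R$ whose multilinear extension satisfies \cref{eq:pot}. By \cref{lemma:payfield-independent-strat}, each $\effpayfield_{\play\eff_\play}$ is independent of the entire block $\effstrat_\play$, so every partial derivative $\de_{\effstrat_{\play\eff_\play}} \Eff{\pot}$ is also independent of $\effstrat_\play$. This forces $\Eff{\pot}$ to be multilinear in each block $\effstrat_\play$ (no intra-block product terms). Now define $\pot\from\pures\to\R$ by $\pot(\pure) = \Eff{\pot}(\effstrat^\pure)$, where $\effstrat^\pure\in\corcube$ is the vertex corresponding to the pure profile $\pure$, and extend $\pot$ multilinearly to $\strats$. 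The effective representation of this extension is itself a polynomial multilinear in each block and agrees with the hypothesized $\Eff{\pot}$ at every vertex of $\corcube$; since a polynomial multilinear in each block is uniquely determined by its vertex values (by induction on players, using that an affine function on a single block is determined by its values at $0$ and at each one-hot vector), the two coincide, so the effective representation of $\pot$ is indeed $\Eff{\pot}$. By the forward direction applied to the common-payoff game $\alt\fingame$ in which every player has payoff $\pot$, its effective payoff field is $\Eff{d}\Eff{\pot} = \effpayfield$; hence $\fingame \sim \alt\fingame$ by \cref{prop:strategical-equivalence}, and since \cref{eq:strat-equiv} implies that strategic equivalence preserves the exact-potential condition \cref{eq:pot} with the same potential, we conclude that $\fingame$ is exact-potential with potential $\pot$.

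The main obstacle is the structural claim that $\Eff{\pot}$ is multilinear in each block $\effstrat_\play$. The critical input is \cref{lemma:payfield-independent-strat}: the independence of $\effpayfield_{\play\eff_\play}$ from the \emph{entire block} $\effstrat_\play$ (and not merely from the single component $\effstrat_{\play\eff_\play}$) is what rules out intra-block cross terms in $\Eff{\pot}$, enabling it to be matched with the effective representation of a genuinely multilinear function on $\strats$.
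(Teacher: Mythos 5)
Your forward direction contains a false intermediate step. From the fact that $\pay_{\play}(\pure)-\pot(\pure)$ is independent of $\pure_{\play}$ you conclude that $\payfield_{\findex}(\strat)=\pd\pot(\strat)/\pd\strat_{\findex}$, on the grounds that the ``leftover'' term depends only on $\strat_{-\play}$. This is not so: if $h_{\play}(\pure_{-\play})\defeq\pay_{\play}(\pure)-\pot(\pure)$, its multilinear extension is $\bar h_{\play}(\strat)=\parens[\big]{\sum_{\pure_{\play}}\strat_{\findex}}\cdot\sum_{\pure_{-\play}}h_{\play}(\pure_{-\play})\strat_{-\play,\pure_{-\play}}$, whose partial derivative with respect to $\strat_{\findex}$ is $h_{\play}(\strat_{-\play})\neq 0$ in general (the factor $\sum_{\pure_{\play}}\strat_{\findex}$ equals $1$ on the simplex but not as a polynomial on $\R^{\pures}$). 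Your claim asserts that every potential game is a \emph{full} potential game in the sense of \citet{San10}, which the paper explicitly refutes: in the $2\times2$ example of \cref{sec:app-potential-games}, $\payfield_{1,\strat_{1,0}}=2y_{0}$ while $\pd\pot/\pd\strat_{1,0}=-y_{0}$. The error is benign for your final conclusion, because the correct relation is $\payfield_{\findex}=\pd_{\findex}\pot+h_{\play}(\strat_{-\play})$ and the spurious term is common to all $\pure_{\play}$, hence cancels in the difference $\effpayfield_{\effindex}=\payfield_{\effindex}-\payfield_{\zindex}$; but as written the step is wrong and should be replaced by the difference-based argument (which is exactly how the paper proceeds, via \cref{eq:eff-chain-rule} and \cref{lemma:multilinear-extension-zero}).

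Your reverse direction is correct but takes a genuinely different route from the paper's. The paper's definition of exactness already posits a potential $\pot\from\pures\to\R$, so the paper simply unwinds the chain rule and invokes \cref{lemma:multilinear-extension-zero} to land on \cref{eq:pot} at pure profiles. You instead start from an arbitrary primitive $\Eff{\pot}\from\corcube\to\R$ of the effective payoff field, use \cref{lemma:payfield-independent-strat} to show it is affine in each block $\effstrat_{\play}$, match it with the effective representation of a multilinear function via its vertex values, and close through \cref{prop:strategical-equivalence}. This proves a formally stronger statement (any smooth primitive of $\effpayfield$ already forces the game to be potential) at the cost of considerably more machinery; both the block-multiaffinity argument and the vertex-interpolation step are sound.
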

\begin{proof}
Assume that $\effpayfield$ is exact with potential $\pot$. Then by chain rule $\effpayfield_{\effindex} \of\effstrat = \Eff{\pd}_{\effindex} \Eff{\pot}\of\effstrat = (\pd_{\effindex} - \pd_{\zindex})\pot \of\strat$, for all $\play \in \players$, $\effplay \in \effrangeplay$, and $\effstrat \in \corcube$. Since $\effpayfield_{\effindex}\of\effstrat =  \payfield_{\effindex}\of\strat - \payfield_{\zindex}\of\strat$ we conclude that $\pay_{\play}(\pure_{\play}, \strat_{\others}) - \pay_{\play}(0_{\play}, \strat_{\others}) = \pot(\pure_{\play}, \strat_{\others}) - \pot(0_{\play}, \strat_{\others})$ for all $\play \in \players$, $\pureplay \in \rangeplay$, and $\strat \in \strats$, implying in turn that the game is potential by \cref{lemma:multilinear-extension-zero}. Conversely, assume that $\fingamefull$ is potential; then the sequence of arguments is reversed without any change, showing that $\effpayfield$ is exact.
\end{proof}
In the context of population games this result can be found in \citet[Chapter 3.2.2]{San10}. The interpretation of this lemma is the following: Recall by \cref{eq:individual-differential} that the the effective payoff field $\effpayfield$ of a game is the array of individual differentials of the effective payoff functions, \ie $\effpayfield = \ll \effd_{\play} \effpay_{\play} \rr_{\play \in \players}$. As such, in general $\effpayfield$ is \textit{not} the differential of a function; \cref{lemma:potential-iff-effective-payfield-exact} says that a game is potential precisely when this is the case, namely when $\effpayfield = \ll \effd_{\play} \effpot \rr_{\play \in \players} = \effd\effpot$.

The differential characterization of potential games given above, relying on the notion of exact effective payoff field, is intrinsic of a game and \textit{independent} of any choice of metric. Yet, if a Riemannian metric on $\intstrats$ is available, the metric's non-degeneracy allows for another characterization of potential games, in a sense dual to the one given above.

Given a Riemannian metric $\gmat$ on $\intstrats$, and leveraging \cref{lemma:sharp-components-inverse-metric} relating the gradient and the differential of a function via the matrix of the metric, we say that a vector field $\vecfield$ on $\intstrats$ is the gradient of a function $\pot$ if $\vecfield = \g^{-1} \cdot d\pot$, where $\cdot$ denotes matrix multiplication. With this in mind we can give the following
\begin{lemma}
\label{prop:potential-iff-gradient}
Given a Riemannian metric $\g$ on $\intstrats$, a finite game $\fingamefull$ is an exact potential game if and only if its field of effective individual payoff gradients, denoted by $\effrepfield$, is the gradient of a function, namely if there exists a function $\pot\from\pures\to\R$ such that $\effrepfield = \effg^{-1} \cdot \effd\effpot$.
\end{lemma}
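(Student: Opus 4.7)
The plan is to deduce this characterization directly from the previous Lemma~\ref{lemma:potential-iff-effective-payfield-exact} by using the non-degeneracy of the Riemannian metric to pass between the effective payoff field $\effpayfield$ and the effective individual payoff gradient field $\effrepfield$.

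First I would record the identity $\effrepfield = \effg^{-1} \cdot \effpayfield$ on $\intcorcube$. At the per-player level this follows from Lemma~\ref{lemma:sharp-components-inverse-metric} together with Equation~\eqref{eq:individual-differential}, exactly as in the alternative proof of Proposition~\ref{prop:rep-Shah}: we have $\effrepfield_\play = \effgmat_\play^{-1} \cdot \effd_\play \effpay_\play = \effgmat_\play^{-1} \cdot \effpayfield_\play$ for every player $\play \in \players$. Since the product metric $\effg$ on $\intcorcube = \prod_\play \intcorcube_\play$ is block-diagonal with blocks $\effgmat_\play$ (cf.~Lemma~\ref{lemma:div-does-not-mix} for the analogous observation about divergence), these per-player identities assemble into the single global matrix equation $\effrepfield = \effg^{-1} \cdot \effpayfield$ on $\intcorcube$; and the positive-definiteness of $\effg$ makes this equivalent to $\effpayfield = \effg \cdot \effrepfield$.

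The equivalence then follows immediately. If $\fingame$ is exact potential, Lemma~\ref{lemma:potential-iff-effective-payfield-exact} furnishes a function $\effpot$ with $\effpayfield = \effd \effpot$; left-multiplying by $\effg^{-1}$ yields $\effrepfield = \effg^{-1} \cdot \effd \effpot$, exhibiting $\effrepfield$ as the $\effg$-gradient of $\effpot$. Conversely, if $\effrepfield = \effg^{-1} \cdot \effd \effpot$ for some function $\effpot$, left-multiplying by $\effg$ and applying the identity above gives $\effpayfield = \effd \effpot$, so the game is exact potential by Lemma~\ref{lemma:potential-iff-effective-payfield-exact}. There is no substantive obstacle here: the content beyond Lemma~\ref{lemma:potential-iff-effective-payfield-exact} is precisely the block-diagonal aggregation of the per-player gradient identities into a global one on the product manifold, which is pure bookkeeping enabled by the invertibility of $\effg$.
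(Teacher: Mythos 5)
Your proposal is correct and follows essentially the same route as the paper's own proof: both start from the identity $\effrepfield = \parens{\effgmat_{\play}^{-1}\cdot\effpayfield_{\play}}_{\play\in\players}$ and use the non-degeneracy (invertibility) of the metric to reduce the gradient condition to exactness of the effective payoff field, which Lemma~\ref{lemma:potential-iff-effective-payfield-exact} identifies with the exact-potential property. Your explicit remark that the per-player identities assemble via the block-diagonal structure of the product metric is a point the paper leaves implicit, but it is the same argument.
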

\begin{proof}
The field of effective individual payoff gradients of a game is
\begin{equation}
\effrepfield
= \ll \effgmat_{\play}^{-1} \cdot \tilde{d}_{\play} \effpay_{\play} \rr_{\play \in \players}
= \ll \effgmat_{\play}^{-1} \cdot \effpayfield_{\play} \rr_{\play \in \players} \eqdot
\end{equation}
This field is a gradient if there exists a function $\pot\from\pures\to\R$ such that 
\begin{equation}
\effrepfield
= \effgmat^{-1} \cdot \effd \effpot = \ll \effgmat_{\play}^{-1} \cdot \effd_{\play}\effpot \rr_{\play \in \players} \eqcomma
\end{equation}
where $\Eff{\pot}\from\corcube\to\R$ is the effective representation of the multilinear extension of $\pot\from\pures\to\R$. By non degeneracy of the Riemannian metric $\g$, the effective field of individual gradients is a gradient if and only if $\effpayfield_{\play} = \Eff{d}_{\play} \Eff{\pot}$ for all $\play \in \players$, \ie if the effective payoff field is exact, which is equivalent to $\fingamefull$ being an exact potential game by \cref{lemma:potential-iff-effective-payfield-exact}.
\end{proof}

An example of this result is given by \emph{Kimura's maximum principle} \cite{Kim58,Sha79}, which states that, in potential games, \acl{RD} (that is, the dynamics given by the field of individual payoff \shah gradients) is a \shah gradient system. \cref{prop:potential-iff-gradient} provides a broad generalization of this principle: in a potential game, given \textit{any} metric $\g$, the dynamics given by the field of individual payoff $\g$-gradients is a $\g$-gradient dynamics in its own right.

\paragraph{Full potential games} By \cref{lemma:potential-iff-effective-payfield-exact}, a finite game in normal form is potential if and only if its \textit{effective} payoff field is exact. In the context of population games, \citet[Chapter 3.1]{San10} calls a game such that the \textit{full} payoff field is exact, \ie  $\payfield = \d\pot$, a \define{full potential game}. One can show that if the full payoff field of the game is exact then the effective payoff field is exact too, but the converse needs not be true. In other words, $\payfield = \d\pot$ is a sufficient but not necessary condition for a game to be potential. We illustrate this fact with a simple example; for further details on the relation between potential games and full potential games we refer the reader to \citet[Chapter 3.2.3]{San10}.
\begin{example}
Consider the $2 \times 2$ potential game with payoffs $\pay$ and potential function $\pot$ given by
\begin{equation} \pay = 
\begin{pmatrix}
(2,2) & (0,3) \\
(3,0) & (1,1) \\
\end{pmatrix},
\quad
\pot = 
\begin{pmatrix}
-1 & 0 \\
0 & 1 \\
\end{pmatrix} \eqdot
\end{equation}
The corresponding \acl{RD} are plotted in \cref{fig:RD-contours}, left. Replacing for notational simplicity $(x_{1, 0}, x_{1,1}) \to (x_{0}, x_{1})$ and $(x_{2, 0}, x_{2,1}) \to (y_{0}, y_{1})$ we have from \cref{ex:22-game} that the full payoff field is
\begin{equation}
\payfield (x_0, x_1, y_0, y_1)  =
\ll
2  y_{0}   					, \, 
3  y_{0}  + y_{1} , \, 
2  x_{0}   					, \, 
3  x_{0}  + x_{1}
\rr \eqdot
\end{equation}
A simple check shows that
$
\partial_{y_0} v_{x_1} = 3 \neq 0 = \partial_{x_1} v_{y_0}
$
which, by a routine application of Poincaré's lemma \cite{hubbardVectorCalculusLinear2015}, implies that $\payfield$ is \emph{not} the differential of a function. On the other hand, the reduced payoff field is
\begin{equation}
\effpayfield( \Eff{x}_1, \Eff{y}_1 ) = (1,1) \eqcomma
\end{equation}
which is exact, since it is the differential of any function $\effpot(\Eff{x}_1, \Eff{y}_1) = \Eff{x}_1 + \Eff{y}_1 + \text{const}$. Note that the multilinear extension of the potential function of the game is $\pot( x_0, x_1, y_0, y_1 ) = - x_0 y_0 + x_1 y_1$, and its reduced form is $\pot( \Eff{x}_1, \Eff{y}_1 ) = - (1 - \Eff{x}_1) (1 - \Eff{y}_1) + \Eff{x}_1 \Eff{y}_1 = \Eff{x}_1 + \Eff{y}_1 - 1 $, showing that, albeit the full payoff field is not exact, the reduced payoff field is the differential of the reduced multilinear extension of the potential function of the game.
\end{example}

\section{Proofs on incompressible games}
\label{app:incompressible}

The results presented here heavily rely on the mathematical tools of \cref{app:geometry}, which we therefore recommend reading first.
\subsection{\shah divergence}
In this section we prove the results given in \cref{sec:results}. They all revolve around the expression of the \shah divergence of the effective replicator field that \textendash\ as discussed in more detail in \cref{app:div-cod-product} \textendash\ is given by \cref{eq:div-payfield} from the main text:
\begin{equation}
\tag{\ref{eq:div-payfield}}
\diver_{\play} \repfield_{\play}(\effstrat)
	= \frac{1}{\sqrt{\det\effgmat_{\play}(\effstrat_{\play})}}
		\sum_{\eff_{\play}=1}^{\nEffs_{\play}} \frac{\pd}{\pd\effstrat_{\play\eff_{\play}}}
			\parens*{\sqrt{\det\effgmat_{\play}(\effstrat_{\play})} \, \repfield_{\play\eff_{\play}}(\effstrat)} \quad \text{for all } \play \in \players \eqdot
\end{equation}
By \cref{lemma:div-does-not-mix}, the divergence operator on a product manifold - which in our case is the product $\intcorcube = \prod_{\play} \intcorcube_{\play}$ of open corner of cubes, where the effective game described in \cref{app:reduction} lives - is given by the sum of the divergence operators on the factor manifolds, which justifies \cref{def:incompressible} of incompressible games given in \cref{sec:decomposition}:
\DefIncompressible*
We devote the rest of this section to the proof of the following result:
\begin{proposition}
\label{th:shah-divergence-repfield-result}
The \shah divergence of the effective replicator field of a finite game $\fingamefull$ is
\begin{equation}
\label{eq:shah-divergence-repfield-result}
\div\repfield\of\effstrat = +\frac{1}{2} \sum_{\play\in\players}\sum_{\pureplay \in \pures_{\play}} \ll \payfield_{\findex}\of\strat - \pay_{\play}\of\strat \rr \eqdot
\end{equation}
\end{proposition}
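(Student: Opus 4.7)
The plan is to evaluate the Riemannian divergence formula \eqref{eq:div-payfield} term-by-term using the coordinate form \eqref{eq:det-sha-eff} of $\det\effgmat_\play$ and the explicit expression \eqref{eq:repfield-eff} of the effective replicator field, then combine everything using an identity that converts effective payoff quantities back into full ones.

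I would begin by applying the product rule to \eqref{eq:div-payfield}, writing $\div_\play\repfield_\play = T_1^\play + T_2^\play$ where
\begin{equation*}
T_1^\play = \sum_{\eff_\play}\frac{\pd_{\eff_\play}\sqrt{\det\effgmat_\play}}{\sqrt{\det\effgmat_\play}}\,\repfield_{\play\eff_\play},
\qquad
T_2^\play = \sum_{\eff_\play}\pd_{\eff_\play}\repfield_{\play\eff_\play}.
\end{equation*}
Using \eqref{eq:det-sha-eff} and the fact that $\strat_{\play 0_\play} = 1 - \sum_{\effalt_\play}\effstrat_{\play\effalt_\play}$, a short logarithmic-derivative computation gives
\begin{equation*}
\frac{\pd_{\eff_\play}\sqrt{\det\effgmat_\play}}{\sqrt{\det\effgmat_\play}} = \tfrac{1}{2}\bracks*{\tfrac{1}{\strat_{\play 0_\play}} - \tfrac{1}{\effstrat_{\play\eff_\play}}}.
\end{equation*}

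The next step is a key algebraic identity that bridges full and effective representations: from $\effpayfield_{\play\eff_\play} = \payfield_{\play\eff_\play} - \payfield_{\play 0_\play}$ and the expansion $\pay_\play = \strat_{\play 0_\play}\payfield_{\play 0_\play} + \sum_{\effalt_\play}\effstrat_{\play\effalt_\play}\payfield_{\play\effalt_\play}$, one checks that
\begin{equation*}
\effpayfield_{\play\eff_\play} - \sum_{\effalt_\play}\effstrat_{\play\effalt_\play}\effpayfield_{\play\effalt_\play} = \payfield_{\play\eff_\play} - \pay_\play,
\qquad
\sum_{\effalt_\play}\effstrat_{\play\effalt_\play}\effpayfield_{\play\effalt_\play} = \pay_\play - \payfield_{\play 0_\play}.
\end{equation*}
Substituting the first identity into \eqref{eq:repfield-eff} yields $\repfield_{\play\eff_\play} = \effstrat_{\play\eff_\play}(\payfield_{\play\eff_\play} - \pay_\play)$, and plugging this into $T_1^\play$ together with the logarithmic derivative gives, after canceling $\strat_{\play 0_\play}$ and reconstituting the missing $\pure_\play = 0_\play$ term, $T_1^\play = -\tfrac{1}{2}\sum_{\pure_\play\in\pures_\play}(\payfield_{\play\pure_\play} - \pay_\play)$.

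For $T_2^\play$, I would apply $\pd_{\eff_\play}$ directly to \eqref{eq:repfield-eff} and invoke \cref{lemma:payfield-independent-strat} in its effective form: since $\effpayfield_\play$ does not depend on $\effstrat_\play$, every derivative of an $\effpayfield_{\play\effalt_\play}$ factor vanishes, leaving only the derivatives of the coordinate factors $\effstrat_{\play\effalt_\play}$. This yields $\pd_{\eff_\play}\repfield_{\play\eff_\play} = (\payfield_{\play\eff_\play} - \pay_\play) - \effstrat_{\play\eff_\play}\effpayfield_{\play\eff_\play}$; summing on $\eff_\play$ and using the second algebraic identity above to compute $\sum_{\eff_\play}\effstrat_{\play\eff_\play}\effpayfield_{\play\eff_\play} = \pay_\play - \payfield_{\play 0_\play}$ reinstates the $\pure_\play = 0_\play$ term, giving $T_2^\play = \sum_{\pure_\play\in\pures_\play}(\payfield_{\play\pure_\play} - \pay_\play)$. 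Adding $T_1^\play + T_2^\play$ produces $\tfrac{1}{2}\sum_{\pure_\play}(\payfield_{\play\pure_\play} - \pay_\play)$, and summing over $\play\in\players$ delivers \eqref{eq:shah-divergence-repfield-result}.

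I expect the main obstacle to be purely bookkeeping: tracking which sums range over $\effPures_\play = \{1_\play,\dotsc,\nEffs_\play\}$ versus $\pures_\play = \{0_\play,\dotsc,\nEffs_\play\}$, and correctly combining the $\pure_\play = 0_\play$ contribution that emerges from telescoping with those already present. The conceptual pivot is \cref{lemma:payfield-independent-strat}, without which $T_2^\play$ would contain a tangle of second-order payoff derivatives; with it, the entire computation collapses to an elementary index manipulation.
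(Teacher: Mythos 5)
Your proposal is correct and follows essentially the same route as the paper's proof: the same product-rule split into a metric-dependent and a metric-independent term, the same logarithmic derivative of $\sqrt{\det\effgmat_{\play}}$, and the same appeal to \cref{lemma:payfield-independent-strat} to annihilate the derivatives of the effective payoff field. Your intermediate identity $\repfield_{\play\eff_{\play}} = \effstrat_{\play\eff_{\play}}\parens{\payfield_{\play\eff_{\play}} - \pay_{\play}}$ simply makes explicit the ``elementary but lengthy calculation'' that the paper leaves to the reader, and both computations land on the same per-player contributions $-\tfrac{1}{2}\sum_{\pure_{\play}}(\payfield_{\play\pure_{\play}} - \pay_{\play})$ and $\sum_{\pure_{\play}}(\payfield_{\play\pure_{\play}} - \pay_{\play})$.
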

\begin{proof}
To prove this result we will first compute  $\div_{\play}{\repfield_{\play}}\of\effstrat$ for some $\play \in \players$, dropping the index $\play$ for notational simplicity. Recall by \cref{eq:riemannian-divergence-two-terms} that the expression for the divergence can be rewritten by product rule as the sum of two terms, one metric-dependent and one metric-independent:
\begin{equation}
\div_{\play}{\repfield_{\play}}\of\effstrat
\overbrace{\equiv}^{\text{Drop } \play \text{ index}}
\div{\repfield}\of\effstrat
= \sum_{\eff = 1}^{\nEffs} \ll \frac{\de_{\effstrat_{\eff}}\sqrt{\det{\effgmat}}}{\sqrt{\det{\effgmat}}} \rr \repfield_{\eff}(\effstrat)
+ \sum_{\eff = 1}^{\nEffs} \de_{\effstrat_{\eff}}\repfield_{\eff} (\effstrat) \eqcomma
\end{equation}
where $\de_{\effstrat_{\eff}}$ is a shorthand for $\frac{\pd}{\pd\effstrat_{\eff}}$, and $\det{\effgmat}$ is given by \cref{eq:det-sha-eff}:
\begin{equation}
\tag{\ref{eq:det-sha-eff}}
\det{\effgmat}(\effstrat) 
= \frac{1}{\left(1-\sum_{\eff = 1}^{\nEffs} \effstrat_{\eff}\right)\prod_{\effalt = 1}^{\nEffs} \effstrat_{\effalt}}
= \frac{1}{ \strat_{0} \prod_{\effalt = 1}^{\nEffs} \effstrat_{\effalt}} \eqdot
\end{equation}

\paragraph{Metric-independent term of the divergence}
Denote by  $\bar{\payfield}\of\strat \defeq \sum_{\pure} \strat_{\pure}\payfield_{\pure}\of\strat$, and by $\bar{\effpayfield} \of\effstrat \defeq  \sum_{\eff}\effstrat_{\eff}\effpayfield_{\eff}\of\effstrat$, so that the full and effective \acl{RD} (with player index suppressed) read
\begin{subequations}
\begin{align}
\dot{\strat}_{\pure} = 
\repfield_{\pure}(\strat)
= \strat_{\pure}
		\bracks*{
			\payfield_{\pure}(\strat)
			- \bar{\payfield}(\strat)} \quad \text{ for all } \pure = 0, \dots, \nEffs \eqcomma
\\
\dot{\effstrat}_{\eff} = 
\repfield_{\eff}(\effstrat)
	= \effstrat_{\eff}
		\bracks*{\effpayfield_{\eff}(\effstrat)
			- \bar{\effpayfield}(\effstrat)}  \quad  \text{ for all } \eff = 1, \dots, \nEffs \eqdot
\end{align}
\end{subequations}

\begin{remark}
Reinserting for the scope of this remark the player index, a simple check shows that $\bar{\effpayfield}_{\play}(\effstrat) = \bar{\payfield}_{\play}(\strat) - \payfield_{\zindex}(\strat)$; note that $\bar{\payfield}_{\play}\of\strat$ is precisely the full payoff function $\pay_{\play}\of\strat$, while $\bar{\effpayfield}_{\play}\of\effstrat$ is \textit{not} the effective payoff function $\effpay_{\play}\of\effstrat$, the difference between the two being precisely $\payfield_{\zindex}\of\strat$.
\endenv
\end{remark}
As usual, the indices $\eff$ and $\effalt$ appearing in the following expressions are understood to run over $1, \dots, \nEffs$; and the indices $\pure$ and $\purealt$ are understood to run over $0, \dots, \nEffs$. With these notational caveats in place, the metric-independent term of the divergence reads
\begin{equation}
\label{eq:div-metric-independent-full}
\begin{split}
\sum_{\eff}\de_{\effstrat_{\eff}}\repfield_{\eff}(\effstrat) & =  \sum_{\eff}\de_{\effstrat_{\eff}} \Big( \effstrat_{\eff}(\effpayfield_{\eff}(\effstrat) - \bar{\effpayfield}(\effstrat)) \Big) \\
& =  \sum_{\eff}(\effpayfield_{\eff}-\bar{\effpayfield}) +\sum_{\eff}\effstrat_{\eff}\de_{\effstrat_{\eff}}\effpayfield_{\eff}-\bar{\effpayfield} - \sum_{\eff\effalt}\effstrat_{\eff}\effstrat_{\effalt}\de_{\effstrat_{\eff}}\effpayfield_{\effalt} \eqdot \\
\end{split}
\end{equation}
By \cref{lemma:payfield-independent-strat} the second and fourth terms vanish: re-inserting the player index $\play \in \players$, 
$
\frac{\pd \effpayfield_{\effindexalt}}{\pd\effstrat_{\effindex}}\of\effstrat \equiv 0
$
since the components of the reduced payoff field of player $\play \in \players$ do not depend on the mixed coordinates of player $\play$. This leads to
\begin{equation}
\sum_{\eff}\de_{\effstrat_{\eff}}\repfield_{\eff}(\effstrat) 
= \sum_{\eff} \ll \payfield_{\eff} - \payfield_{0} - \bar{\payfield} + \payfield_{0} \rr
- \bar{\payfield} + \payfield_{0}
= \sum_{\pure} (\payfield_{\pure} - \bar{\payfield}) \eqcomma
\end{equation}
and in conclusion the metric-independent term of the divergence $\div{\repfield}\of\effstrat$ is
\begin{equation}
\label{eq:divergence-metric-independent}
\sum_{\eff}\de_{\effstrat_{\eff}}\repfield_{\eff}(\effstrat) =  \sum_{\pure} \ll \payfield_{\pure}\of\strat - \bar{\payfield}\of\strat \rr  \eqcomma
\end{equation}
where it is understood that effective objects (\ie containing the effective payoff field $\effpayfield$) are evaluated at $\effstrat$, and full objects (\ie containing the full payoff field $\payfield$) are evaluated at $\strat$, with $\strat$ and $\effstrat$ related as usual by the map $\chart$ of \cref{eq:simplex-chart}.

In settings with non-multilinear utility functions, such as games with continuous action sets and differentiable payoff functions \citep{letcherDifferentiableGameMechanics2019,MZ19,Ros65}, the terms of \cref{eq:div-metric-independent-full} containing derivatives of the payoff field do \textit{not} necessarily vanish; as a reference for possible applications to these settings we provide below the version of \cref{eq:divergence-metric-independent} including such terms. To this end denote by $B = \jac{\effpayfield}$ the Jacobian matrix of the effective payoff field, \ie $B_{\effalt\eff} = \de_{\effstrat_{\eff}}\effpayfield_{\effalt}$, and  recall that $\effpayfield_{\eff} = \payfield_{\eff} - \payfield_{0}$ to rewrite \cref{eq:div-metric-independent-full} as
\begin{equation}
\sum_{\eff}\de_{\effstrat_{\eff}}\repfield_{\eff}(\effstrat)
= \sum_{\pure} (\payfield_{\pure} - \bar{\payfield})
+ \ll \sum_{\eff}\effstrat_{\eff}B_{\eff\eff} - \sum_{\eff\effalt}\effstrat_{\eff}\effstrat_{\effalt}B_{\eff\effalt} \rr \eqdot
\end{equation}
The idea now is that of re-arranging the terms that contain effective coordinates $\effstrat$ and combine them into objects expressed in terms of full coordinates $\strat$. After a straightforward but tedious computation leveraging \cref{lemma:chain-derivative-eff-payfield} the second term on the right hand side can be expressed in terms of full objects as
\begin{equation}
 \sum_{\eff}\effstrat_{\eff}B_{\eff\eff} - \sum_{\eff\effalt}\effstrat_{\eff}\effstrat_{\effalt}B_{\eff\effalt} 
= \sum_{\pure}\strat_{\pure}\de_{\pure}\payfield_{\pure} - \sum_{\pure\purealt}\strat_{\pure}\strat_{\purealt}\de_{\pure}\payfield_{\purealt}
\eqdot
\end{equation}

\paragraph{Metric-dependent term of the divergence}
A direct computation shows that
\begin{equation}
\frac{\de_{\effstrat_{\eff}}\sqrt{\det{\effgmat}}}{\sqrt{\det{\effgmat}}} = -\frac{1}{2}\frac{\effstrat_{0}-\effstrat_{\eff}}{\effstrat_{0}\effstrat_{\eff}} \eqdot
\end{equation}
Again, one needs to manipulate the terms containing effective coordinates to express them in terms of full coordinates. After an elementary but lengthy calculation, we obtain that the metric-dependent term of the divergence $\div{\repfield}\of\effstrat$ is
\begin{equation}
\label{eq:divergence-metric-dependent}
\sum_{\eff}\repfield_{\eff}(\effstrat) \, \frac{\de_{\effstrat_{\eff}}\sqrt{\det{\effgmat}}}{\sqrt{\det{\effgmat}}}\of\effstrat
= -\frac{1}{2} \sum_{\pure} \ll \payfield_{\pure}\of\strat - \bar{\payfield}\of\strat \rr
\eqdot
\end{equation}
\paragraph{Conclusion}
Summing \cref{eq:divergence-metric-dependent,eq:divergence-metric-independent} and re-inserting the player index we finally get
\begin{equation}
\div_{\play}\repfield_{\play}\of\effstrat = +\frac{1}{2} \sum_{\pureplay \in \pures_{\play}} \ll \payfield_{\findex}\of\strat - \pay_{\play}\of\strat \rr
\eqcomma
\end{equation}
and the proof is completed by summing over the player index $\play \in \players$.
\end{proof}
To conclude this section we provide an equivalent expression for the \shah divergence of the effective replicator field. For all $\play \in \players$ define the \define{barycenter} $\b_{\play} \in \strats_{\play} $ as the point with coordinates
\begin{equation}
\triple{\b}{\play}{\pure} \defeq \frac{1}{\nPures_{\play}} \quad \text{for all } \pureplay \in \pures_{\play}
\end{equation}
on the mixed strategy space $\strats_{\play}$ of player $\play \in \players$, where $\nPures_{\play} = \abs{\pures_{\play}}$ is the number of pure strategies of player $\play$. Similarly, define $\1_{\play} \defeq (1, \dots, 1) \in \R^{\nPures_{\play}}$.
Then
\begin{lemma}[Equivalent expression of the \shah divergence]
\label{prop:equivalent-divergence-expressions}
The \shah divergence of the effective replicator field of a finite game $\fingamefull$ fulfills
\begin{equation}
- \div\repfield\of\effstrat
= \frac{1}{2}\sum_{\play \in \players} \sum_{\pure_{\play} \in \pures_{\play}} \ll \pay_{\play} - \V{\play}{\pure} \rr \of \strat 
= \frac{1}{2}\sum_{\play} \nPures_{\play} \, \payfield_{\play}\of \strat  \cdot \ll \strat_{\play} -  \b_{\play}\rr 
\eqdot
\end{equation}
\end{lemma}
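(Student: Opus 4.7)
The first equality is an immediate restatement of the preceding \cref{th:shah-divergence-repfield-result}: just change the overall sign and swap the two terms inside the parenthesis. So the only content to verify is the second equality, namely that for each fixed $\play\in\players$,
\begin{equation*}
\sum_{\pure_{\play}\in\pures_{\play}} \bracks{\pay_{\play}(\strat) - \payfield_{\play\pure_{\play}}(\strat)}
	= \nPures_{\play}\,\payfield_{\play}(\strat)\cdot(\strat_{\play} - \b_{\play}).
\end{equation*}

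The plan is a one-line algebraic manipulation of each of the two terms on the left. For the first term, $\pay_{\play}(\strat)$ does not depend on the summation index $\pure_{\play}$, so summing over $\pures_{\play}$ produces the factor $\nPures_{\play}$. I then rewrite $\pay_{\play}(\strat)$ using \eqref{eq:pay-lin} as $\pay_{\play}(\strat) = \dualp{\payfield_{\play}(\strat)}{\strat_{\play}}$, giving $\nPures_{\play}\,\payfield_{\play}(\strat)\cdot\strat_{\play}$. For the second term, I use the definition of the barycenter $\b_{\play}$: since $\triple{\b}{\play}{\pure} = 1/\nPures_{\play}$ for all $\pure_{\play}$, we have $\sum_{\pure_{\play}}\payfield_{\play\pure_{\play}}(\strat) = \dualp{\payfield_{\play}(\strat)}{\1_{\play}} = \nPures_{\play}\,\payfield_{\play}(\strat)\cdot\b_{\play}$. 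Subtracting yields the right-hand side.

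There is no real obstacle here: both pieces follow from the linearity of the inner product together with the identifications $\pay_{\play}(\strat) = \dualp{\payfield_{\play}(\strat)}{\strat_{\play}}$ and $\b_{\play} = \1_{\play}/\nPures_{\play}$. The only point worth being careful about is that $\payfield_{\play}(\strat)$ \emph{does not} depend on $\strat_{\play}$ (by \cref{lemma:payfield-independent-strat}), so it factors cleanly out of both sums as a fixed vector in $\R^{\pures_{\play}}$ against which we pair $\strat_{\play}$ and $\b_{\play}$ respectively.
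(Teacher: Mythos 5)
Your proposal is correct and matches the paper's own argument essentially verbatim: both reduce the claim to the identities $\pay_{\play}(\strat) = \dualp{\payfield_{\play}(\strat)}{\strat_{\play}}$ and $\sum_{\pure_{\play}}\payfield_{\play\pure_{\play}} = \dualp{\payfield_{\play}}{\1_{\play}} = \nPures_{\play}\,\payfield_{\play}\cdot\b_{\play}$ and then subtract. The only (harmless) superfluity is the appeal to \cref{lemma:payfield-independent-strat} — no derivatives are taken here, so the fact that $\payfield_{\play}(\strat)$ does not depend on $\strat_{\play}$ plays no role; $\payfield_{\play}(\strat)$ is simply a fixed vector once $\strat$ is fixed.
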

\begin{proof}
From \cref{th:shah-divergence-repfield-result} and the application of definitions it follows that
\begin{equation}
\begin{split}
- \div\repfield\of\effstrat
& = \frac{1}{2}\sum_{\play \in \players} \sum_{\pure_{\play} \in \pures_{\play}} \ll \pay_{\play} - \V{\play}{\pure} \rr \of \strat 
 = \frac{1}{2}\sum_{\play} \ll \nPures_{\play}\pay_{\play} - \payfield_{\play} \cdot \1_{\play} \rr \of \strat \\
& = \frac{1}{2}\sum_{\play} \nPures_{\play}\ll\pay_{\play} - \payfield_{\play} \cdot \b_{\play}\rr \of \strat 
 = \frac{1}{2}\sum_{\play} \nPures_{\play} \, \payfield_{\play} \of \strat \cdot \ll \strat_{\play} -  \b_{\play}\rr  \eqdot \qedhere
\end{split}
\end{equation}
\end{proof}
\subsection{Incompressible games are precisely harmonic games}
The expression of the \shah divergence of the field of individual \shah payoff gradients of a game $\div\repfield\of\effstrat$ allows to establish an important connection between incompressible games, \ie those game for which $\div\repfield$ vanishes identically, and the harmonic games introduced by \citet{CMOP11}. Recall by \cref{eq:harm} that a game is harmonic if
\begin{equation}
\tag{\ref{eq:harm}}
\insum_{\play\in\players} \insum_{\purealt_{\play}\in\pures_{\play}}
	\bracks{\pay_{\play}(\purealt_{\play};\pure_{-\play}) - \pay_{\play}(\pure_{\play};\pure_{-\play})}
	= 0
\end{equation}
for all $\pure \in \pures$. With this at hand we can prove \cref{thm:harmonic}:
\ThHarmonicIncompressible*
\begin{proof}
Begin by noting that the \cref{eq:harm} characterizing harmonic games can be recast as
$
F(\alpha) = 0
$
for all $\pure \in \pures$, with  $F \from \pures \to \R$ defined by
\begin{align}
F(\alpha)
	&\defeq \sum_{i \in \players}\sum_{\beta_{i} \in \pures_{\play}} \parens[\Big]{ u_{i}(\beta_{i}, \alpha_{-i}) - u_{i}(\alpha_{i}, \alpha_{-i}) }
	\notag\\
	&= \sum_{i} \bracks*{ \ll \sum_{\beta_{i}}  u_{i}(\beta_{i}, \alpha_{-i}) \rr - \nPures_{i} u_{i}(\alpha) }
	\eqdot
\end{align}
We claim that the \shah divergence of the effective replicator field of a finite game is (up to a factor of $1/2$) the multilinear extension of the function $F$ defined above:
\begin{equation}
\label{eq:shah-divergence-repfield-multilinear}
\div\repfield\of\effstrat = \frac{1}{2} \bar{F}(\strat) \defeq \frac{1}{2} \sum_{\pure \in \pures} \strat_{\pure} F(\pure) \eqcomma
\end{equation}
for all $\effstrat \in \corcube$. If we show this the proof is concluded, since invoking \cref{lemma:multilinear-extension-zero} we have
\begin{align}
\text{harmonic }
	&\iff F \equiv 0 \eqcomma
	\notag\\
\text{incompressible }
	&\iff \bar{F} \equiv 0 \eqcomma
	\notag\\
F \equiv 0
	&\iff \bar{F} \equiv 0 \eqdot
\end{align}
\Cref{eq:shah-divergence-repfield-multilinear} is verified by a standard multilinear calculation:
\begin{align}
\bar{F}(\strat)
	&= \sum_{\pure \in \pures} \strat_{\pure} F(\pure)
	= \sum_{\play} \bracks*{ \ll \sum_{\beta_{i}}  u_{i}(\beta_{i}, \strat_{-i}) \rr - \nPures_{\play}\pay_{\play}(\strat) }
	= \sum_{\play} \ll  \payfield_{\play} \cdot \1_{\play} - \nPures_{\play}\pay_{\play} \rr (\strat)
\end{align}
for all $\strat \in \strats$; so from \cref{prop:equivalent-divergence-expressions} we have that $\div\repfield = 1/2 \bar{F}$, concluding the proof.
\end{proof}
Having established the equivalence between harmonic and incompressible games, the following decomposition result comes as an immediate corollary of the strategic decomposition \eqref{eq:Candogan}:
\ThDecomposition*
\begin{proof}
As shown by \citet{CMOP11} every finite game can be decomposed as the sum of a potential and a harmonic game, which gives the decomposition $\fingame = \potgame + \incgame$ in light of the fact that a game is incompressible if and only if it is harmonic.

As for the second point, the field of individual gradients of the incompressible game $\incgame$ is incompressible by definition; and \cref{prop:potential-iff-gradient} asserts that a game is an exact potential game in the sense of \citet{MS96} if and only if its effective field of individual gradients is the Riemannian gradient of a function $\pot$, \ie the field $\repfield$ of the potential game $\potgame$ can be expressed as $\grad{\pot}$.
\end{proof}
\subsection{Dynamics on incompressible games and Poincaré’s recurrence}
\label{app:poincare-recurrence-primal-dual}
Finally we turn our attention at the dynamical properties of \acl{RD} on incompressible games; our first result is a consequence of  the Riemannian version of Liouville's theorem presented in \cref{sec:flows-manifolds}.
\PropVolumeConservation*
\begin{proof}
By definition, a game is incompressible if and only if the \shah divergence of the field $\repfield\of\effstrat$ vanishes identically. Since \eqref{eq:RD} is the dynamical system on $\intcorcube$ given by $\dot{\effstrat} = \repfield\of\effstrat$, as a consequence of Liouville's theorem (\cf \cref{eq:zero-riemannian-divergence-volume-conservation}) we have that
$
\vol\of{\open_{\time}} = \vol\of{\open}
$
for all open sets $\open \subseteq \intcorcube$ and all $\time \in \R$, meaning that \eqref{eq:RD} is volume-preserving with respect to the \shah volume form.
\end{proof}
Our next result shows that \acl{RD} on incompressible games admit a constant of motion.
\ThmIncompressibleConstantOfMotion*
\begin{proof}
For each $\play \in \players$ consider the Kullback–Leibler divergence with respect to the barycenter $\KL_{\b_{\play}} \from \intstrats_{\play}\to\R$,
\begin{equation}
\KL_{\b_{\play}}(\strat_{\play})
=  \sum_{\pureplay \in \pures_{\play}} \b_{\findex} \log{ \frac{\b_{\findex}}{\strat_{\findex}} } \eqdot
\end{equation}
and scale it by the number $\nPures_{\play}$ of pure strategies of player $\play$:
\begin{equation}
\energy_{\play} (\strat_{\play})
\defeq  \nPures_{\play} \KL_{\b_{\play}} (\strat_{\play})
= - \ll  \nPures_{\play}  \log{\nPures_{\play}} +  \sum_{\pureplay \in \pures_{\play}} \log{\strat_{\findex}} \rr \eqdot
\end{equation}
The derivative of $\energy_{\play}$ along a solution trajectory of \eqref{eq:RD} is 
$
\frac{d}{d\time} \energy_{\play}(\strat_{\play}(\time))  
=   -\sum_{\pureplay \in \pures_{\play}} \frac{\dot{\strat}_{\findex}}{\strat_{\findex}}
=   -\sum_{\pureplay \in \pures_{\play}} \bracks{ \payfield_{\findex}(\strat) - \pay_{\play}(\strat)   }
$, so by \cref{th:shah-divergence-repfield-result},
\begin{equation}
\div\repfield\of{\effstrat\of\time} =  -\frac{1}{2}   \frac{\d}{\d{t}} \sum_{\play \in \players}  \energy_{\play}(\strat_{\play}(\time)) \quad \text{along RD}  \eqdot
\end{equation}
It follows that the $\nPures_{\play}$-weighted sum $\energy\from\intstrats\to\R$ of Kullback–Leibler divergences with respect to the barycenter,
\begin{equation}
\label{eq:constant-of-motion}
\energy\of\strat
\defeq \sum_{\play \in \players} \energy _{\play} \of{\strat_{\play}}
=  \sum_{\play \in \players}  \nPures_{\play} \KL_{\b_{\play}} (\strat_{\play})
= - \ll \sum_{\play \in \players} \sum_{\pureplay \in \pures_{\play}} \log{\strat_{\findex}} + \text{const} \rr \eqcomma
\end{equation}
is a constant of motion for the \eqref{eq:EW}\,/\,\eqref{eq:RD} dynamics on incompressible games, with $\text{const} = \sum_{\play} \nPures_{\play} \log{\nPures_{\play}}$.
\end{proof}
The function $\energy \from \intstrats\to\R$ is a nonnegative weighted sum of convex functions, hence it is convex. Its sublevel sets are then compact convex sets, and as such homeomorphic to closed balls, making in turn their boundaries \textendash{} the level sets of $\energy$ \textendash{} homeomorphic to $\left[(\sum_{\play}\nPures_{\play})-\nPlayers-1\right]$-dimensional spheres. Since \ac{RD} solution trajectories are constrained to the level sets of the integral of motion $\energy$, this shows that $\intstrats$ admits a foliation under \acl{RD} on harmonic games with leaves given by concentric topological spheres, as mentioned in \cref{sec:results} in the main body of the article.
\begin{remark}[Harmonic \vs zero-sum games]
\label{rem:harmonic-vs-zero-sum}
In two-player \acp{ZSG} with an interior Nash equilibrium $\strat^{\ast}$ the sum of Kullback-Leibler divergences with respect to said equilibrium, $\sum_{\play} \KL_{\strat^{\ast}_{\play}} (\strat_{\play})$,  is a constant of motion for \acl{RD} \citep{MPP18}. Harmonic games and \acp{ZSG} have non-trivial intersection; in particular, normalized harmonic games where all players have the same number of strategies are zero-sum \citep{CMOP11}. In this case the $\nPures_{\play}$-s factor out from \cref{eq:constant-of-motion}, and the two constants of motions coincide, up to affine transformations.

Despite having non-trivial intersection, harmonic and \acp{ZSG} exhibit some very different properties. A \ac{ZSG} can also be a potential game, while the only harmonic game which admits a potential is the zero game (up to strategic equivalence)\citep{CMOP11}; Harmonic games always admit a fully mixed equilibrium \citep{CMOP11}, while \acp{ZSG} do not (that is, not always); \acp{ZSG} may admit strict Nash equilibria, while harmonic games never do; \acl{RD} are recurrent in \acp{ZSG} with a \textit{fully mixed} equilibrium \citep{MPP18} but convergent in \acp{ZSG} with a \textit{strict} equilibrium \citep{mertikopoulosLearningGamesReinforcement2016}, while they are always recurrent in harmonic games (\cref{thm:recurrent}). As minimal example setting apart harmonic and zero-sum games consider the $2\times 3$ game with payoffs
\begin{equation}
u_1=
\begin{pmatrix}
a & b & -a-b \\
-a & -b & a+b
\end{pmatrix}
\quad \text{and} \quad
u_2 = -\frac{2}{3}u_1 \eqdot
\end{equation}
This game is is harmonic for every choice of $a,b \in \mathbb{R}$, and never zero-sum (except for the trivial case).  
\endenv
\end{remark}
\paragraph{Poincaré recurrence in mixed strategy space}
We turn now to our last result, namely to the fact that \acl{RD} on harmonic games are \textit{recurrent} in the sense of~\citet{poincareProblemeTroisCorps1889}.

Several works in the literature \citep{MPP18,PS14} discuss Poincaré recurrence in the context of zero-sum games with an interior Nash equilibrium, and positive affine transformations or polymatrix/network versions of the above.  The idea is to transform via a suitable diffeomorphism the replicator system in the interior of the strategy space to a system which is divergence-free \textit{under the Euclidean metric}; and to show that, under this transformations, all the orbits of \ac{RD} in the particular class of games at hand are bounded, \eg by exhibiting a constant of motion with bounded level sets.

We tackle the problem from a different angle: our proof relies on the fact that in harmonic games the replicator system itself \textendash\ without undergoing any transformation \textendash\ is volume-preserving under the \shah metric, and that the \shah volume of the space of mixed strategies is finite; the conclusion then follows from the Riemannian versions of Poincaré's theorem presented in \cref{sec:flows-manifolds}.

\ThIncompressibleRecurrent*

\begin{proof}
By \cref{rem:riemannian-mfld-is-measure-space}, Poincaré's theorem applies to a Riemannian manifold $(\mfld, \g)$ given that \begin{enumerate*}[(\itshape a\upshape)] \item there is a volume preserving map $\phi\from\mfld\to\mfld$, and \item the manifold has finite $\g$-volume
\end{enumerate*}. By \cref{prop:preserve}, the flow of the replicator vector field $\effrepfield$ on incompressible (\ie harmonic) games is volume-preserving with respect to the effective \shah metric on the open corner of cube.\footnote{More precisely, the \textit{orbit map} $\flowt \from \intcorcube \to \intcorcube$ of the replicator vector field is volume-preserving under the \shah metric; \cf \cref{sec:flows-manifolds}.}  Hence, if we show that the \shah volume of the open corner of cube is finite, namely that $\vol_{\text{Sha}}{\intcorcube} < \infty$, by Poincaré's theorem we can conclude that the solution trajectories of \acl{RD} on incompressible games return arbitrarily close to almost every starting point $\effstrat \in \intcorcube$.

To show that the volume of the $\nEffs_{\play}$-dimensional open corner of cube $\intcorcube_{\nEffs_{\play}}$ with respect to the effective \shah metric fulfills $\vol_{\text{Sha}}{\intcorcube_{\nEffs_{\play}}} < \infty$ for all $\play \in \players$ and all natural $\nEffs_{\play}$ we resort to a transformation first introduced by \citet[p. 39]{akinGeometryPopulationGenetics1979} and discussed \eg by \citet[p. 228]{San10} and \citet[Example 3.1]{LM15}. Suppressing for a moment the player index,
 \citet{akinGeometryPopulationGenetics1979} shows that the map $A \from \R^{\nEffs+1}_{>0} \to \R^{\nEffs+1}_{>0}$, $A_\pure(\strat) = 2 \sqrt{\strat_{\pure}}$ for $\pure \in \setof{0, \dots, \nEffs}$, is an isometry between the $\nEffs$-dimensional open simplex $\intstrats$ endowed with the Shahshahani metric and the portion of the radius-$2$ $\nEffs$-hypersphere in the positive hyperoctant of $\R^{\nEffs+1}$ endowed with the Euclidean metric. Isometries between Riemannian manifolds preserve volumes,\footnote{To be precise, \textit{orientation-preserving} isometries between Riemannian manifolds preserve volumes; see \citet{leeIntroductionRiemannianManifolds2018}.} so (reinstating the player index $\play \in \players$) the Shahshahani volume of the $\nEffs_{\play}$-dimensional open simplex is the Euclidean $\nEffs_{\play}$-volume of the portion of the $\nEffs_{\play}$-hypersphere of radius 2 in the positive orthant of $\R^{\nEffs_{\play}+1}$, that is
\begin{equation}
\label{eq:sha-volume-simplex}
\vol_{\text{Sha}}\intstrats_{\play} =\frac{\pi^{\frac{\nEffs_{\play}+1}{2}}}{\Gamma(\frac{\nEffs_{\play}+1}{2})} < \infty \text{ for all } m_{\play} \in \N \text{ and } \play \in \players \eqcomma
\end{equation}
where $\Gamma$ is the gamma function.
By construction, the map $\incl\from\intcorcube_{\play}\to\intstrats_{\play}$ is an isometry between the open corner of cube with the reduced \shah metric and the open simplex with the full \shah metric, so \cref{eq:sha-volume-simplex} gives also the sought-after volume of the open corner of cube, finite as required. Finally, since the volume of a product manifold is the product of the factor manifolds, we have that $\vol_{\text{Sha}} (\intcorcube) = \prod_{\play} \vol_{\text{Sha}}( \intcorcube_{\play}) < \infty$.
\end{proof}
\begin{remark} An alternative formula to compute the volume of the open simplex under the \shah metric is the following. Suppress the player index $\play \in \players$ for notational simplicity, and let as usual $\eff, \effalt \in \setof{1, \dots, \nEffs}$. The determinant of the \shah metric $\effgmat$ in its effective representation is given by \cref{eq:det-sha-eff}, so by \cref{eq:riemannian-volume} the volume of the $\nEffs$-dimensional open corner of cube $\intcorcube_{\nEffs}$ with respect to the effective \shah metric is
\begin{equation}
\begin{split}
\vol_{\text{Sha}}{\intcorcube_{\nEffs}} &= \int_{\intcorcube} \sqrt{\det{\effg}} \,  \d{\effstrat}^1 \dots \d{\effstrat}^{\nEffs}  \\
&  = \int_{\effstrat_1>0, \,  \dots, \,  \effstrat_{\nEffs}>0,\, \sum_{\eff}\effstrat_1<1}\sqrt{\det{\effg}} \, \d{\effstrat}^1 \dots \d{\effstrat}^{\nEffs} \\
& = \int_{\effstrat_1:0}^1\int_{\effstrat_2:0}^{1-\effstrat_1}\cdots\int_{\effstrat_{\nEffs}:0}^{1-\effstrat_1-\cdots-\effstrat_{\nEffs-1}}\frac{1}{\sqrt{\left(1-\sum_{\eff}\effstrat_{\eff}\right)\prod_\effalt\effstrat_\effalt}} \d{\effstrat}^{\nEffs} \cdots \d{\effstrat}^{1} \\
& = \int_{\effstrat_1=0}^{\xi_1}\frac{\d{\effstrat}^1}{\sqrt{\effstrat_1}} \int_{\effstrat_2=0}^{\xi_2}\frac{\d{\effstrat}^2}{\sqrt{\effstrat_2}} \cdots \int_{\effstrat_{\nEffs-1}=0}^{\xi_{\nEffs-1}}\frac{\d{\effstrat}^{\nEffs-1}}{\sqrt{\effstrat_{\nEffs-1}}}\int_{\effstrat_{\nEffs}=0}^{\xi_{\nEffs}}\frac{\d{\effstrat}^{\nEffs}}{\sqrt{\effstrat_{\nEffs}(\xi_{\nEffs}-\effstrat_{\nEffs})}} \eqcomma
\end{split}
\end{equation}
with $\xi_{\nEffs} = 1-\effstrat_1-\cdots-\effstrat_{\nEffs-1}$. One can check for a few values of $\nEffs$ that the integrals are in agreement with \cref{eq:sha-volume-simplex}, for example
$
\vol_{\text{Sha}}\intcorcube_{m = 1} = \pi,
\vol_{\text{Sha}}\intcorcube_{m = 2} = 2\pi,
\vol_{\text{Sha}}\intcorcube_{m = 3} = \pi^2
$. \endenv
\end{remark}

The proof of the fact \acl{RD} exhibit Poincaré recurrence in harmonic games relies on the facts that \begin{enumerate*}[(\itshape a\upshape)] \item the flow of \ac{RD} is {\shah}-incompressible in the space of mixed strategies \textit{precisely} in such games, and \item the \shah volume of the open simplex is finite.\end{enumerate*} Both of these phenomena are quite peculiar, as we discuss in the next section.

\subsection{On the special nature of the \shah geometry}
We conclude this appendix with a discussion of the fact that the \shah metric is intrinsically related to the \acl{RD} (cf. \cref{prop:rep-Shah}), but not at all to the simplicial complex of the game's response graph endowed with the Euclidean metric. In particular, incompressible games are defined \textit{completely independently} of the harmonic games of \citet{CMOP11}, and it is only through the lengthy calculations of this appendix (\cref{th:shah-divergence-repfield-result}, which ultimately leads to \cref{thm:harmonic}) that the two structures are shown to be, in fact, compatible.  

It is this difference in the origin of harmonic and incompressible games \textendash\ Euclidean-combinatorial on one side, dynamical/geometric in a \shah framework on the other \textendash\ which makes the equivalence of harmonic and incompressible games, in our opinion, unexpected. This idea is supported by the following example:
\begin{example}
Since harmonic games are defined relative to the Euclidean metric on the simplicial complex of the game's response graph, it would make sense to consider the \textit{Euclidean} projection dynamics on the simplex (less widely used than the replicator / exponential weights dynamics, but still a valid choice of game dynamics). Following
\citet{Fri91, LS08, mertikopoulosRiemannianGameDynamics2018},
in the interior of the simplex these dynamics take the simple form  
\begin{equation}
\dot x_{i \alpha_i} = v_{i\alpha_i}(x) - \frac{1}{|\mathcal{A}_i|} \sum_{\beta_i \in \mathcal{A}_i} v_{i\beta_i}(x) \quad \text{for all } \strat \in \intstrats, \,  \play \in \players, \pureplay \in \pures_{\play} \eqdot
\end{equation}
The RHS of these dynamics is simply the (Euclidean) projection of $v_i(x)$ onto $\intstrats_{\play}$. In this regard, the definition of the divergence boils down to the standard form from calculus, \cf \cref{eq:euclidean-divergence}. However, since by \cref{lemma:payfield-independent-strat} $v_i(x)$ does not depend on $x_i$, it follows that the Euclidean projection dynamics are incompressible under the Euclidean metric in the space of mixed strategies \textit{for all games}, not only for harmonic games.

Put differently, all games are incompressible under the Euclidean metric, so \textit{the equivalence between harmonic and incompressible games does not hold for the Euclidean metric on the simplex}: pictorically,
\begin{align*}
( \g  = \text{\shah metric} ) & \implies (\g\text{-incompressible game} \iff \text{harmonic game} ) 	\eqcomma \\
( \g  = \text{Euclidean metric} ) & \implies ( \g\text{-incompressible game} \iff \text{any game} ) \eqdot
\end{align*}
\end{example}

\section{Additional related work}
\label{app:additional}

\subsection{Comparison with the work by \citet{letcherDifferentiableGameMechanics2019}}
\label{app:letcher}

Our work addresses an open issue raised by \citet{letcherDifferentiableGameMechanics2019}, who state that
\textit{"Candogan et al. derive a Hodge decomposition for games that is closely related in spirit to our generalized Helmholtz decomposition \textendash\ although the details are quite different. Their losses are multilinear, which is easier than our setting, but they have constrained solution sets, which is harder in many ways. Their approach is based on combinatorial Hodge theory \citep{jiangStatisticalRankingCombinatorial2011} rather than differential and symplectic geometry. Finding a best-of-both-worlds approach that encompasses both settings is an open problem."}

The machinery we developed does touch on both worlds above, as it connects the differential-geometric Hodge/Helmholtz decomposition to a constrained setting. However, there is a key difference between the spirit of our approach and that of \citep{letcherDifferentiableGameMechanics2019}: \citet{letcherDifferentiableGameMechanics2019} do not propose a decomposition of games. The authors identify two classes of games with well-understood dynamical properties based on the symmetric and skew-symmetric part of the game's Jacobian matrix, and use the symmetric and skew-symmetric components of this matrix to introduce \textit{Symplectic Gradient Adjustment} (SGA), an algorithm for finding stable fixed points in differentiable game.

In more detail, \citet{letcherDifferentiableGameMechanics2019} consider the individual gradients $\xi$ of a game with differentiable losses, and decompose \textit{the Jacobian matrix} $J$ of $\xi$ into its symmetric and skew-symmetric part, $J = S+A$. They then call a game \textit{potential} if $A=0$, and \textit{Hamiltonian} if $S=0$. Hamiltonian games exhibit non-convergent behavior under standard gradient descent methods which is similar to that of incompressible games under \ac{EW}/\ac{RD}. However, given a game with individual gradient field $\xi$, it is not possible in general to find a potential game $\xi_P$ and a Hamiltonian game $\xi_H$ such that $\xi = \xi_P + \xi_H$, the problem with this approach being that the Jacobian of a vector field is a coordinate-dependent object that does not have an intrinsic geometrical meaning, and its skew-symmetric component is in general not integrable. By contrast, \cref{thm:Hodge} provides precisely such a decomposition for normal form games into a potential and incompressible/harmonic component, resolving the dynamic-strategic disconnect that arises when trying to naively apply the standard Helmholtz decomposition to the field of individual gradients of a game.

\subsection{Combination of potential and harmonic games and dynamics}
\label{app:harmonic-potential-trajectories}
In this section we include a sequence of \acl{RD} trajectories on a convex combination of a harmonic and a potential game, showing how Poincaré recurrence breaks down as the relative magnitude of the potential component increases. More precisely, given the payoff $\pay_{p}$ of a potential game and the payoff $\pay_{h}$ of a harmonic game, we run \ac{RD} on the game with payoff $\pay \defeq \potparam  \pay_{p} + (1-\potparam) \pay_{h}$, with $\potparam \in [0,1]$. \cref{fig:pot-harm-trajectories} shows the resulting trajectories for a $2 \times 2 \times 2$ with $\pay_{p}$ and $\pay_{h}$ given respectively by the following tables:
\begin{align*}
\left[
\begin{aligned}
 \pay_{1}[0, 0, 0] & = -14\\
 \pay_{1}[1, 0, 0] & = -8\\
 \pay_{1}[0, 1, 0] & = -18\\
 \pay_{1}[1, 1, 0] & = -7\\
 \pay_{1}[0, 0, 1] & = 13\\
 \pay_{1}[1, 0, 1] & = 8\\
 \pay_{1}[0, 1, 1] & = -8\\
 \pay_{1}[1, 1, 1] & = 1\\
 \end{aligned}
 \quad
 \begin{aligned}
 \pay_{2}[0, 0, 0] & = -16\\
 \pay_{2}[1, 0, 0] & = -16\\
 \pay_{2}[0, 1, 0] & = 2\\
 \pay_{2}[1, 1, 0] & = 7\\
 \pay_{2}[0, 0, 1] & = 6\\
 \pay_{2}[1, 0, 1] & = 0\\
 \pay_{2}[0, 1, 1] & = -1\\
 \pay_{2}[1, 1, 1] & = 7\\
 \end{aligned}
 \quad
 \begin{aligned}
 \pay_{3}[0, 0, 0] & = -7\\
 \pay_{3}[1, 0, 0] & = 0\\
 \pay_{3}[0, 1, 0] & = 2\\
 \pay_{3}[1, 1, 0] & = 8\\
 \pay_{3}[0, 0, 1] & = 8\\
 \pay_{3}[1, 0, 1] & = 4\\
 \pay_{3}[0, 1, 1] & = -8\\
 \pay_{3}[1, 1, 1] & = -4\\
\end{aligned}
\right]
\\[\medskipamount]
\left[
\begin{aligned}
 \pay_{1}[0, 0, 0] & = 7\\
 \pay_{1}[1, 0, 0] & = 2\\
 \pay_{1}[0, 1, 0] & = 1\\
 \pay_{1}[1, 1, 0] & = 7\\
 \pay_{1}[0, 0, 1] & = -29\\
 \pay_{1}[1, 0, 1] & = -6\\
 \pay_{1}[0, 1, 1] & = 24\\
 \pay_{1}[1, 1, 1] & = 0\\
  \end{aligned}
 \quad
 \begin{aligned}
 \pay_{2}[0, 0, 0] & = -15\\
 \pay_{2}[1, 0, 0] & = -3\\
 \pay_{2}[0, 1, 0] & = -10\\
 \pay_{2}[1, 1, 0] & = 2\\
 \pay_{2}[0, 0, 1] & = 23\\
 \pay_{2}[1, 0, 1] & = -9\\
 \pay_{2}[0, 1, 1] & = 0\\
 \pay_{2}[1, 1, 1] & = 4\\
  \end{aligned}
 \quad
 \begin{aligned}
 \pay_{3}[0, 0, 0] & = -8\\
 \pay_{3}[1, 0, 0] & = 4\\
 \pay_{3}[0, 1, 0] & = 1\\
 \pay_{3}[1, 1, 0] & = -6\\
 \pay_{3}[0, 0, 1] & = -8\\
 \pay_{3}[1, 0, 1] & = -6\\
 \pay_{3}[0, 1, 1] & = 0\\
 \pay_{3}[1, 1, 1] & = 5\\
\end{aligned}
\right]
\end{align*}
As expected, \ac{RD} is recurrent (in particular, periodic) in the harmonic case $\potparam = 0$, and converges to a pure \ac{NE} in the potential case $\potparam = 1$. We leave the rationality properties of no-regret learning schemes in convex combinations of potential and harmonic games as an open direction for future research.
\begin{figure*}[t]
\centering
\foreach \i in {1,...,8}{
\includegraphics[width=.238\textwidth]{Figures/mix/Mixture-\i}
}
\caption{Replicator trajectories in an ensemble of $2 \times 2 \times 2$ games with payoff $\pay \defeq \potparam  \pay_{p} + (1-\potparam) \pay_{h}$ given by the convex combination of a harmonic and a potential game. The value of the parameter $\potparam$ is shown in the legend of each plot, and $\pay_{p}, \pay_{h}$ are given in \cref{app:harmonic-potential-trajectories}. Each trajectory is color-coded so that deeper shades of blue-purple correspond to later times, with the arrows indicating the direction in which orbits are traversed. Light blue markers represent initial points for the orbits; dark blue markers are stationary points for the \acl{RD}; and dark red points are \aclp{NE}. For visual clarity, we have highlighted in orange one of the plotted orbits. As expected, \ac{RD} is recurrent (in particular, periodic) in the harmonic case $\potparam = 0$, and converges to a pure \ac{NE} in the potential case $\potparam = 1$.}
\label{fig:pot-harm-trajectories}
\end{figure*}

\section{A geometric tour}
\label{app:geometric-tour}
A reader familiar with differential geometry won't have failed to realize that we have been trying to explain some fundamental geometrical concepts in an intuitive, yet not completely rigorous, way. This is an unavoidable price to pay if we wish to present results to an audience not necessarily familiar with the geometrical theory they rely upon.

For that reader, here is a quick tour about what is going on, that can be safely skipped by anyone not interested in the geometrical intricacies underlying the constructions we presented. A notation aside: as usual, for each $\play \in \players$, the index $\pure_{\play} \in \pures_{\play}$ runs from $0_{\play}$ to $\dimStrats_{\play}$ and the index $\eff_{\play} \in \effPures_{\play}$ runs from $1_{\play}$ to $\dimStrats_{\play}$. It is understood that whenever the index $\play$ appears in an equation such equation holds true for all $\play \in \players$, unless otherwise specified.

\begin{itemize}

\item Each open strategy space $\intstrats_{\play} = \intsimplex\of{\pures_{\play}}$ is a smooth submanifold of $\R^{\pures_{\play}}$ of dimension $\nEffs_{\play} = \nPures_{\play} - 1$;
\item $\chart\from\intstrats_{\play}\to\intcorcube_{\play}$ is a global chart and $\incl\from\intcorcube_{\play}\to\intstrats_{\play}$ the corresponding parametrization;
\item $\intstrats = \prod_{\play}\intstrats_{\play}$ has the standard smooth structure of product submanifold in $\R^{\pures} = \prod_{\play}\R^{\pures_{\play}}$;
\item the differential of the parametrization $\d\incl$ is an isomorphism between $\tspace\intcorcube_{\play} = \R^{\nEffs_{\play}}$ and $\tspace\intstrats_{\play} \subset \R^{\nEffs_{\play}+1}$ allowing to express the $\nEffs_{\play}$ basis vectors $\braces{\tilde{\pd}_{\effindex}}_{\effplay \in \effrangeplay}$ of $\tspace\intstrats_{\play}$ in the chart $\chart$ as a linear combination of the $\nEffs_{\play}+1$ basis vectors $\braces{\pd_{\findex}}_{\pureplay \in \rangeplay}$ of $\R^{\nEffs_{\play}+1}$ in the standard Cartesian frame as
\[
\tilde{\pd}_{\effindex} = \pd_{\effindex} - \pd_{\zindex} \quad \text{for all } \effplay \in \effrangeplay \eqdot
\]
\item Analogously, the pull-back of the $\nEffs_{\play} + 1$ basis $1$-forms $\setof{\d\strat^{\findex}}_{\pureplay \in \rangeplay}$ in $\orthantplay$ along $\incl$ in terms of the $\nEffs_{\play}$ basis $1$-forms $\setof{\d\effstrat^{\effindex}}_{\effplay \in \effrangeplay}$ on $\intstrats_{\play}$ gives
\[
\d\strat^{\zindex} =
- \sum_{\effplay} \d\effstrat^{\rindex}
\quad\text{and}\quad
\d\strat^{\rindex} = \d\effstrat^{\rindex}
\quad \text{for all } \effplay \in \effrangeplay \eqdot
\]
\item Payoff functions $\pay_{\play}\from\strats\to\R$ are actually multilinear functions on the whole $\pay_{\play}\from\R^{\pures}\to\R$;
\item each \quotes{effective payoff function} $\effpay_{\play}$ is the pull-back along the inclusion of $\pay_{\play}$, \ie a smooth function on $\intstrats_{\play}$;
\item each payoff field $\payfield_{\play} = \d_{  \play}\pay_{\play}$ is a $1$-form on $\R^{\pures_{\play}}$ and $\payfield = \sum_{\play} \payfield_{\play}$ is a $1$-form on $\R^{\pures}$;
\item the \quotes{effective payoff field} $\effpayfield_{\play}$ is the pull-back along the inclusion of $\payfield_{\play}$, \ie a $1$-form on $\intstrats_{\play}$;
\item the ambient \shah metric on $\R^{\pures}$,
\[
\gmat(\strat)
= \sum_{i \in \players}\sum_{\pureplay\purealtplay} \frac{\delta_{\pureplay \purealtplay}  }{\strat_{\findex}} \, \d\strat^{\findex} \otimes \d\strat^{\play\purealt_{\play}} \eqcomma
\]
is pulled back to the effective \shah metric on $\intstrats$,
\[\tilde\gmat(\effstrat)
= \sum_{i \in \players}\sum_{\effplay\effalt_\play} \parens*{ \frac{\delta_{\effplay \effalt_{\play}}  }{\effstrat_{\effindex}} + \frac{1}{\strat_{\zindex}}}  \, \d\effstrat^{\effindex} \otimes \d\effstrat^{\play\effalt_{\play}} \eqdot
\]
\end{itemize}

It is a standard fact in differential geometry~\citep{leeIntroductionSmoothManifolds2012} that if $\iota: (S,g) \hookrightarrow (M,G)$ is a Riemannian submanifold\footnote{That is, $\iota\from S \to M$ is a smooth injective immersion and $(M,G)$ is a Riemannian manifold.} with metric $g$ induced by the ambient metric $G$, then the sharp isomorphism $\sharp$, the pull-back along the inclusion $\iota^{\ast}$, and the orthogonal projection $P_{G}$ commute, in the sense that $P_{G}\parens{\alpha^{\sharp_{G}}} = \parens{\iota^{\ast}\alpha}^{\sharp_{g}}$ is a vector field on $S$ for all $1$-forms $\alpha$ on $M$. With this in place, each replicator field $\repfield_{\play}(\strat)$ is equivalently
\begin{itemize}
\item the orthogonal projection on $\intstrats_{\play}$ with respect to the ambient \shah metric of the ambient sharp of the $1$-form $\payfield_{\play}$ (this point of view is adopted in \citet{mertikopoulosRiemannianGameDynamics2018});
\item the sharp with respect to the pull-back metric of the pull-back $1$-form $\effpayfield_{\play}$;
\item the individual gradient with respect to the pull-back \shah metric of the pull-back payoff function $\effpay_{\play}$.
\end{itemize}
In our notation $\repfield_{\play}$ is a vector field \textit{parallel} to $\intstrats$, \ie a vector field legitimately defined on the whole ambient space $\R^{\pures}$ that evaluated at $\strat \in \intstrats$ gives a vector in the tangent space $\tspace_{\strat}\intstrats$ as a linear subspace of $\tspace_{\strat}\R^{\pures}$, so $\repfield_{\play}$ has $\nEffs_{\play}+1$ components. On the other hand $\effrepfield_{\play}$ is a vector field \textit{on} $\intstrats$ with $\nEffs_{\play}$ components, well-defined without the need to see $\intstrats$ as an immersed submanifold.

The reason to go through the reduction procedure of the replicator field and the \shah metric is that what matters from a dynamical standpoint is the divergence of $\effrepfield$ as a vector field on $\intstrats$, but in general given a vector field $X$ parallel to a Riemannian submanifold $\iota: (S,g) \hookrightarrow (M,G)$ there is \textit{not} a simple relation between its the divergence with respect to the ambient metric and its divergence with respect to the pull-back metric (in particular, $\diver_{g}X \neq \parens{\diver_{G}X}\circ\iota$); for details see \citet[Ch. 6]{carmoRiemannianGeometry1992}. This ultimately makes it necessary to compute the effective versions (resp. pulled back and projected) of the metric and the vector field at hand.


\bibliographystyle{icml}
\bibliography{bibtex/IEEEabrv,bibtex/Bibliography-PM,bibtex/Bibliography-DL}

\end{document}